\documentclass[sigconf,nonacm,prologue,table]{acmart}
\makeatletter
\def\@ACM@checkaffil{%
    \if@ACM@instpresent\else
    \ClassWarningNoLine{\@classname}{No institution present for an affiliation}%
    \fi
    \if@ACM@citypresent\else
    \ClassWarningNoLine{\@classname}{No city present for an affiliation}%
    \fi
    \if@ACM@countrypresent\else
        \ClassWarningNoLine{\@classname}{No country present for an affiliation}%
    \fi
}
\makeatother

\usepackage[british]{babel}
\usepackage[useregional]{datetime2}
\DTMlangsetup[en-GB]{showdayofmonth=false}
\usepackage{natbib}
\usepackage{cancel}
\usepackage{amsmath}
\usepackage{amsfonts}
\usepackage{amsthm}
\usepackage{mathtools}
\usepackage{wrapfig}
\usepackage{graphicx}
\usepackage{tikz}
\hypersetup{colorlinks=true, allcolors=blue}
\usepackage[page]{appendix} %
\usepackage{algorithm}
\usepackage{algorithmic}

\usepackage{subcaption}
\usepackage{multicol}
\usepackage{placeins}

\renewcommand{\vec}[1]{\boldsymbol{\mathbf{#1}}}
\renewcommand{\v}{\vec}

\usepackage{thmtools}
\usepackage{thm-restate}
\declaretheorem[name=Proposition]{proposition}
\declaretheorem[name=Corollary]{corollary}
\declaretheorem[name=Theorem]{theorem}
\declaretheorem[name=Lemma]{lemma}
\declaretheorem[name=Remark,style=remark]{remark}

\begin{document}
\title{Riemannian Geometry of Optimal Rebalancing\\in Dynamic Weight Automated Market Makers}

\date{March 2026}
\author{Matthew Willetts}
\affiliation{}

\begin{abstract}
We show that when a dynamic-weight AMM rebalances by creating arbitrage opportunities, the per-step log loss is the KL divergence between successive weight vectors.
The Fisher–Rao metric is therefore the natural Riemannian metric on the weight simplex.
The loss-minimising interpolation under the leading-order expansion of this KL cost is SLERP (Spherical Linear Interpolation) in the Hellinger coordinates $\eta_i = \sqrt{w_i}$: a geodesic on the positive orthant of the unit sphere, traversed at constant speed.
The SLERP midpoint equals the (AM+GM)/normalise heuristic of prior work~\cite{willetts2024optimalrebalancingdynamicamms}, so the heuristic lies on the geodesic.
This identity holds for any number of tokens and any magnitude of weight change; using this link, all dyadic points on the geodesic can be reached by recursive AM-GM bisection without trigonometric functions.
SLERP's relative sub-optimality on the full KL cost is proportional to the squared magnitude of the overall weight change and to $1/f^2$, where $f$ is the number of interpolation steps.
Under driftless GBM prices, the fractional value loss from each weight update is price-independent, and the cross term between weight and price changes telescopes, so the constant-price geometry carries over.
LVR exposure introduces a finite optimal step count $f^*$, which lies in the perturbative regime where SLERP remains near-optimal.
\end{abstract}

\maketitle

\section{Introduction}

Dynamic AMM pools rebalance their holdings by introducing arbitrage opportunities that disappear once reserves match the desired target.
In Temporal Function Market Making (TFMM)~\cite{tfmm_litepaper}, this rebalancing is driven by changing the weights of a geometric mean market maker (G3M) pool from block to block.

The pool's asset management strategy produces new target weights at a given cadence.\footnote{This can be any long-only zero-leverage strategy, for example: track a market-cap weighted index, or a trend-following strategy.}
Spreading a weight update over multiple intermediate steps reduces the total arbitrage cost~\cite{tfmm_litepaper}, because the cost is convex in the step size, so the pool interpolates towards the target.

Prior work~\cite{willetts2024optimalrebalancingdynamicamms} derived the optimal weight midpoint in the limit of small steps, and from this obtained approximately-optimal trajectories via arithmetic and geometric mean interpolation: the (AM+GM)/\allowbreak normalise heuristic.
This heuristic captures ${\sim}95\%$ of the improvement available from full numerical optimisation, but the geometric reason for its effectiveness was left unexplained.

This paper provides that explanation.
Against a backdrop of constant prices, we show that:
\begin{itemize}
  \item The per-step arbitrage loss is a KL divergence between weight vectors (Theorem~\ref{thm:kl_divergence}), making the Fisher--Rao metric the natural Riemannian structure on the weight
simplex.
  \item The loss-minimising interpolation under the leading-order expansion is SLERP (Spherical Linear Interpolation)~\cite{shoemake1985} in Hellinger coordinates (Corollary~\ref{cor:slerp_optimal}).
  \item The SLERP midpoint equals the (AM+GM)/normalise midpoint exactly, for any number of tokens (Theorem~\ref{thm:slerp_amgm}); this is why the heuristic works.
  \item Exact SLERP trajectories at power-of-two step counts can be computed by trig-free recursive bisection (Corollary~\ref{cor:bisection}).
  \item SLERP's sub-optimality on the exact KL cost vanishes as $O(1/f^3)$ in the step count (Theorem~\ref{thm:suboptimality}).
\end{itemize}
We then use these results to attack the case where prices are stochastic, and show that:
\begin{itemize}
  \item Under driftless GBM prices (martingale assumption), the retention ratio is price-independent.  The cross term between weight and price changes telescopes, so the path optimisation is unchanged (\S\ref{sec:stochastic}).
  \item LVR exposure introduces a finite optimal step count $f^*$, which lies in the perturbative regime where SLERP remains near-optimal (\S\ref{sec:optimal_f}).
\end{itemize}
We validate these results numerically in \S\nolinebreak\ref{sec:experiments}.

\section{Background}
\label{sec:background}

\subsection{Dynamic Weight Automated Market Makers}
\paragraph{Arbitrage in Geometric Mean Market Makers when weights change}

A \emph{block} is the atomic time unit of the blockchain: all state (reserves, weights, prices) is fixed within a block and may change between consecutive blocks.\footnote{On Ethereum, blocks are produced every 12 seconds; on L2 chains such as Base, every 2 seconds.}

Consider an $N$-token G3M pool with weights $\v w = (w_1, \ldots, w_N)$ on the simplex ($\sum_i w_i = 1$, $w_i > 0$) and reserves $\v R\in\mathbb{R}^N_+$.
The pool invariant is $\prod_i R_i^{w_i} = k$~\cite{balancer,evansG3Ms}.
When the pool is in equilibrium, its quoted prices match market prices.

When weights change directly from $\v w^{\mathrm{start}}$ to $\v w^{\mathrm{end}}$ at constant market prices, an arbitrage opportunity is created.
After arbitrage, the reserves satisfy~\cite{willetts2024optimalrebalancingdynamicamms}
\begin{equation}
    R_i^{\mathrm{end}} = R_i^{\mathrm{start}} \frac{w_i^{\mathrm{end}}}{w_i^{\mathrm{start}}} \prod_{j=1}^N \left(\frac{w_j^{\mathrm{start}}}{w_j^{\mathrm{end}}}\right)^{w_j^{\mathrm{end}}}.
    \label{eq:reserve_change}
\end{equation}

\paragraph{Approximately Optimal Weight Interpolations}

Within each block-to-block step, market prices are modelled as constant.
If, instead of updating weights directly from $\v w^{\mathrm{start}}$ to $\v w^{\mathrm{end}}$ in one step (i.e. across two blocks), the pool goes via an intermediate value $\Tilde{\v w}$ (i.e. across three blocks), so that it is arbitraged twice, $\v w^{\mathrm{start}} \rightarrow \Tilde{\v w} \rightarrow \v w^{\mathrm{end}}$, the total cost paid to arbitrageurs is reduced~\cite{tfmm_litepaper}.
$\Tilde{\v w}$ can take a broad range of possible values and still improve the pool's performance.
Dividing up a weight change in this way is beneficial for the same reason that splitting a large order reduces market impact~\cite{almgren2001}: each sub-step faces a smaller price displacement (Figure~\ref{fig:interpolation}).

\begin{figure*}[h]
\centering
    \includegraphics[width=0.8\textwidth]{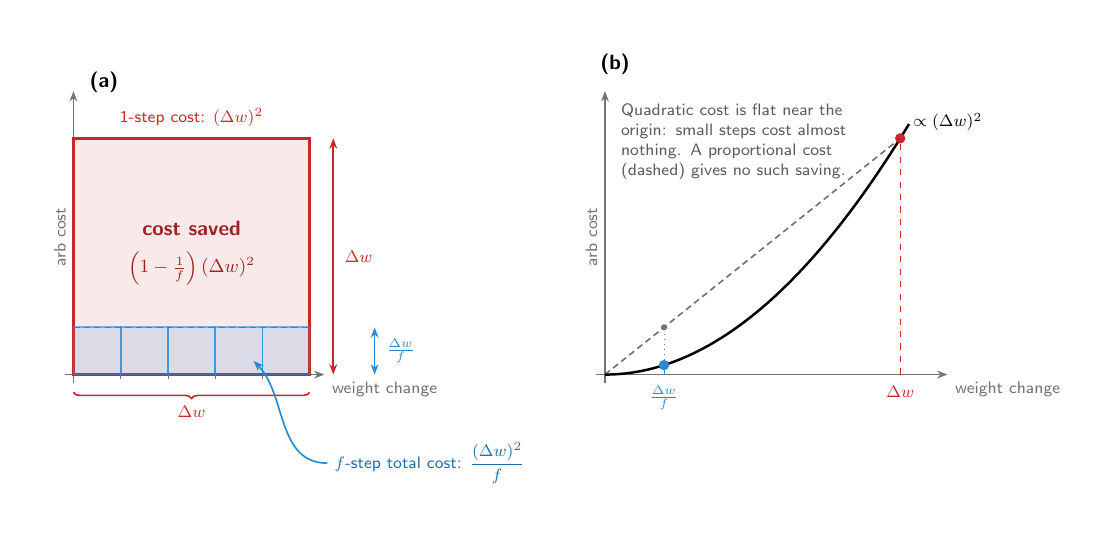}
    \vspace{-1cm}
    \caption{Why weight interpolation reduces rebalancing cost.
    (a)~For a G3M pool, rebalancing cost is quadratic in the weight change $\Delta w$, so splitting into $f$ equal sub-steps of $\Delta w/f$ reduces total cost by a factor of $f$.
    (b)~The quadratic cost curve is flat near the origin: small weight changes cost almost nothing, while a single large change incurs a disproportionately high arb loss.}
    \label{fig:interpolation}
\end{figure*}

Assuming the weight change is small, the optimal intermediate weights for the two-step process are
\begin{equation}
\Tilde{w}_i^* = \frac{w_i(t_f)}{W_0\left(\frac{e w_i(t_f)}{w_i(t_0)}\right)},
\label{eq:best_w_tilde}
\end{equation}
as obtained in~\cite{willetts2024optimalrebalancingdynamicamms}, where $W_0(\cdot)$ is the principal branch of the Lambert W function and $e$ is Euler's number.

Bounding this result from above and below by the arithmetic and geometric means respectively yields an approximately-optimal $f$-step weight trajectory $\{\breve{\v w}(t_k)\}_{k=1,...,f-1}$ from $\v w(t_0)=\v w^{\mathrm{start}}$ to $\v w(t_f)=\v w^{\mathrm{end}}$ as the average of the linear interpolation between these values and the geometric interpolation between these values~\cite{willetts2024optimalrebalancingdynamicamms}:
\begin{align}
    w_i^{\mathrm{AM}}(t_k) & = (1-\frac{k}{f})w_i(t_0) + \frac{k}{f}w_i(t_f) \label{eq:approx_optimal_traj_am},\\
    w_i^{\mathrm{GM}}(t_k) & = {(w_i(t_0))^{(1-\frac{k}{f})}(w_i(t_f))^{\frac{k}{f}}},\label{eq:approx_optimal_traj_gm}\\
    \breve{w}_i(t_k) &= \frac{w_i^{\mathrm{AM}}(t_k)+w_i^{\mathrm{GM}}(t_k)}{\sum_{j=1}^N\left({w_j^{\mathrm{AM}}(t_k)+w_j^{\mathrm{GM}}(t_k)}\right)} \label{eq:approx_optimal_traj}.
\end{align}

\subsection{Information Geometry and the Fisher--Rao metric}

The weight simplex carries a natural geometric structure that, as we show in \S\ref{sec:slerp}, governs the arbitrage cost.

The probability simplex $\Delta^{N-1} = \{\mathbf{p} \in \mathbb{R}^N_+ : \sum_i p_i = 1\}$ carries a natural Riemannian structure induced by the Kullback--Leibler (KL) divergence~\cite{kullback1951}.
For two distributions $p, q \in \Delta^{N-1}$, the KL divergence
\[
D_{\mathrm{KL}}(p \,\|\, q) = \sum_i p_i \log(p_i / q_i)
\]
is asymmetric and does not satisfy the triangle inequality, so it
is not itself a metric.
Its local second-order expansion, however, around $q = p$ yields a positive-definite quadratic form,
\begin{equation}
  D_{\mathrm{KL}}(p \,\|\, p + \mathrm{d}p) \;=\; \tfrac{1}{2} \sum_{i=1}^{N} \frac{(\mathrm{d}p_i)^2}{p_i} \;+\; O(\|\mathrm{d}p\|^3).
  \label{eq:kl_expansion}
\end{equation}
This expansion defines the \emph{Fisher--Rao metric} $g$~\cite{fisher1925,rao1945}. On the ambient space $\mathbb{R}^N_{+}$, the metric is diagonal, $g_{ij}(p)=\delta_{ij}/p_i$, with the corresponding squared infinitesimal line element,
\begin{equation}
    \mathrm{d}s^2=\sum_{i=1}^N \frac{(\mathrm{d}p_i)^2}{p_i}.
    \label{eq:fisher}
\end{equation}
The geometry of the simplex $(\Delta^{N-1}, g)$ is that of the positive orthant of an ($N-1$)-sphere.
By the theorem of \citet{cencov}, this is the unique Riemannian metric (up to a scale factor) invariant under sufficient statistics, establishing its role as the natural geometric structure.

\section{Weights change with constant prices}
\subsection{Arbitrage loss is a KL divergence}

\begin{restatable}{proposition}{retention}[Retention Ratio]
\label{prop:retention}
For an $N$-asset G3M pool with weights changing from $\v w^{\mathrm{start}}$ one block to $\v w^{\mathrm{end}}$ the next block, with zero fees, perfect arbitrage, and constant prices, the ratio of final value to initial value of the pool is the retention ratio
\begin{equation}
r:=\prod_{j=1}^N \left(\frac{w_j^{\mathrm{start}}}{w_j^{\mathrm{end}}}\right)^{w_j^{\mathrm{end}}}.
\label{eq:retention}
\end{equation}
\end{restatable}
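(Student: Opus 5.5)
We take the post-arbitrage reserve formula \eqref{eq:reserve_change} as given and value the pool at the constant market prices $\v m = (m_1,\ldots,m_N)$. The plan is to write the initial and final values explicitly and read off their ratio. The key input is the equilibrium condition of a G3M pool. When its quoted prices match market prices, the value held in each token is a fixed fraction of the total: $m_i R_i = w_i V$, where $V=\sum_j m_j R_j$.

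First we verify this. The marginal price of token $i$ in terms of token $j$ implied by $\prod_i R_i^{w_i}=k$ is $(w_i/R_i)/(w_j/R_j)$. Setting this equal to $m_i/m_j$ gives $m_i R_i \propto w_i$, and summing over $i$ fixes the constant as $V$.

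Next we apply this at both blocks.
\begin{itemize}
\item At the start, the pool is in equilibrium with weights $\v w^{\mathrm{start}}$, so $m_i R_i^{\mathrm{start}} = w_i^{\mathrm{start}} V^{\mathrm{start}}$.
\item Substituting \eqref{eq:reserve_change}, the factor $w_i^{\mathrm{start}}$ cancels:
\[
m_i R_i^{\mathrm{end}} = m_i R_i^{\mathrm{start}}\,\frac{w_i^{\mathrm{end}}}{w_i^{\mathrm{start}}}\, r = w_i^{\mathrm{end}}\, V^{\mathrm{start}}\, r .
\]
Here $r$ is the product in \eqref{eq:retention}, which does not depend on $i$.
\item Summing over $i$ and using $\sum_i w_i^{\mathrm{end}}=1$ gives $V^{\mathrm{end}} = r\,V^{\mathrm{start}}$.
\end{itemize}
As a consistency check, the final holdings satisfy $m_i R_i^{\mathrm{end}} \propto w_i^{\mathrm{end}}$, so the pool is indeed in equilibrium at the new weights.

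The main obstacle is relying on \eqref{eq:reserve_change}, which is quoted rather than derived. To make the argument self-contained, we would rederive it. Perfect zero-fee arbitrage takes the pool from $\v R^{\mathrm{start}}$ to the unique reserves that satisfy two conditions:
\begin{itemize}
\item the new invariant is preserved, $\prod_i R_i^{w_i^{\mathrm{end}}} = \prod_i (R_i^{\mathrm{start}})^{w_i^{\mathrm{end}}}$, because arbitrage trades along the new invariant curve;
\item the pool is in equilibrium at the new weights, $m_i R_i^{\mathrm{end}} = w_i^{\mathrm{end}} V^{\mathrm{end}}$.
\end{itemize}
Write $R_i^{\mathrm{end}} = w_i^{\mathrm{end}} V^{\mathrm{end}}/m_i$ and $R_i^{\mathrm{start}} = w_i^{\mathrm{start}} V^{\mathrm{start}}/m_i$. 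Substituting into the invariant condition, the prices $m_i$ cancel because the weights sum to one. This leaves
\[
\prod_i \bigl(w_i^{\mathrm{end}}\bigr)^{w_i^{\mathrm{end}}} V^{\mathrm{end}} = \prod_i \bigl(w_i^{\mathrm{start}}\bigr)^{w_i^{\mathrm{end}}} V^{\mathrm{start}},
\]
which is exactly $V^{\mathrm{end}}/V^{\mathrm{start}} = r$. This route also shows that $r$ is independent of prices and reserves, and that $-\log r = D_{\mathrm{KL}}(\v w^{\mathrm{end}}\,\|\,\v w^{\mathrm{start}})$. That identity sets up Theorem~\ref{thm:kl_divergence}.
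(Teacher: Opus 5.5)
Your proposal is correct and follows essentially the same route as the paper: combine the quoted reserve formula \eqref{eq:reserve_change} with the G3M equilibrium condition $m_i R_i = w_i V$. The paper imposes that condition at both blocks and divides componentwise, whereas you impose it only at the start and sum over $i$, which is marginally more economical. Your optional self-contained rederivation via preservation of the new-weight invariant is the same argument the paper later uses to obtain the value factorisation \eqref{eq:value_factorisation} in \S\ref{sec:stochastic}. One small correction there: the prices cancel simply because they carry the same exponents $w_i^{\mathrm{end}}$ on both sides, while $\sum_i w_i^{\mathrm{end}}=1$ is what lets $V$ come out to the first power.
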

\begin{proof}
    See Appendix~\ref{app:retention}.
\end{proof}

\begin{theorem}[Arbitrage loss as KL divergence]
\label{thm:kl_divergence}
The log-loss from a weight update is the KL divergence between new and old weights:
\begin{equation}
    -\log r = \sum_{i=1}^N w_i^{\mathrm{end}} \log\!\frac{w_i^{\mathrm{end}}}{w_i^{\mathrm{start}}} = D_{\mathrm{KL}}\!\left(\v w^{\mathrm{end}} \,\|\, \v w^{\mathrm{start}}\right).
    \label{eq:kl_loss}
\end{equation}
In particular, $r \leq 1$ with equality if and only if $\v w^{\mathrm{end}} = \v w^{\mathrm{start}}$, since $D_{\mathrm{KL}} \geq 0$.
\end{theorem}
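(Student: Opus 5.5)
The plan is to take Proposition~\ref{prop:retention} as given and compute $-\log r$ directly from \eqref{eq:retention}. Because $r$ is a finite product of positive factors, its logarithm splits into a sum:
\[
-\log r \;=\; -\sum_{j=1}^N w_j^{\mathrm{end}}\log\frac{w_j^{\mathrm{start}}}{w_j^{\mathrm{end}}} \;=\; \sum_{j=1}^N w_j^{\mathrm{end}}\log\frac{w_j^{\mathrm{end}}}{w_j^{\mathrm{start}}}.
\]
By the definition of $D_{\mathrm{KL}}$ recalled in the background section, with $p=\v w^{\mathrm{end}}$ and $q=\v w^{\mathrm{start}}$, this is exactly $D_{\mathrm{KL}}(\v w^{\mathrm{end}}\,\|\,\v w^{\mathrm{start}})$. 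The computation is well defined because all weights are strictly positive and both vectors lie on the simplex, so every logarithm is finite.

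For the inequality, the step that needs some care is showing that $D_{\mathrm{KL}}\ge 0$, with equality only when the two vectors coincide. I would prove this from Jensen's inequality applied to the strictly concave function $\log$, using the probability weights $w_i^{\mathrm{end}}$:
\[
-D_{\mathrm{KL}} \;=\; \sum_i w_i^{\mathrm{end}}\log\frac{w_i^{\mathrm{start}}}{w_i^{\mathrm{end}}} \;\le\; \log\sum_i w_i^{\mathrm{end}}\frac{w_i^{\mathrm{start}}}{w_i^{\mathrm{end}}} \;=\; \log\sum_i w_i^{\mathrm{start}} \;=\; \log 1 \;=\; 0.
\]
Alternatively, the pointwise bound $\log x\le x-1$ gives the same conclusion.

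Equality in Jensen's inequality under strict concavity requires the ratio $w_i^{\mathrm{start}}/w_i^{\mathrm{end}}$ to take the same value for every $i$ with $w_i^{\mathrm{end}}>0$, which here is every $i$. Call this common value $c$. Summing $w_i^{\mathrm{start}}=c\,w_i^{\mathrm{end}}$ over $i$ and using the normalisation of both vectors gives $c=1$, so $\v w^{\mathrm{start}}=\v w^{\mathrm{end}}$. The converse is immediate, since each logarithm vanishes when the vectors are equal.

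Finally, $-\log r\ge 0$ is equivalent to $r\le 1$ because $\exp$ is monotone, and $r=1$ holds exactly when $D_{\mathrm{KL}}=0$, which by the previous step happens if and only if $\v w^{\mathrm{end}}=\v w^{\mathrm{start}}$. The argument contains no real obstacle. All the substantive content is in Proposition~\ref{prop:retention}; the only point needing a moment's care is the equality case of Gibbs' inequality, which depends on both weight vectors being normalised.
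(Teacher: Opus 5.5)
Your proposal is correct and follows the paper's proof: take $-\log$ of the retention ratio, recognise $D_{\mathrm{KL}}(\mathbf{w}^{\mathrm{end}}\,\|\,\mathbf{w}^{\mathrm{start}})$, and conclude $r \le 1$ from Gibbs' inequality. The only difference is that you prove Gibbs' inequality and its equality case via Jensen rather than citing it. That equality case is what the ``if and only if'' clause needs, and it relies on both weight vectors being normalised.
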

\begin{proof}
Take $-\log$ of Eq~\eqref{eq:retention} and recognise the definition of $D_{\mathrm{KL}}$.
Note the asymmetry: $D_{\mathrm{KL}}(\v w^{\mathrm{end}} \| \v w^{\mathrm{start}})$ uses the new weights as the reference distribution, because the exponents $w_j^{\mathrm{end}}$ in the retention ratio determine the post-trade value allocation.
Non-negativity follows from Gibbs' inequality~\cite{cover_thomas2006}.
\end{proof}

Weight vectors are positive and sum to one, so they live on the probability simplex; Theorem~\ref{thm:kl_divergence} thus holds for weight changes of any magnitude.
The Fisher--Rao metric is the Hessian of the KL divergence at the diagonal~\cite{amari_info_geo}; its quadratic form gives the loss kernel (Eq~\eqref{eq:tfmm_loss_kernel} below).

\begin{restatable}{corollary}{quadraticloss}[Quadratic loss kernel]
\label{cor:quadratic_loss}
For small weight changes $\Delta w_i = w_i^{\mathrm{end}} - w_i^{\mathrm{start}}$, the leading-order loss is
\begin{equation}
  -\log r \;\approx\; \sum_{i=1}^N \frac{(\Delta w_i)^2}{2\, w_i^{\mathrm{start}}}.
  \label{eq:tfmm_loss_kernel}
\end{equation}
\end{restatable}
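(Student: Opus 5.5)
Starting from Theorem~\ref{thm:kl_divergence}, we have $-\log r = D_{\mathrm{KL}}(\v w^{\mathrm{end}}\,\|\,\v w^{\mathrm{start}}) = \sum_i w_i^{\mathrm{end}}\log(w_i^{\mathrm{end}}/w_i^{\mathrm{start}})$. Note that Eq~\eqref{eq:kl_expansion} expands around the \emph{first} argument, with the second argument perturbed. Here the roles are reversed: the reference point is $\v w^{\mathrm{start}}$ in the second slot, and the first slot is perturbed. So rather than quote Eq~\eqref{eq:kl_expansion} directly, I will redo the Taylor expansion with this orientation. Write $w_i^{\mathrm{end}} = w_i^{\mathrm{start}} + \Delta w_i$ and set $x_i = \Delta w_i / w_i^{\mathrm{start}}$. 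Each summand becomes
\[
w_i^{\mathrm{start}}(1+x_i)\log(1+x_i).
\]

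Next, expand $(1+x)\log(1+x) = x + \tfrac12 x^2 + O(x^3)$. Substituting gives
\[
-\log r = \sum_i \Delta w_i + \sum_i \frac{(\Delta w_i)^2}{2 w_i^{\mathrm{start}}} + O\!\Big(\sum_i |\Delta w_i|^3/(w_i^{\mathrm{start}})^2\Big).
\]
The linear term vanishes exactly, because both weight vectors lie on the simplex, so $\sum_i \Delta w_i = 0$. What remains is the quadratic term in Eq~\eqref{eq:tfmm_loss_kernel} plus a cubic remainder.

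The only care needed is to make ``small'' precise. I will state that the approximation holds with error $O(\|\Delta \v w\|^3)$, uniformly on any region where $\min_i w_i^{\mathrm{start}}$ is bounded below. For the remainder I use a Lagrange form of Taylor's theorem on $\phi(x)=(1+x)\log(1+x)$. Its third derivative is $-1/(1+x)^2$, which is bounded for $|x_i|\le 1/2$. This gives a clean bound of the form $\tfrac{2}{3}\sum_i |\Delta w_i|^3/(w_i^{\mathrm{start}})^2$.

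Finally, I will remark on the symmetry of the expansion. The same leading term arises if one expands around $\v w^{\mathrm{end}}$ instead, since $1/w_i^{\mathrm{end}} = 1/w_i^{\mathrm{start}} + O(\Delta w)$. This makes the result consistent with the Fisher--Rao quadratic form of Eq~\eqref{eq:fisher}, and the asymmetry of the KL divergence only appears at third order. No step is a genuine obstacle; the main point is simply not to misapply Eq~\eqref{eq:kl_expansion} with the arguments in the wrong order.
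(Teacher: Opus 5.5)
Your proposal is correct and follows the same route as the paper's proof in Appendix~\ref{app:taylor}. Both substitute $w_i^{\mathrm{end}} = w_i^{\mathrm{start}} + \Delta w_i$, Taylor-expand to second order, and eliminate the linear term using $\sum_i \Delta w_i = 0$. The differences are cosmetic: you expand $(1+x)\log(1+x)$ as a single function rather than multiplying $(w_i + \Delta w_i)$ against the series for $\log(1+x)$, and your Lagrange remainder bound $\tfrac{2}{3}\sum_i |\Delta w_i|^3/(w_i^{\mathrm{start}})^2$ for $|x_i| \le 1/2$ is correct and makes explicit the $O(\Delta w^3)$ term the paper leaves unquantified.
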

\begin{proof}
  Taylor-expand Theorem~\ref{thm:kl_divergence} to second order; the linear term vanishes by the simplex constraint (Appendix~\ref{app:taylor}).
\end{proof}
This is half the Fisher--Rao quadratic form, and it defines the infinitesimal line element on the weight simplex.

\begin{proposition}[Lambert W midpoint as KL stationary point]
\label{prop:lambert_kl}
For $f=2$ (one intermediate weight vector $\v m$ between start and end), the unconstrained component-wise stationarity condition
\[
\frac{\partial}{\partial m_i} \left[D_{\mathrm{KL}}(\v m \,\|\, \v w^{\mathrm{start}}) + D_{\mathrm{KL}}(\v w^{\mathrm{end}} \,\|\, \v m)\right] = 0
\]
yields exactly the Lambert~W formula $\tilde{w}_i^* = w_i^{\mathrm{end}} / W_0(e\, w_i^{\mathrm{end}} / w_i^{\mathrm{start}})$ of Eq~\eqref{eq:best_w_tilde}.
\end{proposition}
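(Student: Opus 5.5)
The plan is to differentiate the two KL terms component-wise, set the derivative to zero, and rearrange the resulting transcendental equation into the defining form $x e^{x} = z$ of the Lambert~W function. Write $a_i = w_i^{\mathrm{start}}$ and $b_i = w_i^{\mathrm{end}}$, and treat the $m_i>0$ as free variables, with no simplex constraint imposed; this is what "unconstrained component-wise" means in the statement. The objective is
\[
F(\v m) = \sum_{j} m_j \log\frac{m_j}{a_j} + \sum_j b_j \log\frac{b_j}{m_j}.
\]
Each $m_i$ appears in exactly one summand of each sum, so the partial derivative separates:
\[
\frac{\partial F}{\partial m_i} = \log\frac{m_i}{a_i} + 1 - \frac{b_i}{m_i}.
\]

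The key step is a change of variable. Setting the derivative to zero gives $\log(m_i/a_i) + 1 = b_i/m_i$. I will substitute $x_i = b_i/m_i > 0$, so that $m_i = b_i/x_i$. The condition then becomes
\[
\log\frac{b_i}{a_i} + 1 - \log x_i = x_i,
\]
that is, $\log x_i + x_i = \log\bigl(e\, b_i/a_i\bigr)$. Exponentiating gives
\[
x_i e^{x_i} = \frac{e\, b_i}{a_i}.
\]
Hence $x_i = W(e\,b_i/a_i)$, and therefore $m_i = b_i / W(e\, b_i/a_i)$, which is Eq~\eqref{eq:best_w_tilde}.

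The only point needing care is the choice of branch and uniqueness. The right-hand side $e\,b_i/a_i$ is strictly positive. The map $x\mapsto x e^{x}$ is strictly increasing on $(0,\infty)$ and maps it bijectively onto $(0,\infty)$, and there the inverse is exactly the principal branch $W_0$. The other real branch $W_{-1}$ is defined only on $[-1/e,0)$. So for positive arguments there is a unique positive $x_i$, and the only admissible root is $x_i = W_0(e\,b_i/a_i)$.

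For completeness I would also note two further facts:
\begin{itemize}
\item The stationary point is a minimum: $\partial^2 F/\partial m_i^2 = 1/m_i + b_i/m_i^2 > 0$ and the Hessian is diagonal, so $F$ is strictly convex on $\mathbb{R}^N_+$.
\item The resulting $\tilde{\v w}^*$ need not sum to one. This matches the prior work, where the formula is subsequently normalised. Imposing $\sum_i m_i = 1$ via a Lagrange multiplier would simply replace the constant $1$ by $1+\lambda$, and hence $e$ by $e^{1+\lambda}$ inside $W_0$.
\end{itemize}
I expect no real obstacle. The main thing to get right is the substitution that exposes the $xe^{x}$ structure.
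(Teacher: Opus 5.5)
Your proof is correct and takes the same route as the paper: differentiate the separable objective to get $\log(m_i/w_i^{\mathrm{start}}) + 1 - w_i^{\mathrm{end}}/m_i = 0$, rearrange into the form $x e^{x} = z$, and invert with $W_0$; your branch-selection and convexity remarks are sound additions. Your substitution $x_i = w_i^{\mathrm{end}}/m_i$ gives the correct intermediate equation $x_i e^{x_i} = e\,w_i^{\mathrm{end}}/w_i^{\mathrm{start}}$, whereas the paper's printed intermediate step $m_i \exp(m_i/w_i^{\mathrm{start}} - 1) = w_i^{\mathrm{end}}$ has the roles swapped (it would give $m_i = w_i^{\mathrm{start}} W_0(\cdot)$), even though the paper's final formula agrees with yours.
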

\begin{proof}
Differentiating: $\partial/\partial m_i [m_i \log(m_i/w_i^s) + w_i^e \log(w_i^e/m_i)] = \log(m_i/w_i^s) + 1 - w_i^e/m_i = 0$, which rearranges to $m_i \exp(m_i/w_i^s - 1) = w_i^e$, giving $m_i = w_i^e / W_0(e\,w_i^e/w_i^s)$.
\end{proof}
The simplex constraint $\sum_i m_i = 1$ is then imposed by renormalisation, which is why the formula is exact only for small weight changes: the full constrained problem couples the components through a Lagrange multiplier that has no closed-form solution~\cite{willetts2024optimalrebalancingdynamicamms}.
The Lambert~W midpoint optimises the unconstrained component-wise KL cost, SLERP optimises the leading-order expansion of the constrained cost, and the two agree through second order (Table~\ref{tab:agreement}).

\paragraph{The Hellinger embedding and geodesics}
\label{ssec:hellinger}

The coordinate transformation $\eta_i = \sqrt{w_i}$ maps $\Delta^{N-1}$ isometrically (up to a factor) onto the positive orthant of the unit sphere $S^{N-1}_+$, under which the Fisher--Rao metric reduces to a multiple of the round metric.\footnote{This map $w_i \mapsto \sqrt{w_i}$ is the \emph{Hellinger embedding}~\cite{hellinger1909,amari_info_geo}.}
Geodesics on the sphere are great circles, computable in closed form via spherical linear interpolation (SLERP)~\cite{shoemake1985}.
We exploit this correspondence throughout: interpolation is performed on $S^{N-1}_+$ and mapped back to weights by squaring.

The metric, Eq~\eqref{eq:fisher}, simplifies under the coordinate change
\begin{equation}
    \eta_i = \sqrt{w_i}, \qquad\text{so that}\qquad w_i = \eta_i^2,\quad \mathrm{d}w_i = 2\eta_i\, \mathrm{d}\eta_i,
    \label{eq:hellinger_map}
\end{equation}
\begin{equation}
    \Rightarrow \mathrm{d}s^2 = \sum_{i=1}^N \frac{4\eta_i^2\, (\mathrm{d}\eta_i)^2}{\eta_i^2} = 4\sum_{i=1}^N (\mathrm{d}\eta_i)^2.
    \label{eq:sphere_metric}
\end{equation}
The per-step loss, Eq~\eqref{eq:tfmm_loss_kernel}, is therefore $\tfrac{1}{2}\,\mathrm{d}s^2 = 2\sum_i(\mathrm{d}\eta_i)^2$: half the Fisher--Rao quadratic form evaluated on the unit sphere.
The constraint $\sum_i w_i = 1$ becomes $\sum_i \eta_i^2 = 1$: the point $\v\eta = (\eta_1,\ldots,\eta_N)$ lies on the positive orthant of the unit sphere $S^{N-1}$.
The metric is $4\times$ the round metric inherited from the ambient Euclidean space $\mathbb{R}^N$.
Figure~\ref{fig:hellinger} illustrates this correspondence.

\begin{figure*}[h]
\centering
    \includegraphics[width=0.8\textwidth]{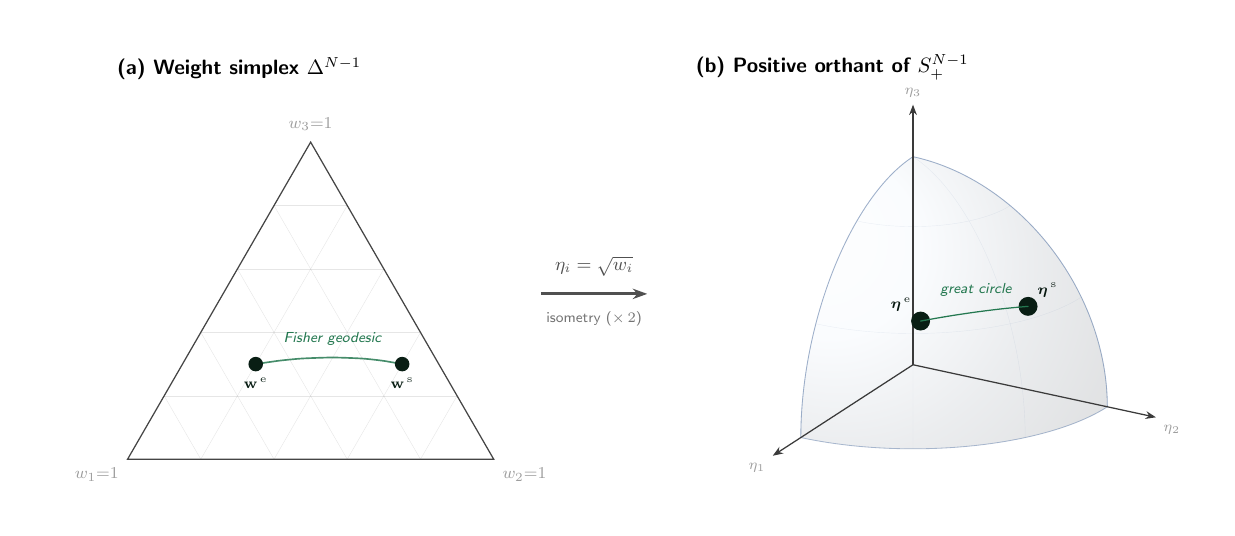}
    \caption{The Hellinger embedding maps the weight simplex $\Delta^{N-1}$ to the positive orthant of the unit sphere $S^{N-1}_+$ via $\eta_i = \sqrt{w_i}$.
    Under this isometry, the Fisher--Rao metric becomes a scaling of the standard round metric, and geodesics (shortest paths under the arbitrage cost) become great circles, computable in closed form via SLERP.}
    \label{fig:hellinger}
\end{figure*}

\subsection{SLERP on the sphere under Hellinger embedding}
\label{sec:slerp}
We derive the weight interpolation that minimises the quadratic loss (Eq~\eqref{eq:tfmm_loss_kernel}).

\begin{corollary}[SLERP optimality]
\label{cor:slerp_optimal}
Among all $f$-step paths on the simplex from $\v w^{\mathrm{start}}$ to~$\v w^{\mathrm{end}}$, the path minimising the sum of squared Fisher--Rao arc-lengths $\sum_{k=1}^{f} (\mathrm{d}s_k)^2$
is the constant-speed geodesic: SLERP in Hellinger coordinates, where the geodesic angle is $\Omega = \arccos(\mathrm{BC})$ and $\mathrm{BC} = \sum_i\sqrt{w_i^s w_i^e}$ is the Bhattacharyya coefficient~\cite{bhattacharyya1943}.
\end{corollary}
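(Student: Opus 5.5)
We pass to the Hellinger coordinates of Eq~\eqref{eq:hellinger_map}. By Eq~\eqref{eq:sphere_metric}, the leading-order cost of step $k$ is proportional to the squared length of the $k$-th increment of a discrete path $\v\eta^{(0)},\dots,\v\eta^{(f)}$ on the positive orthant $S^{N-1}_+$. Its endpoints are $\v\eta^{(0)}=\sqrt{\v w^{\mathrm{start}}}$ and $\v\eta^{(f)}=\sqrt{\v w^{\mathrm{end}}}$. We interpret $(\mathrm{d}s_k)^2$ as the squared geodesic (great-circle) length $4\theta_k^2$, where $\theta_k$ is the angle between consecutive points; this agrees with $4\|\v\eta^{(k)}-\v\eta^{(k-1)}\|^2$ to leading order, which is the regime in which Corollary~\ref{cor:quadratic_loss} applies. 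The problem therefore becomes: minimise $\sum_{k=1}^f \theta_k^2$ over all sequences of points on the sphere with fixed endpoints.

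The argument has two steps.

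\emph{Step 1 (triangle inequality).} The spherical angle is a metric on $S^{N-1}$, so any such path satisfies $\sum_k \theta_k \ge \Omega$. Here $\Omega$ is the angle between the endpoints, and $\cos\Omega=\langle \v\eta^{(0)},\v\eta^{(f)}\rangle=\sum_i\sqrt{w_i^s w_i^e}=\mathrm{BC}$. Equality holds if and only if all the points lie, in order, on the minor great-circle arc joining the endpoints. That arc is unique because $\Omega<\pi$: both endpoints have positive coordinates, so $\mathrm{BC}>0$ and in fact $\Omega<\pi/2$.

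\emph{Step 2 (Cauchy--Schwarz).} We have $\sum_k\theta_k^2 \ge \tfrac1f\bigl(\sum_k\theta_k\bigr)^2 \ge \Omega^2/f$. Equality in Cauchy--Schwarz requires $\theta_k=\Omega/f$ for every $k$. Combining the two equality cases, the unique minimiser places $\v\eta^{(k)}$ on the arc at angle $k\Omega/f$ from the start. This is exactly
\[
\v\eta^{(k)}=\frac{\sin((1-t_k)\Omega)}{\sin\Omega}\,\v\eta^{(0)}+\frac{\sin(t_k\Omega)}{\sin\Omega}\,\v\eta^{(f)},\qquad t_k=k/f,
\]
i.e.\ SLERP. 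To confirm the formula we check that it is a unit vector in the span of the endpoints and makes angle $t_k\Omega$ with $\v\eta^{(0)}$; both are routine trigonometric identities. The degenerate case $\Omega=0$ (equal endpoints) is trivial.

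Finally, we check that the optimal path stays feasible. The arc lies in the span of two vectors in the open positive orthant, and for $t\in[0,1]$ both coefficients are nonnegative. Hence every $\v\eta^{(k)}$ has strictly positive entries, and squaring gives valid weights $w_i=\eta_i^2$ on the simplex.

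The main obstacle is the modelling step: justifying that the per-step cost should be measured by the geodesic angle rather than the chordal quantity $\sum_i(\Delta\eta_i)^2$ (or $\sum_i\Delta w_i^2/w_i$). The two agree to leading order, but their minimisers differ at higher order. I would handle this by stating that the Corollary concerns the Riemannian (leading-order) cost. I would also note that for the chordal version, $\|\Delta\v\eta\|=2\sin(\theta/2)$ is increasing and convex in $\theta$ on $[0,\pi]$, so the same two-step argument via Jensen's inequality gives the same equal-angle minimiser. This makes the conclusion robust to that choice.
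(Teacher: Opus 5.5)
Your proposal is correct and follows the same route as the paper. You pass to Hellinger coordinates so that $\mathrm{d}s_k = 2\theta_k$, use QM--AM (Cauchy--Schwarz) to force equal angles, and use the fact that the total angle is at least $\Omega$, attained only on the minor arc, to force the geodesic; your equality cases, the identification $\cos\Omega=\mathrm{BC}$, and the positivity check make explicit what the paper leaves implicit.

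There is one slip in your closing aside. The chord length $2\sin(\theta/2)$ is concave, not convex. The Jensen step actually needs the squared chord $4\sin^2(\theta/2)=2-2\cos\theta$, which is convex only on $[0,\pi/2]$. That is enough here, since any two points of the positive orthant subtend an angle of at most $\pi/2$. Also, this argument covers $\sum_i(\Delta\eta_i)^2$ but not the base-point form $\sum_i\Delta w_i^2/w_i$. That form is not a function of the angle alone, and it is the one whose minimiser genuinely differs from SLERP at higher order.
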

\begin{proof}
Under the Hellinger embedding (\S\ref{ssec:hellinger}), the Fisher--Rao metric is $4\times$ the round metric on $S^{N-1}_+$, so $\mathrm{d}s_k = 2\alpha_k$ where $\alpha_k$ is the geodesic angle subtended by step~$k$.
By convexity of $x \mapsto x^2$, $\sum_{k=1}^f \alpha_k^2 \geq (\sum_k \alpha_k)^2/f$ with equality iff all $\alpha_k$ are equal (Appendix~\ref{app:constant_speed_proof}).
The minimum over paths is attained by the geodesic (shortest total angle $\sum_k \alpha_k = \Omega$), traversed at constant angular speed ($\alpha_k = \Omega/f$): that is, SLERP~\cite{shoemake1985}.
\end{proof}
Along the SLERP path, each step subtends equal arc $\Omega/f$, giving per-step loss $\mathrm{d}s_k^2/2 = 2\Omega^2/f^2$ and total loss $2\Omega^2/f$.
The per-step loss equals the quadratic loss kernel of Corollary~\ref{cor:quadratic_loss} at leading order.\footnote{The base-point quadratic form $\sum_i (\Delta w_i)^2/(2w_i)$ differs from the exact $\mathrm{d}s^2/2$ by $O(\|\Delta\v\eta\|^3)$ per step, giving a total discrepancy of $O(\Omega^3/f^2)$.}

\subsection{The SLERP formula}
\label{ssec:slerp}

Given start weights $\v w^{\mathrm{start}}$ and end weights $\v w^{\mathrm{end}}$, the loss-minimising multi-step interpolation proceeds as in Algorithm~\ref{alg:slerp}.
\begin{algorithm}[h]
\caption{SLERP weight interpolation on the positive sphere}
\label{alg:slerp}
\begin{algorithmic}[1]
\REQUIRE Start weights $\mathbf{w}^{\mathrm{start}} \in \Delta^{N-1}$, end weights $\mathbf{w}^{\mathrm{end}} \in \Delta^{N-1}$, number of steps $f$
\ENSURE Interpolated weight sequence $\mathbf{w}(0), \mathbf{w}(1), \ldots, \mathbf{w}(f)$
\STATE \textbf{Map to the sphere:}
\STATE $\boldsymbol{\eta}^{\mathrm{start}} \leftarrow \left(\sqrt{w_1^{\mathrm{start}}},\,\ldots,\,\sqrt{w_N^{\mathrm{start}}}\right)$
\STATE $\boldsymbol{\eta}^{\mathrm{end}} \leftarrow \left(\sqrt{w_1^{\mathrm{end}}},\,\ldots,\,\sqrt{w_N^{\mathrm{end}}}\right)$
\STATE \textbf{Compute geodesic angle:}
\STATE $\Omega \leftarrow \arccos\!\left(\sum_{i=1}^N \sqrt{w_i^{\mathrm{start}}\, w_i^{\mathrm{end}}}\right)$ \hfill \COMMENT{Bhattacharyya coefficient~\cite{bhattacharyya1943}}
\STATE \textbf{SLERP interpolation:}
\FOR{$k = 0, 1, \ldots, f$}
  \STATE $t \leftarrow k / f$
  \STATE $\boldsymbol{\eta}(t) \leftarrow \dfrac{\sin\!\left((1-t)\,\Omega\right)}{\sin\Omega}\,\boldsymbol{\eta}^{\mathrm{start}} +
\dfrac{\sin\!\left(t\,\Omega\right)}{\sin\Omega}\,\boldsymbol{\eta}^{\mathrm{end}}$
  \STATE \textbf{Map back to weights:} $w_i(k) \leftarrow \left[\eta_i(t)\right]^2$ for $i = 1, \ldots, N$
\ENDFOR
\end{algorithmic}
\end{algorithm}
As established in Corollary~\ref{cor:slerp_optimal}, consecutive points on the SLERP path are separated by a constant angular increment of $\Omega/f$, giving per-step loss
\begin{equation}
    \mathcal{L}_k = (\mathrm{d}s_k)^2 = \frac{2\Omega^2}{f^2}
    \label{eq:slerp_perstep}
\end{equation}
for all $k$, and total loss $\mathcal{L}_{\mathrm{total}} = 2\Omega^2/f$, which decreases as $1/f$ with the number of steps.

\subsection{Two-token case}

For $N=2$ tokens with weights $(w,\, 1-w)$, the sphere $S^1$ is a quarter-circle in the positive quadrant.
The Hellinger embedding maps $w \mapsto (\sqrt{w},\, \sqrt{1-w})$, parameterised by a single angle $\theta = \arcsin(\sqrt{w})$.
SLERP reduces to linear interpolation in $\theta$:
\begin{equation}
    w(k) = \sin^2\!\left(\theta^{\mathrm{start}} + \frac{k}{f}\left(\theta^{\mathrm{end}} - \theta^{\mathrm{start}}\right)\right),
    \label{eq:two_token_slerp}
\end{equation}
where $\theta^{\mathrm{start}} = \arcsin(\sqrt{w^{\mathrm{start}}})$ and $\theta^{\mathrm{end}} = \arcsin(\sqrt{w^{\mathrm{end}}})$.

\subsection{SLERP midpoint equals (AM+GM)/normalise midpoint}
\label{sec:slerp_equals_amgm}

Does the (AM+GM)/normalise heuristic~\cite{willetts2024optimalrebalancingdynamicamms}, Eq~\eqref{eq:approx_optimal_traj}, connect to the SLERP geodesics?
The midpoints of the two trajectories are identical.

\begin{restatable}{theorem}{slerpamgm}
\label{thm:slerp_amgm}
For any $N$-token pool with start weights $\v w^{\mathrm{start}}$ and end weights $\v w^{\mathrm{end}}$, the SLERP midpoint
\[
\boldsymbol{\eta}(t) = \frac{\sin\!\bigl((1-t)\,\Omega\bigr)}{\sin\Omega}\,\boldsymbol{\eta}^{\mathrm{start}} +
\frac{\sin\!\bigl(t\,\Omega\bigr)}{\sin\Omega}\,\boldsymbol{\eta}^{\mathrm{end}}
\]
evaluated at $t = k/f = 1/2$ is exactly equal to the (AM+GM)/\allowbreak normalise midpoint of~\cite{willetts2024optimalrebalancingdynamicamms}.
\end{restatable}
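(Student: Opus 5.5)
Evaluating the SLERP formula at $t=1/2$ gives $\sin(\Omega/2)/\sin\Omega = 1/(2\cos(\Omega/2))$, so
\[
\boldsymbol{\eta}(1/2) = \frac{\boldsymbol{\eta}^{\mathrm{start}}+\boldsymbol{\eta}^{\mathrm{end}}}{2\cos(\Omega/2)} ,
\]
and the SLERP midpoint is the normalised chord midpoint in Hellinger coordinates. The normalisation constant need not be tracked through trigonometry: since $\boldsymbol{\eta}(1/2)$ lies on the unit sphere, the midpoint equals $(\boldsymbol{\eta}^{\mathrm{start}}+\boldsymbol{\eta}^{\mathrm{end}})/\|\boldsymbol{\eta}^{\mathrm{start}}+\boldsymbol{\eta}^{\mathrm{end}}\|$. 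As a consistency check, $\|\boldsymbol{\eta}^{\mathrm{start}}+\boldsymbol{\eta}^{\mathrm{end}}\|^2 = 2+2\,\mathrm{BC} = 2+2\cos\Omega = 4\cos^2(\Omega/2)$, matching the prefactor above (and requiring $\Omega<\pi$, which holds on the positive orthant).

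Next we map back to weights by squaring componentwise. For the numerator,
\[
(\sqrt{w_i^s}+\sqrt{w_i^e})^2 = w_i^s + w_i^e + 2\sqrt{w_i^s w_i^e} = 2\Bigl(\tfrac{w_i^s+w_i^e}{2} + \sqrt{w_i^s w_i^e}\Bigr) = 2\bigl(w_i^{\mathrm{AM}}(t_{f/2}) + w_i^{\mathrm{GM}}(t_{f/2})\bigr),
\]
using Eqs~\eqref{eq:approx_optimal_traj_am}--\eqref{eq:approx_optimal_traj_gm} at $k/f=1/2$. For the denominator, summing over $i$ gives $\|\boldsymbol{\eta}^{\mathrm{start}}+\boldsymbol{\eta}^{\mathrm{end}}\|^2 = 2\sum_j (w_j^{\mathrm{AM}}+w_j^{\mathrm{GM}})$. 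The factors of $2$ cancel, so $w_i(1/2) = \eta_i(1/2)^2$ equals $\breve w_i(t_{f/2})$ from Eq~\eqref{eq:approx_optimal_traj}. This holds for every $N$ and with no smallness assumption.

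The argument has no real obstacle. The only step needing care is replacing the trigonometric prefactor by the Euclidean norm, which uses the identity $\cos\Omega=\mathrm{BC}=\langle\boldsymbol{\eta}^{\mathrm{start}},\boldsymbol{\eta}^{\mathrm{end}}\rangle$. We also note the degenerate case $\Omega=0$ (identical weights), where both constructions trivially return $\v w^{\mathrm{start}}$. This can be handled by continuity or directly, since the chord-midpoint form is well defined there even though $\sin\Omega=0$.
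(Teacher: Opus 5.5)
Your proof is correct and follows the paper's argument essentially step for step: you reduce the $t=1/2$ SLERP coefficients to $1/(2\cos(\Omega/2))$, identify $4\cos^2(\Omega/2)=2+2\,\mathrm{BC}=\|\boldsymbol{\eta}^{\mathrm{start}}+\boldsymbol{\eta}^{\mathrm{end}}\|^2$, then square componentwise using $(\sqrt{a}+\sqrt{b})^2 = 2(\mathrm{AM}+\mathrm{GM})$ so that the factor of 2 cancels on normalisation. Your treatment of the degenerate case $\Omega=0$, where $\sin\Omega=0$ and the SLERP formula is undefined, is a small addition that the paper leaves implicit.
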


\begin{proof}
The SLERP midpoint on the sphere is
\[\v\eta_{\mathrm{mid}} = \frac{\v\eta^{\mathrm{start}} + \v\eta^{\mathrm{end}}}{2\cos(\Omega/2)},\]
giving weights
\begin{equation}
    w_i^{\mathrm{SLERP}} = \frac{\left(\sqrt{w_i^{\mathrm{start}}} + \sqrt{w_i^{\mathrm{end}}}\right)^2}{\sum_j \left(\sqrt{w_j^{\mathrm{start}}} + \sqrt{w_j^{\mathrm{end}}}\right)^2},
    \label{eq:slerp_midpoint_expanded}
\end{equation}
where we have used $4\cos^2(\Omega/2) = 2(1+\cos\Omega) = \|\v\eta^{\mathrm{start}} + \v\eta^{\mathrm{end}}\|^2$.
Expanding the numerator via $(\sqrt{a}+\sqrt{b})^2 = a + b + 2\sqrt{ab}$:
\begin{equation}
    \left(\sqrt{w_i^{\mathrm{start}}} + \sqrt{w_i^{\mathrm{end}}}\right)^2 = 2\!\left(\underbrace{\frac{w_i^{\mathrm{start}} + w_i^{\mathrm{end}}}{2}}_{\mathrm{AM}_i} + \underbrace{\sqrt{w_i^{\mathrm{start}}\, w_i^{\mathrm{end}}}}_{\mathrm{GM}_i}\right).
    \label{eq:square_identity}
\end{equation}
The factor of 2 cancels upon normalisation:
\begin{equation}
    w_i^{\mathrm{SLERP}} = \frac{\mathrm{AM}_i + \mathrm{GM}_i}{\sum_j \left(\mathrm{AM}_j + \mathrm{GM}_j\right)} = \breve{w}_i,
\end{equation}
which is precisely the (AM+GM)/normalise formula (Eq~\eqref{eq:approx_optimal_traj}, from~\cite{willetts2024optimalrebalancingdynamicamms} at $t=1/2$).
\end{proof}

This identity holds for any number of tokens and any magnitude of weight change.
The heuristic midpoint is the geodesic midpoint.
See Figure~\ref{fig:midpoint} for an informal visual proof.

\begin{figure*}[h]
\centering
    \includegraphics[width=\textwidth]{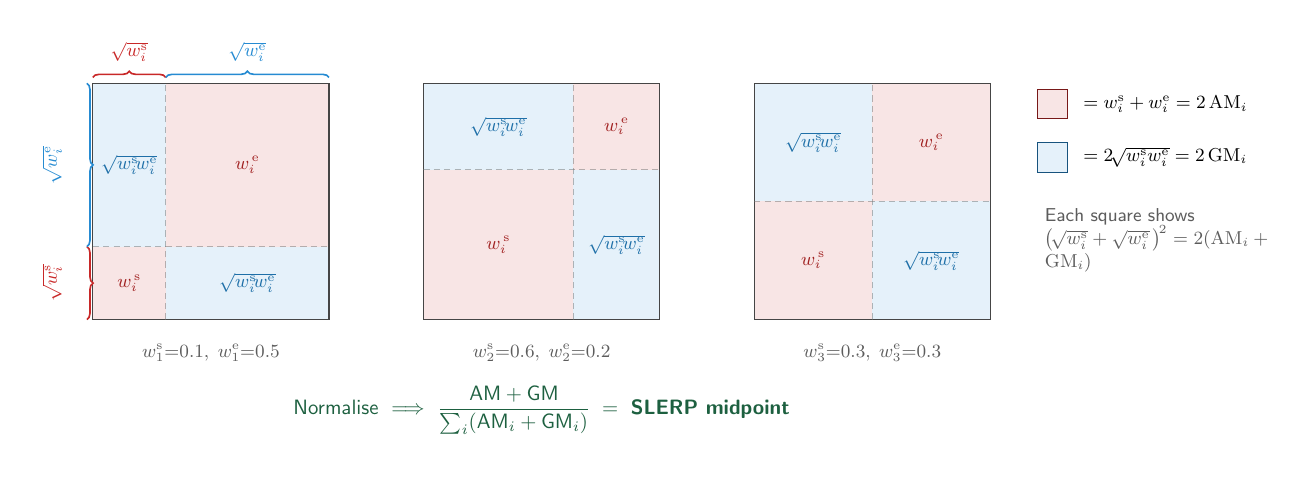}
    \caption{Visual proof that the SLERP midpoint equals (AM+GM)/normalise (in the N=3 case).
    To return to weight space we undo the Hellinger embedding $\eta_i = \sqrt{w_i}$, i.e.\ \emph{square}.
    Each square decomposes:
    $(\!\sqrt{w_i^{\mathrm{s}}} + \sqrt{w_i^{\mathrm{e}}})^2
    = (w_i^{\mathrm{s}} + w_i^{\mathrm{e}}) + 2\!\sqrt{w_i^{\mathrm{s}}\, w_i^{\mathrm{e}}}$
    ${}= 2(\mathrm{AM}_i + \mathrm{GM}_i)$.
    Red diagonal blocks contribute $2\,\mathrm{AM}_i$; blue off-diagonal blocks contribute $2\,\mathrm{GM}_i$.
    The factor of~2 cancels with normalisation, giving us the (AM+GM)/normalise heuristic exactly.}
    \label{fig:midpoint}
\end{figure*}

\subsection{Trig-free recursive bisection}
\label{sec:bisection}

\begin{corollary}[Trig-free SLERP via recursive bisection]
\label{cor:bisection}
Let $f = 2^d$ for some integer $d \geq 1$.
The full $f$-step SLERP trajectory $\v w(k/f)$ for $k = 0, 1, \ldots, f$ can be computed by recursive midpoint bisection: at each level, insert the (AM+GM)/\allowbreak normalise midpoint between each consecutive pair of weights.
After $d$ levels, the resulting $2^d + 1$ points lie exactly on the SLERP geodesic.
\end{corollary}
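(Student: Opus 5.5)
The plan is induction on the bisection level $d$, combining Theorem~\ref{thm:slerp_amgm} with the fact that a sub-arc of a great circle is itself a great-circle arc. At each level we maintain the invariant: after $\ell$ levels, the computed points are exactly $\v w(k/2^\ell)$ for $k=0,\ldots,2^\ell$, where $\v w(t)$ is the image under squaring of the SLERP curve $\v\eta(t)$ from $\v\eta^{\mathrm{start}}$ to $\v\eta^{\mathrm{end}}$. The base case $\ell=0$ is immediate, since the endpoints are $\v w^{\mathrm{start}}$ and $\v w^{\mathrm{end}}$.

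For the inductive step, take a consecutive pair $\v w(a)$, $\v w(b)$ with $a=k/2^\ell$ and $b=(k+1)/2^\ell$. The key lemma is the \emph{restriction property} of SLERP. The segment of $\v\eta(t)$ for $t\in[a,b]$, reparametrised by $s=(t-a)/(b-a)$, is exactly the SLERP between $\v\eta(a)$ and $\v\eta(b)$, with angle $\Omega_{ab}=(b-a)\Omega$. To see this, note that $\v\eta(t)$ lies in the plane spanned by $\v\eta^{\mathrm{start}},\v\eta^{\mathrm{end}}$ and has unit norm. In an orthonormal basis $(\v u,\v v)$ of that plane with $\v u=\v\eta^{\mathrm{start}}$, it reads $\v\eta(t)=\cos(t\Omega)\v u+\sin(t\Omega)\v v$. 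So the curve is a unit-speed great circle parametrised linearly in angle. Applying the standard identity $\sin((1-s)\phi)\,\v x+\sin(s\phi)\,\v y=\sin\phi\,\v z$ for points $\v x,\v y,\v z$ at angles $\alpha$, $\alpha+\phi$, $\alpha+s\phi$ on such a circle (the sine addition formula) then shows the sub-arc is the SLERP between its endpoints. In particular, the point $\v\eta((a+b)/2)$ is the SLERP midpoint between $\v\eta(a)$ and $\v\eta(b)$.

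Theorem~\ref{thm:slerp_amgm} applies to arbitrary start and end weights, so we may use it with start $\v w(a)$ and end $\v w(b)$. It gives that the (AM+GM)/normalise midpoint of $\v w(a)$ and $\v w(b)$ equals the squared SLERP midpoint, which by the lemma is $\v w((a+b)/2)=\v w((2k+1)/2^{\ell+1})$. Inserting these midpoints for every $k$ yields all points at level $\ell+1$, which closes the induction. Setting $\ell=d$ gives the claim. We also check that the computation never uses trigonometry: the midpoint formula in Eq~\eqref{eq:slerp_midpoint_expanded} needs only square roots, sums and a normalisation.

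The step requiring the most care is the restriction lemma, together with the well-definedness conditions of Theorem~\ref{thm:slerp_amgm} at each level. We need $\Omega<\pi$ so that the great circle is unique. This holds because all $\eta_i>0$ give $\mathrm{BC}>0$, hence $\Omega<\pi/2$, and sub-arcs have even smaller angles. The degenerate case $\Omega=0$, where the start and end weights coincide, is trivial: both constructions return constant points. It remains to note that the intermediate points stay in the open positive orthant. They do, because $\v\eta(t)$ is a positive combination of positive vectors for $t\in[0,1]$, since $\sin((1-t)\Omega)$ and $\sin(t\Omega)$ are nonnegative there and not both zero. Hence every application of Theorem~\ref{thm:slerp_amgm} is to valid simplex weights.
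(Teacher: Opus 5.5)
Your proposal is correct and follows the paper's proof: induction on the bisection depth, using the fact that SLERP restricted to a sub-arc is the SLERP between that sub-arc's endpoints, then applying Theorem~\ref{thm:slerp_amgm} to each consecutive pair. Beyond the paper, you verify the restriction property explicitly and check at every level that $\Omega<\pi/2$ and that all coordinates stay positive, which guarantees that taking $\sqrt{w_i}$ recovers the correct Hellinger coordinates; the paper leaves these details implicit.
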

\begin{proof}
SLERP restricted to any sub-arc of a great circle is itself a SLERP between the sub-arc's endpoints.
By Theorem~\ref{thm:slerp_amgm}, the (AM\allowbreak+GM)\allowbreak/normalise midpoint equals the SLERP midpoint of that sub-arc.
The result follows by induction on the bisection depth~$d$.
\end{proof}

This means an on-chain implementation needs only addition, multiplication, square root, and normalisation (no $\arccos$ or $\sin$ evaluations) to produce the optimal interpolation at power-of-two step counts.

\begin{algorithm}[b]
\caption{Trig-free SLERP via recursive bisection}
\label{alg:bisection}
\begin{algorithmic}[1]
\REQUIRE Start weights $\mathbf{w}^{\mathrm{start}} \in \Delta^{N-1}$, end weights $\mathbf{w}^{\mathrm{end}} \in \Delta^{N-1}$, depth $d$ (gives $f = 2^d$ steps)
\ENSURE $2^d + 1$ points on the SLERP geodesic: $\mathbf{w}^{(0)}, \mathbf{w}^{(1)}, \ldots, \mathbf{w}^{(2^d)}$
\STATE Initialise list $\mathcal{W} \leftarrow [\mathbf{w}^{\mathrm{start}},\; \mathbf{w}^{\mathrm{end}}]$
\FOR{level $\ell = 1$ to $d$}
    \STATE $\mathcal{W}' \leftarrow [\,]$
    \FOR{each consecutive pair $(\mathbf{w}^{(a)}, \mathbf{w}^{(b)})$ in $\mathcal{W}$}
        \STATE Compute midpoint: $m_i \leftarrow \frac{w_i^{(a)} + w_i^{(b)}}{2} + \sqrt{w_i^{(a)}\, w_i^{(b)}}$ for each $i$
        \STATE Normalise: $\hat{m}_i \leftarrow m_i \,/\, \sum_j m_j$
        \STATE Append $\mathbf{w}^{(a)},\; \hat{\mathbf{m}}$ to $\mathcal{W}'$
    \ENDFOR
    \STATE Append $\mathbf{w}^{\mathrm{end}}$ to $\mathcal{W}'$
    \STATE $\mathcal{W} \leftarrow \mathcal{W}'$
\ENDFOR
\RETURN $\mathcal{W}$
\end{algorithmic}
\end{algorithm}

Algorithm~\ref{alg:bisection} makes Corollary~\ref{cor:bisection} practical.
Each level requires only $O(N)$ arithmetic operations per midpoint insertion and no trigonometric functions.
Figure~\ref{fig:bisection} illustrates the construction on the Hellinger sphere.

\begin{figure*}[h]
\centering
    \includegraphics[width=0.8\textwidth]{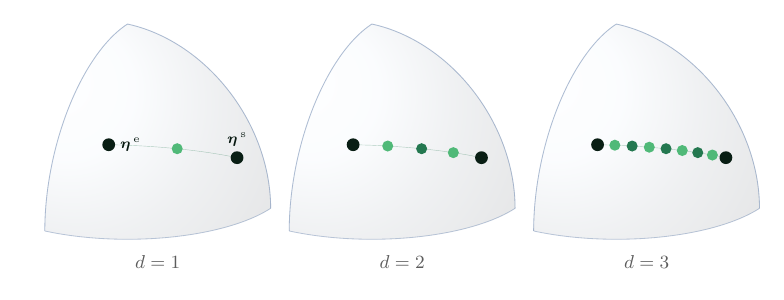}
    \caption{Recursive bisection on the positive orthant of $S^{N-1}_+$.
    At each depth~$d$, the (AM+GM)/normalise midpoint is inserted between every pair of adjacent points (bright green: newly inserted; dark green: inherited from previous depth).
    By depth~3, the $2^d+1 = 9$ points densely trace the geodesic arc, all computed without trigonometric functions.}
    \label{fig:bisection}
\end{figure*}

\subsection{Comparison at general $t$ and with Lambert~W}
\label{ssec:comparison_general_t}

For $t \neq 1/2$, the SLERP and (AM+GM)/normalise trajectories differ.
SLERP weights at general $t$ are
\begin{equation}
    w_i^{\mathrm{SLERP}}(t) = \left[\frac{\sin\!\left((1-t)\,\Omega\right)}{\sin\Omega}\sqrt{w_i^{\mathrm{start}}} + \frac{\sin\!\left(t\,\Omega\right)}{\sin\Omega}\sqrt{w_i^{\mathrm{end}}}\right]^{\!2},
\end{equation}
while (AM+GM)/normalise uses
\[
    \breve{w}_i(t) \propto (1-t)\,w_i^{\mathrm{start}} + t\,w_i^{\mathrm{end}} + \left(w_i^{\mathrm{start}}\right)^{1-t}\!\left(w_i^{\mathrm{end}}\right)^{t}.
\]

\begin{lemma}[Agreement at general $t$]
\label{lem:agreement}
Let $u_i = \Delta w_i / w_i^{\mathrm{start}}$ where $\Delta w_i = w_i^{\mathrm{end}} - w_i^{\mathrm{start}}$.
For any $t \in [0,1]$, the SLERP, (AM+GM)/\allowbreak normalise, and Lambert~W (normalised) trajectories agree through $O(u^2)$ and first differ at~$O(u^3)$.
At the midpoint $t = 1/2$, SLERP and (AM+GM)/normalise agree exactly (Theorem~\ref{thm:slerp_amgm}).
\end{lemma}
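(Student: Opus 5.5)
The plan is to put all three trajectories into a common normal form and compare them coefficient by coefficient. Write $w_i^{\mathrm{end}} = w_i^{\mathrm{start}}(1+u_i)$. Two facts drive the comparison:
\begin{itemize}
\item the simplex constraint gives $\sum_i w_i^{\mathrm{start}} u_i = 0$;
\item set $S := \sum_i w_i^{\mathrm{start}} u_i^2 = O(u^2)$.
\end{itemize}
Any normalised trajectory of the form $w_i(t) \propto w_i^{\mathrm{start}}\bigl(1 + \alpha(t) u_i + \beta(t) u_i^2 + O(u^3)\bigr)$, with $\alpha,\beta$ independent of $i$, then normalises to
\[
w_i(t) = w_i^{\mathrm{start}}\bigl(1 + \alpha(t) u_i + \beta(t)(u_i^2 - S)\bigr) + O(u^3).
\]
The reason is that the normalising constant is $1 + \beta S + O(u^3)$: the linear term drops out by the constraint. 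So the $i$-independent part is forced by normalisation, and two trajectories agree through $O(u^2)$ if and only if they share $\alpha(t)$ and $\beta(t)$. I would aim to show that all three have $\alpha(t) = t$ and $\beta(t) = -t(1-t)/4$.

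\emph{(AM+GM)/normalise.} This is immediate. We have $\mathrm{AM}_i = w_i^{\mathrm{start}}(1+tu_i)$ and $\mathrm{GM}_i = w_i^{\mathrm{start}}(1+u_i)^t = w_i^{\mathrm{start}}\bigl(1 + tu_i + \tfrac{t(t-1)}{2}u_i^2 + O(u^3)\bigr)$. Half their sum gives $\alpha = t$ and $\beta = t(t-1)/4$.

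\emph{SLERP.} Write $\boldsymbol\eta(t) = a(t)\,\boldsymbol\eta^{\mathrm{start}} + b(t)\,\boldsymbol\eta^{\mathrm{end}}$ with $a = \sin((1-t)\Omega)/\sin\Omega$ and $b = \sin(t\Omega)/\sin\Omega$. Then expand $\sqrt{1+u_i} = 1 + u_i/2 - u_i^2/8 + O(u^3)$.
\begin{itemize}
\item From the Bhattacharyya coefficient, $\cos\Omega = 1 - S/8 + O(u^3)$, so $\Omega^2 = O(u^2)$.
\item Hence $a = 1-t + O(u^2)$ and $b = t + O(u^2)$.
\item Squaring, the coefficient of $u_i$ is $(a+b)b = t + O(u^2)$.
\item The coefficient of $u_i^2$ is $b^2/4 - (a+b)b/4 = -ab/4 = -t(1-t)/4 + O(u^2)$.
\end{itemize}
SLERP output already lies on the simplex, so the normal form applies without any further normalisation. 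The $O(u^2)$ corrections to $a$ and $b$ only enter multiplied by $u_i$ or $u_i^2$, so they are $O(u^3)$.

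\emph{Lambert W.} The paper defines the Lambert W point only at $t=1/2$. For general $t$ I would use the natural weighted version of Proposition~\ref{prop:lambert_kl}: the componentwise stationary point of $\tfrac{1}{t}D_{\mathrm{KL}}(\v m\|\v w^{\mathrm{start}}) + \tfrac{1}{1-t}D_{\mathrm{KL}}(\v w^{\mathrm{end}}\|\v m)$, followed by renormalisation. At $t = 1/2$ this reduces to Eq~\eqref{eq:best_w_tilde}. If the companion paper uses a different generalisation, I would substitute it; the method is unchanged. The stationarity condition is
\[
(1-t)\bigl(\log(m_i/w_i^{\mathrm{start}}) + 1\bigr) = t\,w_i^{\mathrm{end}}/m_i.
\]
Make the ansatz $m_i = c\, w_i^{\mathrm{start}}(1+x_i)$, with $c$ the $u$-free solution, and solve perturbatively $x_i = \alpha u_i + \beta' u_i^2 + O(u^3)$ by matching powers of $u_i$. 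Since the equation is componentwise, $\alpha$ and $\beta'$ are universal functions of $t$, and only they matter after normalisation. The key check is that they reproduce $\alpha = t$ and $\beta = -t(1-t)/4$. I expect this step to be the main obstacle. Care is needed with the overall scale $c$: it is $i$-independent and so is harmless, but it shifts the expansion point for the $\log$, so I must confirm that the universal coefficients really come out matching. If the weighted definition fails to match at second order, that is a sign the intended general-$t$ Lambert W trajectory is defined differently, and I would adjust the definition accordingly.

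\emph{Disagreement at $O(u^3)$.} It suffices to exhibit one explicit case. Take $N=2$ and $t = 1/4$, and expand each trajectory to third order in $u_1$, using $u_2 = -w_1^{\mathrm{start}}u_1/w_2^{\mathrm{start}}$. Then show that the $u_1^3$ coefficients differ pairwise. For SLERP this requires the $O(\Omega^2)$ corrections to $a$ and $b$, namely $a \approx (1-t)\bigl(1 + t(2-t)\Omega^2/6\bigr)$ and similarly for $b$. This is where the SLERP/AM+GM discrepancy first enters. At $t = 1/2$ these third-order terms cancel for SLERP versus AM+GM, consistent with Theorem~\ref{thm:slerp_amgm}, which supplies the final sentence of the lemma.
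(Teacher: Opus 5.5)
\textbf{Verdict.} Your SLERP and (AM+GM)/normalise parts are correct, but the general-$t$ Lambert~W step fails, already at first order.

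\textbf{What works.} The SLERP and (AM+GM)/normalise expansions are the same second-order computation as the paper's Appendix~\ref{app:taylor_agreement}. Your normal form is a tidy repackaging: it makes the $i$-independent $S$ term automatic, and it lets you read off SLERP's $u_i^2$ coefficient as $-ab/4$ without expanding $a+b$ to $O(\Omega^2)$.

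\textbf{The gap.} Write $y_i = m_i/w_i^{\mathrm{start}}$. Your stationarity condition becomes $(1-t)\,y_i(\log y_i + 1) = t(1+u_i)$. At $u=0$ it gives $y_i = c$ with $(1-t)(\log c+1) = t/c$. Differentiating gives $\mathrm{d}y_i/\mathrm{d}u_i = t/\bigl((1-t)(\log c+2)\bigr)$. After renormalisation the coefficient of $u_i$ is therefore
\[
\frac{t}{t + c\,(1-t)},
\]
which equals $t$ only if $c=1$, that is, only at $t=1/2$. For example, at $t=1/4$ one gets $c\approx 0.626$ and a coefficient of about $0.35$.

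So the assumption that $c$ is harmless is exactly where the plan breaks. Renormalisation removes $c$ at zeroth order, but the componentwise equation is nonlinear, so the slope depends on $c$. The root cause is the missing simplex multiplier: at $u=0$ the constant parts $1/t$ and $-1/(1-t)$ of the two gradients cancel only when $t=1/2$. That is why the unconstrained Lambert~W formula makes sense there and nowhere else.

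The natural repair does not rescue you either. Imposing $\sum_i m_i = 1$ with a Lagrange multiplier restores the linear coefficient $t$. However, the $u_i^2$ coefficient then comes out as $-t^2(1-t)/2$ rather than $-t(1-t)/4$, again agreeing only at $t=1/2$. The cubic asymmetry of the KL cost shifts a weighted two-step minimiser at $O(u^2)$. So your fallback of adjusting the definition has no natural KL-based candidate to land on.

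\textbf{How the paper handles it.} The paper only defines a Lambert~W point at the midpoint. It expands $W_0(e(1+u)) = 1 + u/2 - 3u^2/16 + O(u^3)$ and obtains $\tilde w_i^* = w_i(1 + u_i/2 - u_i^2/16 + O(u^3))$. It then renormalises and matches the $t=1/2$ SLERP coefficient $1/16$. The Lambert~W part of the lemma is established only there.

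\textbf{What to do instead.} Restrict your Lambert~W comparison to $t=1/2$, where your weighted definition has $c=1$ and reduces to Eq~\eqref{eq:best_w_tilde}. If you want other values of $t$, define the Lambert~W trajectory at dyadic $t$ by recursive Lambert~W bisection. It inherits $O(u^3)$ agreement from the midpoint case by induction on depth, as in Corollary~\ref{cor:bisection}, because the SLERP midpoint map is smooth.

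\textbf{Your third-order example.} This goes beyond the paper, which only asserts that the cubic terms differ, and it works with one caveat. Write $w_i = w_i^{\mathrm{start}}$. The cubic term of SLERP minus (AM+GM)/normalise is
\[
\tfrac{t(1-t)(2t-1)}{24}\,w_i\bigl(u_i^3 - \textstyle\sum_j w_j u_j^3 - S\,u_i\bigr).
\]
For $N=2$ this is proportional to $(1-2w_1^{\mathrm{start}})^2$, so avoid equal start weights. For Lambert~W versus SLERP at $t=1/2$, the cubic term is $\tfrac{1}{192}\,w_i\bigl(u_i^3 - \sum_j w_j u_j^3\bigr)$. This is nonzero for every $N=2$ start weight.
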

\begin{proof}
See Appendix~\ref{app:taylor_agreement}.
\end{proof}

Table~\ref{tab:agreement} summarises these results.
The discrepancy between SLERP / (AM+GM)/normalise and the Lambert~W value arises because they optimise different objectives: Lambert~W maximises the finite-step retention ratio $r$, while SLERP minimises the quadratic approximation to $-\log r$.
For multi-step interpolation (large $f$), the sub-optimality bound (Theorem~\ref{thm:suboptimality}) guarantees that all three converge; the choice is practical (on-chain cost and implementation simplicity) rather than a matter of loss optimality.

\begin{table*}[h]
  \centering
  \caption{Agreement between midpoint formulae in powers of $u_i = \Delta w_i / w_i^{\mathrm{start}}$.}
  \begin{tabular}{lccc}
      \hline
      Pair of methods & Midpoint & General $t$ & Source of $O(u^3)$ difference \\
      \hline
      SLERP vs (AM+GM)/normalise & \textbf{Exact} & $O(u^3)$ & SLERP coefficients vs $(1{+}u)^t$ \\
      SLERP vs Lambert~W (normalised) & $O(u^3)$ & $O(u^3)$ & Quadratic vs exact KL objective \\
      (AM+GM)/normalise vs Lambert~W (normalised) & $O(u^3)$ & $O(u^3)$ & Quadratic vs exact KL objective \\
      \hline
  \end{tabular}
  \label{tab:agreement}
\end{table*}

\subsection{Sub-optimality of SLERP on the exact KL cost}
\label{sec:suboptimality}

SLERP minimises the quadratic approximation to the per-step KL cost (Corollary~\ref{cor:slerp_optimal}).
Theorem~\ref{thm:suboptimality} bounds the gap to the global optimum on the exact cost.

\begin{restatable}{theorem}{suboptimalitybound}[Sub-optimality bound]
\label{thm:suboptimality}
Let $C_{\mathrm{SLERP}}(f)$ and $C_*(f)$ denote the total exact KL cost
\[
\textstyle\sum_{k=1}^f D_{\mathrm{KL}}\!\left(\v w^{(k)} \,\|\, \v w^{(k-1)}\right)
\]
along the $f$-step SLERP path and the globally optimal $f$-step path, respectively.
For weights bounded away from zero ($\min_i w_i \geq \epsilon > 0$) and $f \geq 4\Omega/\epsilon$:
\begin{equation}
    0 \;\leq\; C_{\mathrm{SLERP}}(f) - C_*(f) \;\leq\; \frac{A\,\Omega^4}{f^3},
    \label{eq:suboptimality}
\end{equation}
where $A$ depends on the weight configuration and $\epsilon$ but not on $f$.
The relative sub-optimality is $O(\Omega^2/f^2)$.
\end{restatable}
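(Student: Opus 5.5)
The plan is a standard perturbation argument around the quadratic minimiser. Write the exact cost as a function of the $f-1$ interior points in Hellinger coordinates, $\v\eta^{(1)},\ldots,\v\eta^{(f-1)}\in S^{N-1}_+$, and decompose each per-step KL divergence as
\[
D_{\mathrm{KL}}\bigl(\v w^{(k)}\,\|\,\v w^{(k-1)}\bigr) = \tfrac12 (\mathrm{d}s_k)^2 + R_k ,
\]
where $\tfrac12(\mathrm{d}s_k)^2 = 2\alpha_k^2$ is the leading-order Fisher--Rao term of Corollary~\ref{cor:quadratic_loss} and Eq~\eqref{eq:sphere_metric}, and $R_k$ is the remainder. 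The lower bound $C_{\mathrm{SLERP}}(f)-C_*(f)\ge 0$ is immediate, because $C_*$ is by definition the minimum over all $f$-step paths, SLERP included. Everything therefore rests on the upper bound.

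\textbf{Step 1: size and structure of the remainder.} I would Taylor-expand $D_{\mathrm{KL}}(p\,\|\,q)$ in $\Delta\v\eta = \v\eta^{(k)}-\v\eta^{(k-1)}$. Since $w_i\ge\epsilon$ and the step satisfies $\|\Delta\v\eta\|\lesssim\Omega/f\le\epsilon/4$, the ratios $w^{(k)}_i/w^{(k-1)}_i$ stay in a compact interval around $1$, and the logarithm is analytic there. This gives
\[
R_k = T_3(\v\eta^{(k-1)};\Delta\v\eta) + O\bigl(\|\Delta\v\eta\|^4/\epsilon^{c}\bigr),
\]
where $T_3$ is a cubic form whose coefficients are smooth in the base point. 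The naive estimate $\sum_k R_k = O(f\cdot\Omega^3/f^3) = O(\Omega^3/f^2)$ is too weak to give $f^{-3}$, so more structure is needed.

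\textbf{Step 2: compare with the optimiser via the quadratic's curvature.} Let $Q(\v\eta)=\sum_k 2\alpha_k^2$ and $C = Q + R$. SLERP minimises $Q$ (Corollary~\ref{cor:slerp_optimal}), and its Hessian in the interior variables is a discrete Laplacian along the geodesic. Its smallest eigenvalue is of order $1/f$ in tangential directions (total length $\Omega$ spread over $f$ steps) and at least of order $1$ transversally. For the optimum $\v\eta^*$ we have
\[
C(\v\eta^{\mathrm{S}}) - C(\v\eta^*) \le \bigl[R(\v\eta^{\mathrm{S}})-R(\v\eta^*)\bigr] - \bigl[Q(\v\eta^*)-Q(\v\eta^{\mathrm{S}})\bigr].
\]
Linearising $R$ about SLERP and bounding by the dual norm yields the estimate
\[
C_{\mathrm{SLERP}}-C_* \lesssim \tfrac12\,\|\nabla R(\v\eta^{\mathrm{S}})\|^2_{H^{-1}} .
\]
So the real task is to bound the gradient of the cubic remainder in the inverse-Laplacian norm.

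\textbf{Step 3 (main obstacle): bounding the gradient.} The gradient of $\sum_k T_3$ with respect to interior point $j$ picks up contributions from steps $j$ and $j+1$. Along SLERP these combine as a difference of two neighbouring cubic-in-step terms, $\partial T_3(\v\eta^{(j)};\Delta_{j+1}) - \partial T_3(\v\eta^{(j-1)};\Delta_j)$, plus a base-point derivative term. Because SLERP is smooth and has constant speed, $\Delta_{j+1}-\Delta_j = O(\Omega^3/f^3)$, so after cancellation each component is about $\Omega^2/f^2\cdot(1/f)$ plus a smooth "forcing" of size $\Omega^2/f^2$ coming from the asymmetry of the KL divergence. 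The smooth forcing behaves like $f^{-2}g(t)$ for a fixed smooth profile $g$. Inverting the discrete Laplacian $f\,\partial_t^2$-type operator on such a profile, the $H^{-1}$ norm squared is of order $f\cdot(\Omega^2/f^2)^2\cdot f^{-1}\cdot\Omega^{0}$ rescaled by the Laplacian eigenvalue $\sim f/\Omega^2$... I expect bookkeeping of these powers to give $\Omega^4/f^3$, with the constant $A$ absorbing $\epsilon$-dependence. This is the delicate point, and I would verify it first in the $N=2$ case using Eq~\eqref{eq:two_token_slerp}, where everything reduces to a scalar sequence $\theta_k$.

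\textbf{Step 4: a posteriori control of the optimum and the relative bound.} Finally I would confirm that $\v\eta^*$ stays in the region where the expansions hold. This should follow from $C_*\le C_{\mathrm{SLERP}} = O(\Omega^2/f)$ together with the quadratic lower bound on step sizes from $\min_i w_i\ge\epsilon$. The relative statement $O(\Omega^2/f^2)$ then follows by dividing by $C_{\mathrm{SLERP}}\approx 2\Omega^2/f$.
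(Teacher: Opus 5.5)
The gap is Step~3, which carries the entire $f^{-3}$ rate and which you leave as bookkeeping. You are right that the naive estimate is too weak and that the relevant quantity is $\|\nabla R\|_{H^{-1}}$, but the numbers you feed Step~3 are wrong in a way that changes the answer. The Hessian $H$ of $Q$ couples consecutive interior points. To second order, $Q$ changes by a multiple of $\sum_k(\tau_k-\tau_{k-1})^2$ under tangential displacements, and by a multiple of $\sum_k(\delta_k-\delta_{k-1})^2-(\Omega/f)^2\sum_k\delta_k^2$ under transverse ones. Both are discrete Laplacians (the second shifted by a curvature term), so $\lambda_{\min}(H)$ is of order $1/f^2$ in every direction, about $4(\pi^2-\Omega^2)/f^2$. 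It is not $1/f$ tangentially, and certainly not $O(1)$ transversally. With that spectrum, the uncancelled forcing of size $\Omega^2/f^2$ per component that you anticipate would give $\|\nabla R\|^2_{H^{-1}}\approx f\cdot f^2\cdot(\Omega^2/f^2)^2=\Omega^4/f$, which is not the theorem.

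The paper's proof has your skeleton. Your $Q=\sum_k 2\alpha_k^2$, which SLERP minimises exactly, is actually cleaner than its base-point form $\sum_i(\Delta w_i)^2/(2w_i^{(k-1)})$. The paper reaches $\Omega^4/f^3$ by pairing an uncancelled gradient bound $16\Omega^2/(f^2\epsilon^2)$ with $\lambda_{\min}\ge 3/2$, obtained by treating the Hessian as block-diagonal. But $\partial^2 D_{\mathrm{KL}}(\v w^{(k)}\|\v w^{(k-1)})/\partial w_i^{(k)}\partial w_i^{(k-1)}=-1/w_i^{(k-1)}\neq 0$, so that Hessian is also a discrete Laplacian, and you cannot borrow that step.

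What closes the argument is that the forcing you expect does not exist. Using $\sum_i\eta_i\Delta_i=-\tfrac12|\v\Delta|^2$ on the sphere, one gets $D_{\mathrm{KL}}(\v w^{(k)}\|\v w^{(k-1)})=2\alpha_k^2+\tfrac23\sum_i\Delta_{k,i}^3/\eta_i^{(k-1)}+O(|\v\Delta_k|^4/\epsilon)$, with $\v\Delta_k=\v\eta^{(k)}-\v\eta^{(k-1)}$. The $\v\Delta$-gradient of the cubic term, $2\Delta_{k,i}^2/\eta_i^{(k-1)}$, is even in $\v\Delta$. It enters $\partial/\partial\eta_i^{(j)}$ with opposite signs from steps $j$ and $j+1$. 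Along SLERP, $\v\Delta_{j+1}-\v\Delta_j=-4\sin^2(\Omega/(2f))\,\v\eta^{(j)}$ has size $(\Omega/f)^2$, not $(\Omega/f)^3$ as you wrote. So these two contributions cancel to $O(\Omega^3/(\epsilon f^3))$, the same order as the base-point term $-\tfrac23\Delta_{j+1,i}^3/w_i^{(j)}$; the KL asymmetry enters only at this order. For $N=2$ the gradient is $-2c_3'(\theta_j)\delta^3+O(\delta^4)$, with $\delta=\pm\Omega/f$ and $c_3(\theta)=\tfrac43\cot 2\theta$. This is a smooth one-signed profile, so the $f^{-3}$ rate is sharp. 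Hence
$\|\nabla R\|^2_{H^{-1}}\le\|\nabla R\|^2/\lambda_{\min}(H)\lesssim f\cdot N\Omega^6\epsilon^{-2}f^{-6}\cdot f^2=N\Omega^6/(\epsilon^2 f^3)\le(\pi/2)^2N\Omega^4/(\epsilon^2 f^3)$.

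Step~4 also needs more than you give it. The bound $C_*\le C_{\mathrm{SLERP}}$ only yields $\sum_k|\v\Delta^*_k|^2\lesssim\Omega^2/f$, i.e.\ optimal steps as large as $\Omega/\sqrt f$. That does not place the optimiser in the region where $\nabla^2 C\succeq cH$ holds. A convenient fix is that $C$ is convex in weight coordinates, because $D_{\mathrm{KL}}$ is jointly convex. Then showing $C>C_{\mathrm{SLERP}}$ on the boundary of an $H$-ball around SLERP, of radius a large multiple of $\|\nabla R\|_{H^{-1}}$, traps the global minimiser inside it.
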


\begin{proof}
Deferred to Appendix~\ref{app:suboptimality_proof}. The explicit constant is $A = 86\,N/\epsilon^4$.
\end{proof}
The bound is qualitatively informative (rate in $f$ and $\Omega$) rather than quantitatively tight: the constant $A$ is conservative, and the numerics show much smaller gaps than the bound predicts.
For the $N=3$ setup ($\epsilon = 0.05$), the condition $f \geq 4\Omega/\epsilon \approx 42$ is easily satisfied at any practical step count.

Expanding the per-step KL cost beyond the quadratic approximation that SLERP minimises (Appendix~\ref{app:suboptimality_proof}), the leading remainder is $R_k = -\frac{1}{6}\sum_i w_i (u_i^{(k)})^3 + O(u^4)$, which encodes the asymmetry of the KL divergence: it is negative when weights increase ($u_i > 0$, so the true KL cost is \emph{less} than the quadratic) and positive when weights decrease.
SLERP is the geodesic of the symmetric Fisher--Rao metric ($\alpha=0$) and ignores this asymmetry entirely.
The exact KL-optimal path would take slightly larger steps where weights increase and slightly smaller ones where they decrease. 
But the per-step gain is only $O(u^2)$, accumulating to $O(\Omega^4/f^3)$ (Theorem~\ref{thm:suboptimality}).
A $[1,1]$ Pad\'e approximant resums this cubic remainder into a rational function matching the exact KL cost through $O(u^4)$ (Appendix~\ref{app:pade}).

For the $N=3$ setup of \S\nolinebreak\ref{sec:experiments} ($\Omega \approx 0.52$), the bound gives relative sub-optimality $O(10^{-4})$ at $f=50$, which is what the brute-force numerics show.

\section{Weights change with stochastic prices}
\label{sec:stochastic}

The preceding section models prices as constant over interpolation steps.
We now extend to stochastic prices.
We model each asset price as a driftless geometric Brownian motion, $\mathrm{d}P_i/P_i = \sigma_i\,\mathrm{d}W_i$~\cite{black1973}, so prices are martingales, and show that the core results are unaffected.

\paragraph{Price-independence of the retention ratio}

At time~$t_k$, the pool has weights $\v w^{(k)}$ and is in no-arbitrage equilibrium at prices $\v P^{(k)}$.
At~$t_{k+1}$, the weights update to $\v w^{(k+1)}$ and prices change to $\v P^{(k+1)}$.
Equating the two expressions for the G3M invariant under the new weights, before arbitrage (old reserves, new weight exponents) and after (new equilibrium at the updated prices), yields a multiplicative factorisation of the value change:
\begin{equation}
\frac{V_{k+1}}{V_k} = \underbrace{\prod_i \!\left(\frac{w_i^{(k)}}{w_i^{(k+1)}}\right)^{\!w_i^{(k+1)}}}_{r} \;\cdot\; \underbrace{\prod_i \!\left(\frac{P_i^{(k+1)}}{P_i^{(k)}}\right)^{\!w_i^{(k+1)}}}_{\text{price return at new weights}}
\label{eq:value_factorisation}
\end{equation}

\begin{proposition}[Price-independence]
\label{prop:price_independence}
The retention ratio $r = \prod_i (w_i^{(k)}/w_i^{(k+1)})^{w_i^{(k+1)}}$ depends only on the weight change, not on the prices.
In particular, the KL divergence characterisation of Theorem~\ref{thm:kl_divergence} holds identically under driftless GBM.
\end{proposition}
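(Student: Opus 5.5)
The plan is to derive the factorisation in Eq~\eqref{eq:value_factorisation} directly from the G3M equilibrium conditions, then read off that the first factor contains no prices. Let the pool at $t_k$ be in equilibrium with reserves $\v R^{(k)}$, weights $\v w^{(k)}$ and prices $\v P^{(k)}$. For a G3M, equilibrium means the value share of each asset equals its weight, $P_i^{(k)} R_i^{(k)} = w_i^{(k)} V_k$. I will use this relation at both ends of the step.

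\emph{Step 1.} Express the post-arbitrage reserves in terms of value and prices. After the weight update and arbitrage at prices $\v P^{(k+1)}$, the new equilibrium gives $R_i^{(k+1)} = w_i^{(k+1)} V_{k+1}/P_i^{(k+1)}$.

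\emph{Step 2.} Use invariant conservation under the new weight exponents. With zero fees and perfect arbitrage, the arbitrage trade preserves $\prod_i R_i^{w_i^{(k+1)}}$. This gives
\[
\prod_i \bigl(w_i^{(k)} V_k/P_i^{(k)}\bigr)^{w_i^{(k+1)}} = \prod_i \bigl(w_i^{(k+1)} V_{k+1}/P_i^{(k+1)}\bigr)^{w_i^{(k+1)}}.
\]
Because $\sum_i w_i^{(k+1)} = 1$, the $V$ factors come out with exponent one. Solving for $V_{k+1}/V_k$ then yields exactly the product of $r$ and the price-return factor.

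\emph{Step 3.} Conclude. The factor $r$ is, by inspection, a function of $\v w^{(k)}$ and $\v w^{(k+1)}$ only. Taking $-\log r$ reproduces Theorem~\ref{thm:kl_divergence} verbatim, whatever the realised price path. The GBM assumption enters only through the second factor, so the statement holds pathwise and a fortiori under driftless GBM. As a consistency check, setting $\v P^{(k+1)} = \v P^{(k)}$ recovers Proposition~\ref{prop:retention}.

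The main obstacle is justifying that arbitrage in the presence of simultaneous weight and price changes is a single invariant-preserving trade landing at the new equilibrium. I would handle this by arguing that within the block both changes are applied before any trade, so the arbitrageur faces a fixed-weight G3M mispriced relative to $\v P^{(k+1)}$. Standard G3M arbitrage theory then gives the unique equilibrium on the level set of the new-weight invariant, and the argument reduces to the constant-price case of Eq~\eqref{eq:reserve_change} with the target prices substituted.
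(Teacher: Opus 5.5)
Your proposal is correct and follows the paper's route. The paper likewise obtains Eq~\eqref{eq:value_factorisation} by equating the new-weight invariant before arbitrage (old reserves) and after (new equilibrium, using $R_i P_i = w_i V$ at both ends), then observes that prices appear only in the second factor; your explicit algebra and your remark that the result holds pathwise, and hence under driftless GBM, fill in steps the paper states in prose.
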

\begin{proof}
Prices appear only in the second factor of Eq~\eqref{eq:value_factorisation}; they cancel completely within~$r$.
\end{proof}

In log form, the value change decomposes as
\[
\log\!\frac{V_{k+1}}{V_k}
  = -D_{\mathrm{KL}}\!\bigl(\v w^{(k+1)} \big\| \v w^{(k)}\bigr)
    + \sum_i w_i^{(k+1)}\log\!\frac{P_i^{(k+1)}}{P_i^{(k)}}\,.
\]
The first term is the rebalancing cost from Theorem~\ref{thm:kl_divergence}; the second is a weighted log-price return evaluated at the \emph{new} weights.

\paragraph{The cross term and its telescoping}

Comparing the pool that changes weights to one that holds weights fixed, the effective cost of the weight change is
\[
-\log \frac{V'_{\text{change}}}{V'_{\text{no change}}} = D_{\mathrm{KL}}(\v w^{(k+1)} \| \v w^{(k)}) - \sum_i \Delta w_i \log \frac{P_i'}{P_i}
\]
where $\Delta w_i = w_i^{(k+1)} - w_i^{(k)}$.
The second term couples the weight change to the price change.
Increasing weight in an appreciating asset is cheaper than the KL cost alone because the price move partially offsets the rebalancing.  For a depreciating asset, it is more expensive.

Over the full $f$-step trajectory, the total expected log-loss under driftless GBM is
\begin{equation}
\mathbb{E}\!\left[-\log \frac{V_T}{V_0}\right] = \sum_{k=1}^f D_{\mathrm{KL}}(\v w^{(k)} \| \v w^{(k-1)}) + \frac{\Delta t}{2} \sum_{k=1}^f \sum_i w_i^{(k)} \sigma_i^2
\label{eq:total_loss_stochastic}
\end{equation}
where $\mathbb{E}[\log(P_i'/P_i)] = -\sigma_i^2 \Delta t / 2$ under driftless GBM.

\begin{proposition}[Cross-term telescoping]
\label{prop:telescoping}
The cross term between weight and price changes does not affect the optimal interpolation path.
\end{proposition}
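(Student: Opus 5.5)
The plan is to write the total log-value change over the $f$-step trajectory as a sum of the per-step decomposition above and then take expectations. Summing the log form of Eq~\eqref{eq:value_factorisation} over $k=1,\dots,f$ gives
\[
-\log\frac{V_T}{V_0} = \sum_{k=1}^f D_{\mathrm{KL}}\bigl(\v w^{(k)}\,\big\|\,\v w^{(k-1)}\bigr) - \sum_{k=1}^f\sum_i w_i^{(k)}\,\ell_i^{(k)},
\]
where $\ell_i^{(k)} = \log\bigl(P_i^{(k)}/P_i^{(k-1)}\bigr)$. I then split $w_i^{(k)} = w_i^{(0)} + \sum_{j\le k}\Delta w_i^{(j)}$. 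The price term becomes a fixed-weight benchmark $\sum_i w_i^{(0)}\log(P_i^{(f)}/P_i^{(0)})$ plus the cross term $\sum_k\sum_i \Delta w_i^{(k)}\log\bigl(P_i^{(f)}/P_i^{(k-1)}\bigr)$, obtained by exchanging the order of summation. This is the ``telescoping'': each weight increment is multiplied by the log-price change accumulated from the moment it is made to the end.

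The key step is taking expectations under driftless GBM. The interpolation path $\{\v w^{(k)}\}$ is chosen in advance, so it is deterministic and independent of prices. The log-increments are independent with $\mathbb{E}[\ell_i^{(k)}] = -\sigma_i^2\Delta t/2$. Hence the expected cross term is
\[
-\frac{\Delta t}{2}\sum_i \sigma_i^2 \sum_k (f-k+1)\,\Delta w_i^{(k)}.
\]
Equivalently, the expected price term is $\frac{\Delta t}{2}\sum_k\sum_i w_i^{(k)}\sigma_i^2$, which recovers Eq~\eqref{eq:total_loss_stochastic}. The martingale property kills any drift-driven coupling, so only this variance-drag piece survives.

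I then need to argue that this piece does not change the optimiser. It is linear in the weights, depending on them only through $\sum_k w_i^{(k)}$ weighted by $\sigma_i^2$, and its size is $O(\Delta t)$ per step. Over an interpolation spanning a short horizon, or with a fixed total duration $f\Delta t$, it contributes a term that is at most first order in the perturbation of the path. This is negligible relative to the $O(\Omega^2/f)$ KL term, and to first order it does not depend on how the path is shaped between the fixed endpoints. The cleanest route is to note that for SLERP-type symmetric paths, $\sum_k w_i^{(k)}$ differs from the linear-interpolation value only at $O(\Omega^2)$, so the penalty is $O(\sigma^2\Delta t\,\Omega^2 f)$. That contribution is independent of path shape at leading order, so the minimiser remains the constant-speed geodesic of Corollary~\ref{cor:slerp_optimal}.

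The main obstacle is this last step. Strictly, the $\sigma^2$-weighted linear term does tilt the exact optimum slightly toward low-volatility assets. So the statement has to be read either at leading order (the cross term is path-independent in expectation up to the endpoint-determined telescoped sum) or in the regime $\sigma_i^2\Delta t \ll \Omega^2/f^2$. I would first try to show exact path-independence by rewriting $\sum_k w_i^{(k)}$ via the telescoped form. If that fails, I would state the result as a bound, in the same spirit as Theorem~\ref{thm:suboptimality}, that the induced shift in the optimal path is $O(\sigma^2\Delta t)$.
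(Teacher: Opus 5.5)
There is a genuine gap, and it comes from what you chose to call the cross term. Your algebra is correct, but you decompose against a benchmark that holds $\v w^{(0)}$ for the whole horizon. That pairs each increment $\Delta w_i^{(k)}$ with the price move accumulated to the end, $\log\bigl(P_i^{(f)}/P_i^{(k-1)}\bigr)$, whose expectation $-(f-k+1)\sigma_i^2\Delta t/2$ depends on $k$, so nothing can telescope. Your expected cross term equals $-\frac{\Delta t}{2}\sum_i\sigma_i^2\sum_{m=1}^f\bigl(w_i^{(m)}-w_i^{(0)}\bigr)$, which is genuinely path-dependent whenever the $\sigma_i$ differ. Your plan to prove exact path-independence of $\sum_k w_i^{(k)}$ therefore cannot succeed. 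The leading-order rescue in your third paragraph also fails. A potential linear in $\v w$ with unequal coefficients $\sigma_i^2$ has a nonzero first variation on the simplex, so it is not independent of path shape at leading order. Balancing that first variation (of size $\sim f\Delta t\,\sigma^2\delta$) against the second-order change $\sim\delta^2/f$ of the KL sum gives a shift in the optimal path of $\delta\sim f^2\Delta t\,\sigma^2 = fT\sigma^2$, not $O(\sigma^2\Delta t)$.

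The missing idea is the per-step comparison the paper sets up just before the proposition. Step $k$ is compared with a pool that holds the \emph{previous} weights $\v w^{(k-1)}$ through that block. The cross term is then $-\sum_i\Delta w_i^{(k)}\log\bigl(P_i^{(k)}/P_i^{(k-1)}\bigr)$, which pairs each weight increment only with the same-block price increment. The path is deterministic and the log-increments have the $k$-independent mean $-\sigma_i^2\Delta t/2$, so its expectation is $\frac{\Delta t}{2}\sum_i\sigma_i^2\Delta w_i^{(k)}$. The sum over $k$ therefore telescopes exactly to $\frac{\Delta t}{2}\sum_i\sigma_i^2\bigl(w_i^{(f)}-w_i^{(0)}\bigr)$. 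This is the Abel-summation identity $\sum_k\sum_i w_i^{(k)}\sigma_i^2=\sum_k\sum_i w_i^{(k-1)}\sigma_i^2+\sum_i\bigl(w_i^{(f)}-w_i^{(0)}\bigr)\sigma_i^2$ used in the paper's proof. The leftover $\frac{\Delta t}{2}\sum_k\sum_i w_i^{(k-1)}\sigma_i^2$ is not part of the cross term at all. It is the variance-drag (or LVR) potential, which the paper explicitly treats as path-dependent and handles separately in \S\ref{sec:optimal_f}. The proposition only claims that the weight--price coupling contributes an endpoint-determined constant, and that holds exactly, with no leading-order or small-$\sigma$ caveat.
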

\begin{proof}
The second sum in Eq~\eqref{eq:total_loss_stochastic} uses the new weights $w_i^{(k)}$ at each step.
Rewriting in terms of the old weights $w_i^{(k-1)}$:
\[
\sum_{k=1}^f \sum_i w_i^{(k)} \sigma_i^2 = \sum_{k=1}^f \sum_i w_i^{(k-1)} \sigma_i^2 + \sum_i \bigl(w_i^{(f)} - w_i^{(0)}\bigr) \sigma_i^2.
\]
The correction is endpoint-only, so it drops out of the path variation.
\end{proof}

\begin{corollary}[Survival of the constant-price results]
\label{cor:survival}
Under driftless GBM prices, the following results of \S\ref{sec:background}--\S\ref{sec:suboptimality} hold without modification for the rebalancing cost component:
(i)~KL divergence characterisation of the per-step loss (Theorem~\ref{thm:kl_divergence});
(ii)~SLERP optimality for the leading-order expansion (Corollary~\ref{cor:slerp_optimal});
(iii)~the (AM+GM)/normalise identity (Theorem~\ref{thm:slerp_amgm});
(iv)~recursive bisection (Corollary~\ref{cor:bisection});
(v)~the sub-optimality bound $O(\Omega^4/f^3)$ (Theorem~\ref{thm:suboptimality}).
\end{corollary}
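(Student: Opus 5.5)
The plan is to split the expected total log-loss in Eq~\eqref{eq:total_loss_stochastic} into a \emph{rebalancing} part, $\sum_k D_{\mathrm{KL}}(\v w^{(k)}\|\v w^{(k-1)})$, and a \emph{variance} part, $\frac{\Delta t}{2}\sum_k\sum_i w_i^{(k)}\sigma_i^2$. I would then show that each of (i)--(v) is a statement only about the rebalancing part, and that the variance part does not change which path is optimal. This reduces the corollary to results already proved under constant prices.

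\textbf{Step 1: item (i).} Proposition~\ref{prop:price_independence} shows that the factor $r$ in Eq~\eqref{eq:value_factorisation} contains no prices. So $-\log r = D_{\mathrm{KL}}(\v w^{(k+1)}\|\v w^{(k)})$ holds pathwise, for every price realisation. The proof of Theorem~\ref{thm:kl_divergence} therefore goes through verbatim, giving~(i).

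\textbf{Step 2: separating the objective.} By Proposition~\ref{prop:telescoping}, the variance part equals $\frac{\Delta t}{2}\sum_{k=1}^{f}\sum_i w_i^{(k-1)}\sigma_i^2$ plus a correction that depends only on the fixed endpoints $\v w^{(0)},\v w^{(f)}$. I would check explicitly that this endpoint correction is the same for every admissible $f$-step path. I would also argue that the remaining bulk term does not interact with the rebalancing cost: the first-variation condition with respect to each intermediate $\v w^{(k)}$ separates into the KL stationarity condition plus a term linear in $\v w^{(k)}$ with the constant coefficient $\boldsymbol\sigma^2$. The corollary's phrase ``for the rebalancing cost component'' means precisely that (ii)--(v) are asserted for the functional $\sum_k D_{\mathrm{KL}}$. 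That functional is identical to the constant-price cost, so it suffices to note that it is the same function of the weight sequence.

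\textbf{Step 3: items (ii)--(v).} Once the rebalancing functional is identified with the constant-price cost, each item follows from its earlier proof.
\begin{itemize}
\item (ii) Corollary~\ref{cor:slerp_optimal} uses only the leading-order expansion of that functional, via Corollary~\ref{cor:quadratic_loss}.
\item (iii) and (iv) Theorem~\ref{thm:slerp_amgm} and Corollary~\ref{cor:bisection} are purely geometric identities on $S^{N-1}_+$. They involve no prices at all.
\item (v) Theorem~\ref{thm:suboptimality} bounds $C_{\mathrm{SLERP}}(f)-C_*(f)$, where both quantities are defined from the same KL sum. The bound transfers with the same constant $A$ and the same condition $f\ge 4\Omega/\epsilon$.
\end{itemize}

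\textbf{Main obstacle.} The difficult step is Step 2: making rigorous that the variance term ``drops out of the path variation''. After telescoping, a bulk term $\sum_k\sum_i w_i^{(k-1)}\sigma_i^2$ still remains, and in general it is path-dependent. I would first try to restrict the claim, as the statement does, to the rebalancing cost component. The variance term would then be handled separately as an additive LVR-type exposure; it is linear in the weights, so it affects only the step-count trade-off of \S\ref{sec:optimal_f}, not the geometric identities. If a statement about the full objective were needed, I would bound the variance term's effect on the optimal path by $O(\Delta t\,\max_i\sigma_i^2)$. I would then show that this perturbation shifts the optimum only at an order already absorbed in the $O(\Omega^4/f^3)$ bound.
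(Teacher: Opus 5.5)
Your proposal is correct and follows the paper's argument in substance. Item (i) comes from the price-independence of $r$ (Proposition~\ref{prop:price_independence}). Items (ii)--(v) hold because they concern the functional $\sum_k D_{\mathrm{KL}}(\v w^{(k)}\|\v w^{(k-1)})$ and the Hellinger geometry, and neither involves prices.

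The one real difference is how the variance term is handled. The paper's proof cites Proposition~\ref{prop:telescoping} to claim that the price cross-term is path-independent. You correctly observe that the telescoping only trades new-weight exponents for old-weight exponents, at the cost of an endpoint term. It leaves the bulk term $\frac{\Delta t}{2}\sum_k\sum_i w_i^{(k-1)}\sigma_i^2$, which is path-dependent unless all $\sigma_i$ coincide. The paper concedes this in the paragraph after the corollary, where it speaks of a ``path-dependent potential''. Your reading, that (ii)--(v) are asserted only for the rebalancing component, makes the telescoping step unnecessary and is the sound way to obtain the corollary.

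Two side remarks in your write-up are wrong and should be dropped.
\begin{itemize}
\item In Step~2 you say the bulk term does not interact with the rebalancing cost. In fact its gradient in $\v w^{(k)}$ is the constant vector $\frac{\Delta t}{2}(\sigma_1^2,\ldots,\sigma_N^2)$. The simplex Lagrange multiplier absorbs it only in the equal-volatility case; otherwise it shifts the stationary point.
\item Your fallback would fail. Balancing the discrete kinetic term against a linear potential gives an optimal path deflection of order $f^2\Delta t\,\sigma^2$ and a cost gain of order $f^3(\Delta t\,\sigma^2)^2$. That gain grows with $f$ and is not absorbed in $O(\Omega^4/f^3)$. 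This is why the paper treats the potential separately, through $\lambda$ and $\mu$ in \S\ref{sec:optimal_f}.
\end{itemize}
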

\begin{proof}
Immediate from Propositions~\ref{prop:price_independence} and~\ref{prop:telescoping}: the retention ratio depends only on weights, and the price cross-term is path-independent.\footnote{Note that the telescoping in Proposition~\ref{prop:telescoping} is an Abel summation identity. An alternative ordering of weight and price changes is discussed in Appendix~\ref{app:ordering}.}
Items (iii)--(v) are algebraic properties of the Hellinger embedding that do not involve prices.
\end{proof}

The path optimisation therefore reduces to minimising the rebalancing cost $\sum_k D_{\mathrm{KL}}$ plus a path-dependent potential from the variance-drag or LVR term.
SLERP remains the optimal path for the rebalancing cost component; the potential introduces a correction analysed in \S\ref{sec:optimal_f}.
The precise form of this potential depends on the LP's objective; we distinguish the two main cases below.

\paragraph{Log-return maximisation}

Under log-return maximisation, the potential is the variance drag $V_{\log}(\v w) = \frac{1}{2}\sum_i w_i \sigma_i^2$, which is linear on the simplex; if all assets have the same volatility, it is constant and SLERP is exactly optimal regardless of the number of steps.

\paragraph{Expected-value maximisation}

Under expected-value maximisation, since GBM increments are independent across blocks, the expected value factorises as
\[
\mathbb{E}[V_T/V_0] = \prod_k r_k \cdot \prod_k
\exp\bigl(-\ell(\v w^{(k)})\,\Delta t\bigr)
\]
where $\ell$ is the LVR rate~\cite{lvr,lvr_arb}

\begin{equation}
\ell(\v w) = \tfrac{1}{2}\!\left[\textstyle\sum_i w_i \sigma_i^2 - \v w^T \Sigma \v w\right]
\label{eq:lvr}
\end{equation}
with $\Sigma_{ij} = \sigma_i \sigma_j \rho_{ij}$ (Appendix~\ref{app:lvr_potential} gives the moment-generating-function derivation).
For uncorrelated equal-volatility assets, $\ell(\v w) = \frac{\sigma^2}{2}(1 - \|\v w\|^2)$: maximised at the centroid and zero at the vertices.
Positive correlations reduce $\ell$; in the equal-correlation case $\rho_{ij} = \rho$, the LVR rate becomes $\ell = \frac{\sigma^2}{2}(1-\rho)(1 - \|\v w\|^2)$, and the rest of the analysis goes through with $\sigma^2 \to \sigma^2(1-\rho)$.

We focus on expected-value maximisation for the remainder; all results apply to log-return maximisation with $\ell$ replaced by $V_{\log}$.

\section{Optimal step count under LVR}
\label{sec:optimal_f}

Under constant prices, the total rebalancing cost $2\Omega^2/f$ can be made arbitrarily small by increasing the step count~$f$, and the optimal strategy is simply to maximise~$f$.
Under GBM, however, each step occupies one block and thereby exposes the pool to one additional block of LVR, creating a tradeoff between rebalancing cost and volatility exposure.

\paragraph{The modified geodesic equation}

In the continuum limit, the combined cost of rebalancing and LVR along a path $\v w(s)$, $s \in [0,1]$, is the functional
\[
\mathcal{C}[\v w(\cdot)] = \frac{1}{2f} \int_0^1 g_{ij} \dot{w}^i \dot{w}^j \, ds + T \int_0^1 \ell(\v w(s)) \, ds
\]
where $T = f\,\Delta t_{\mathrm{block}}$ is the total rebalancing duration.
This is a Lagrangian mechanics problem: the Fisher--Rao term is the kinetic energy and $\ell$ is the potential.

\begin{restatable}{proposition}{modifiedgeodesic}[Modified geodesic equation]
\label{prop:modified_geodesic}
The Euler--Lagrange equation for $\mathcal{C}$ is
\begin{equation}
\frac{1}{f}\!\left(\ddot{w}^i + \Gamma^i_{jk} \dot{w}^j \dot{w}^k\right) = T\, g^{ij} \frac{\partial \ell}{\partial w^j}
\label{eq:modified_geodesic}
\end{equation}
where $\Gamma^i_{jk}$ are the Christoffel symbols of the Fisher--Rao metric~\cite{hobson2006}.
\end{restatable}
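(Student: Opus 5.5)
I will derive the Euler--Lagrange equation directly from the Lagrangian
\[
L(\v w,\dot{\v w}) = \frac{1}{2f}\, g_{ij}(\v w)\,\dot w^i\dot w^j + T\,\ell(\v w),
\]
treating the path $\v w(s)$ with fixed endpoints $\v w(0)=\v w^{\mathrm{start}}$, $\v w(1)=\v w^{\mathrm{end}}$. To avoid handling the simplex constraint with a Lagrange multiplier, I will work in intrinsic coordinates on the manifold. Concretely, take any local chart (e.g.\ the first $N-1$ weights), and let $g_{ij}$ denote the induced Fisher--Rao metric. The equation is tensorial, so the final form is chart-independent.

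The computation proceeds in three steps. First, compute $\partial L/\partial \dot w^k = \frac{1}{f} g_{kj}\dot w^j$ and differentiate in $s$ to obtain
\[
\frac{d}{ds}\frac{\partial L}{\partial \dot w^k} = \frac{1}{f}\left(g_{kj}\ddot w^j + \partial_i g_{kj}\,\dot w^i\dot w^j\right).
\]
Second, compute
\[
\frac{\partial L}{\partial w^k} = \frac{1}{2f}\,\partial_k g_{ij}\,\dot w^i\dot w^j + T\,\partial_k\ell.
\]
Third, form $\frac{d}{ds}\partial L/\partial\dot w^k - \partial L/\partial w^k = 0$. I will symmetrise $\partial_i g_{kj}\dot w^i\dot w^j = \tfrac12(\partial_i g_{kj}+\partial_j g_{ki})\dot w^i\dot w^j$, which lets me recognise the Christoffel symbol of the first kind, $\Gamma_{kij} = \tfrac12(\partial_i g_{kj}+\partial_j g_{ki}-\partial_k g_{ij})$. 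This gives
\[
\frac1f\left(g_{kj}\ddot w^j + \Gamma_{kij}\dot w^i\dot w^j\right) = T\,\partial_k \ell.
\]
Raising the index with $g^{ik}$ yields Eq~\eqref{eq:modified_geodesic}.

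The sign is fixed by the stationarity condition $\frac{d}{ds}\partial_{\dot w}L = \partial_w L$, and the potential enters with a plus sign in $L$, matching the stated right-hand side. This is the usual "kinetic plus potential" form the paper adopts. The equation is a stationarity condition, not a minimisation claim, so the sign convention needs no further justification.

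I expect the main obstacle to be the constraint $\sum_i w_i=1$. The paper writes $g_{ij}=\delta_{ij}/w_i$ on the ambient space $\mathbb{R}^N_+$, and that ambient metric's Christoffel symbols ($\Gamma^i_{ii}=-1/(2w_i)$) are not those of the simplex. I plan to interpret $\Gamma$ and $\partial\ell/\partial w^j$ as intrinsic quantities on $\Delta^{N-1}$. As a consistency check, I will verify the result in Hellinger coordinates, where the metric is $4\times$ the round metric on $S^{N-1}_+$. There the equation should reduce to
\[
\ddot{\v\eta} + \|\dot{\v\eta}\|^2\v\eta = \tfrac{fT}{4}\,P_{\v\eta}\nabla_{\v\eta}\ell,
\]
with $P_{\v\eta}$ the tangential projection. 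When $\ell\equiv 0$ this recovers the great circles (SLERP) of Corollary~\ref{cor:slerp_optimal}, confirming the left-hand side.
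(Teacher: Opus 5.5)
Your derivation is correct, but it takes a different route from the paper. The paper never passes to an intrinsic chart. It writes the Euler--Lagrange equation component by component in the ambient coordinates $w_1,\dots,w_N$ with the diagonal metric $g_{ij}=\delta_{ij}/w_i$, multiplies through by $w_i$, and reads off $\Gamma^i_{jk}=-\delta_{ij}\delta_{ik}/(2w_i)$ and $g^{ij}=w_i\delta_{ij}$. (Its intermediate display has a sign slip: $\partial L/\partial w^i$ contains $-\tfrac{1}{2f}(\dot w^i)^2/w_i^2$, and only with that sign do the kinetic terms combine as claimed.) This gives fully explicit, diagonal ingredients in a few lines, but it treats the $w_i$ as independent. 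So, as you noted, these are the Christoffel symbols of the flat ambient metric. With $\ell=0$ the paper's equation reduces to $\ddot\eta_i=0$, which gives straight lines in Hellinger coordinates that leave the sphere. SLERP appears only once a multiplier term $\lambda(s)\,w_i$ enforcing $\sum_i w_i=1$ is added on the right-hand side. Your general derivation, via Christoffel symbols of the first kind in an intrinsic chart, builds the constraint in and is chart-independent. It is the reading under which three things are mutually consistent: the proposition, the claim that its free solutions are SLERP, and the two-token pendulum, which the paper in effect derives intrinsically in the single chart $\theta$. The price is that nothing is diagonal any more. In the chart $(w_1,\dots,w_{N-1})$ one has $g_{ab}=\delta_{ab}/w_a+1/w_N$ and $g^{ab}=w_a\delta_{ab}-w_aw_b$. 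Rewritten in ambient components, your equation is $\tfrac1f\bigl(\ddot w_i-\tfrac{\dot w_i^2}{2w_i}+\tfrac{w_i}{2}\sum_j\tfrac{\dot w_j^2}{w_j}\bigr)=T\,w_i\bigl(\partial_i\ell-\sum_j w_j\,\partial_j\ell\bigr)$. That is the paper's diagonal form plus exactly the constraint-force terms. Your Hellinger check $\ddot{\v\eta}+\|\dot{\v\eta}\|^2\v\eta=\tfrac{fT}{4}P_{\v\eta}\nabla_{\v\eta}\ell$ is the same equation written on the sphere, and it is correct.
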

\begin{proof}
Deferred to Appendix~\ref{app:modified_geodesic_proof}.
\end{proof}

\noindent When the right-hand side vanishes ($T\sigma^2 = 0$), this reduces to the free geodesic equation whose solution is SLERP; the LVR potential acts as a forcing term that deflects the geodesic away from high-$\ell$ regions of the simplex.

\begin{restatable}{corollary}{pendulum}[Two-token pendulum]
\label{cor:pendulum}
For two tokens with one volatile asset ($\sigma > 0$) and a num\'eraire ($\sigma_2 = 0$), the Hellinger coordinate $\theta = \arcsin(\sqrt{w})$ satisfies
\begin{equation}
\ddot{\theta} = \frac{fT\sigma^2}{16}\sin(4\theta)
\label{eq:pendulum}
\end{equation}
\end{restatable}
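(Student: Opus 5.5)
The plan is to reduce the functional $\mathcal{C}$ to one dimension using the Hellinger angle, and then write down its scalar Euler--Lagrange equation directly. I expect this to be more transparent than specialising the Christoffel symbols in Proposition~\ref{prop:modified_geodesic}, although the two routes must agree. Throughout, the weights are $(w,1-w)$ with $w=\sin^2\theta$, so $\v\eta=(\sin\theta,\cos\theta)$ lies on the quarter circle of the two-token case.

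\emph{Step 1: the kinetic term.} On the quarter circle, $\|\dot{\v\eta}\|^2=\dot\theta^2$. By Eq~\eqref{eq:sphere_metric}, $g_{ij}\dot w^i\dot w^j = 4\dot\theta^2$. The kinetic part of $\mathcal{C}$ is therefore
\[
\frac{1}{2f}\int_0^1 4\dot\theta^2\,ds=\frac{2}{f}\int_0^1\dot\theta^2\,ds.
\]

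\emph{Step 2: the potential.} Put $\sigma_1=\sigma$, $\sigma_2=0$ into Eq~\eqref{eq:lvr}. Then $\Sigma=\mathrm{diag}(\sigma^2,0)$, so
\[
\ell=\tfrac12\bigl[w\sigma^2-w^2\sigma^2\bigr]=\tfrac{\sigma^2}{2}\,w(1-w).
\]
In Hellinger coordinates, $w(1-w)=\sin^2\theta\cos^2\theta=\tfrac14\sin^2(2\theta)$, which gives $\ell(\theta)=\tfrac{\sigma^2}{8}\sin^2(2\theta)$. The correlation drops out because the num\'eraire has zero volatility.

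\emph{Step 3: Euler--Lagrange.} The Lagrangian is $L=\tfrac{2}{f}\dot\theta^2+T\ell(\theta)$. The equation $\tfrac{d}{ds}\,\partial L/\partial\dot\theta=\partial L/\partial\theta$ gives $\tfrac{4}{f}\ddot\theta = T\,\ell'(\theta)$. We compute
\[
\ell'(\theta)=\tfrac{\sigma^2}{8}\cdot 4\sin(2\theta)\cos(2\theta)=\tfrac{\sigma^2}{4}\sin(4\theta).
\]
Rearranging yields $\ddot\theta=\tfrac{fT\sigma^2}{16}\sin(4\theta)$, which is Eq~\eqref{eq:pendulum}.

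\emph{Cross-check and main obstacle.} I would check the result against Eq~\eqref{eq:modified_geodesic}. Since $\theta$ is an arclength-proportional coordinate with flat metric $g_{\theta\theta}=4$, all Christoffel symbols vanish. The equation then reads $\tfrac1f\ddot\theta=T\,g^{\theta\theta}\partial_\theta\ell=\tfrac{T}{4}\ell'(\theta)$, which is identical. The only real obstacle is bookkeeping of constants and signs. Three points need care: the factor $4$ in the metric, the factor $\tfrac{1}{2f}$ in $\mathcal{C}$, and the fact that the potential enters with a $+$ sign in a cost that is being minimised. The last point gives $+\sin(4\theta)$, which is an inverted pendulum relative to mechanics, so the path is repelled from the LVR maximum at $\theta=\pi/4$ (i.e.\ $w=\tfrac12$). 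I would verify this sign against the paper's convention in Proposition~\ref{prop:modified_geodesic}.
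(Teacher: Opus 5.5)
Your proposal is correct and follows the paper's own proof essentially line for line. Both restrict to the Hellinger angle, obtain the kinetic term $\frac{2}{f}\dot\theta^2$ and the potential $\ell=\frac{\sigma^2}{8}\sin^2(2\theta)$, and read off $\frac{4}{f}\ddot\theta=\frac{T\sigma^2}{4}\sin(4\theta)$ from the scalar Euler--Lagrange equation of $L=\frac{2}{f}\dot\theta^2+T\ell$. The only cosmetic difference is that you take the kinetic term from the sphere metric of Eq~\eqref{eq:sphere_metric}, whereas the paper substitutes $\dot w=\sin(2\theta)\dot\theta$ into $\dot w^2/w+\dot w^2/(1-w)$; your reading of the sign (the path is repelled from $\theta=\pi/4$) agrees with the paper's discussion.
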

\begin{proof}
Deferred to Appendix~\ref{app:modified_geodesic_proof}.
\end{proof}

\noindent The potential is maximised at equal weighting ($\theta = \pi/4$, maximum LVR) and zero at the vertices, so the LVR forcing pushes paths toward concentrated portfolios.
When all assets share the same volatility, the potential is constant on the simplex and the equation collapses to the free geodesic, i.e.\ SLERP.
For $N \geq 3$ with heterogeneous volatilities, the LVR potential landscape has richer critical-point structure; the Jacobi geodesic of Appendix~\ref{app:jacobi} illustrates this numerically for $N=3$.
For $\mu = fT\sigma^2/16 \ll 1$ (the practical regime at $f = f^*$), the first-order correction to the SLERP baseline $\theta_0(s) = \theta_{\mathrm{start}} + \Omega s$ is $\epsilon_1(s) = \mu\!\int_0^1 G(s,s')\sin(4\theta_0(s'))\,ds'$, where $G(s,s') = \min(s,s')(1{-}\max(s,s'))$ is the Green's function for the two-point boundary problem (Appendix~\ref{app:asymptotics}).

\paragraph{The finite optimum}

In TFMM, each interpolation step occupies one block ($\Delta t_{\mathrm{block}}$), so $T = f\,\Delta t_{\mathrm{block}}$ and the total expected cost along a SLERP path decomposes as
\begin{equation}
C(f) \approx \underbrace{\frac{2\Omega^2}{f}}_{\searrow\;\text{in}\;f} + \underbrace{f\,\Delta t_{\mathrm{block}}\,\bar\ell}_{\nearrow\;\text{in}\;f}
\label{eq:cost_tradeoff}
\end{equation}
where $\bar\ell = \int_0^1 \ell(\v w_{\mathrm{SLERP}}(s))\,ds$ is the LVR rate averaged along the fixed SLERP curve.\footnote{For $N \geq 3$ with heterogeneous volatilities, $\bar\ell$ is a quadratic form integrated over a great-circle arc on the Hellinger sphere; it evaluates to a polynomial in trigonometric functions of $\Omega$, so $f^*$ is efficiently computable for any $N$.}
The two terms are plotted in Figure~\ref{fig:cost_tradeoff}.

\begin{remark}[Resolution of the apparent circularity]
\label{rem:circularity}
The tradeoff appears circular: $f^*$ depends on $\bar\ell$, which depends on the SLERP path, which seems to assume $f$ is already chosen.
The circularity is benign.
The SLERP path in weight space is a fixed curve (the great circle on the Hellinger sphere between $\v w^{\mathrm{start}}$ and $\v w^{\mathrm{end}}$) and does not depend on~$f$.
The parameter $f$ only determines how many waypoints are placed along this fixed curve.
Thus $\bar\ell$ is a constant determined entirely by the endpoints, and $f^*$ is well-defined.
At the next order the LVR-corrected path (Appendix~\ref{app:asymptotics}) does depend on $f$ through $\mu = fT\sigma^2/16$, but the induced change to $\bar\ell$ is $O(\mu^2)$, which closes the self-consistency loop.
\end{remark}

\begin{figure}[t]
\centering
\includegraphics[width=0.85\columnwidth]{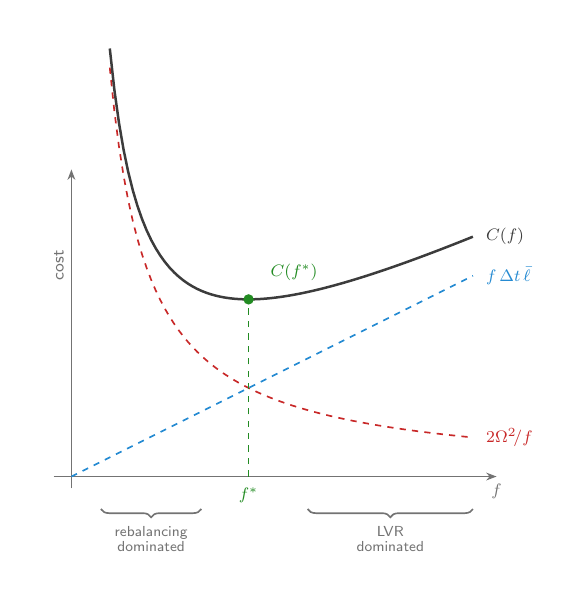}
\caption{The total expected cost $C(f)$ (solid) decomposes into a decreasing rebalancing cost $2\Omega^2/f$ (dashed, red) and an increasing LVR cost $f\,\Delta t\,\bar\ell$ (dashed, blue).
The minimum occurs at $f^*$ (green dot), where the two contributions are equal.}
\label{fig:cost_tradeoff}
\end{figure}

\begin{proposition}[Optimal step count]
\label{prop:optimal_f}
Setting $dC/df = 0$ yields
\begin{equation}
f^* = \Omega\sqrt{\frac{2}{\Delta t_{\mathrm{block}}\,\bar\ell}}
\label{eq:optimal_f}
\end{equation}
At the optimum the rebalancing and LVR costs are equal, and the minimum total cost is $C(f^*) = 2\sqrt{2\,\Omega^2\,\Delta t_{\mathrm{block}}\,\bar\ell}$, which scales as $\Omega\sigma$ rather than $\Omega^2/f$.
\end{proposition}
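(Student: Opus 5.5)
The plan is to treat $C(f)$ in Eq~\eqref{eq:cost_tradeoff} as a function of a continuous variable $f>0$ and minimise it by elementary calculus. We then read off the equal-split property and the minimum value. By Remark~\ref{rem:circularity}, $\bar\ell$ is a constant fixed by the endpoints, since it is the average of $\ell$ over the fixed great-circle arc. Likewise $\Omega=\arccos(\mathrm{BC})$ depends only on $\v w^{\mathrm{start}},\v w^{\mathrm{end}}$. So $C(f)=a/f+bf$ with $a=2\Omega^2>0$ and $b=\Delta t_{\mathrm{block}}\bar\ell\ge 0$. I assume $b>0$; if $\bar\ell=0$, as when the arc runs along vertices or all volatilities vanish, we are back in the constant-price regime and $f^*=\infty$.

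First, differentiate to get $C'(f)=-a/f^2+b$. This vanishes exactly at $f^*=\sqrt{a/b}=\Omega\sqrt{2/(\Delta t_{\mathrm{block}}\bar\ell)}$, which is Eq~\eqref{eq:optimal_f}. Since $C''(f)=2a/f^3>0$ on $(0,\infty)$, $C$ is strictly convex, so this stationary point is the unique global minimiser. Alternatively, AM–GM gives $a/f+bf\ge 2\sqrt{ab}$ directly, with equality iff $a/f=bf$.

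Second, the equality condition $a/f^*=bf^*$ is exactly the statement that the rebalancing cost $2\Omega^2/f^*$ equals the LVR cost $f^*\Delta t_{\mathrm{block}}\bar\ell$. The minimum is then $C(f^*)=2bf^*=2\sqrt{ab}=2\sqrt{2\Omega^2\Delta t_{\mathrm{block}}\bar\ell}$.

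Third, for the scaling claim, note that $\ell$ in Eq~\eqref{eq:lvr} is homogeneous of degree one in the covariance, so $\bar\ell\propto\sigma^2$ for a common volatility scale $\sigma$. For example, in the uncorrelated equal-volatility case $\bar\ell=\tfrac{\sigma^2}{2}\int_0^1(1-\|\v w_{\mathrm{SLERP}}(s)\|^2)\,ds$. Hence $C(f^*)\propto\Omega\sigma\sqrt{\Delta t_{\mathrm{block}}}$, which is linear in $\Omega$ rather than the constant-price $\Omega^2/f$.

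There is no real analytic obstacle, because the statement is a one-variable convex minimisation. The only subtle points are interpretive. One is that $f$ is an integer, so the true optimum is $\lfloor f^*\rfloor$ or $\lceil f^*\rceil$, and convexity ensures the better of the two is optimal. The other is that Eq~\eqref{eq:cost_tradeoff} is itself only leading order. Remark~\ref{rem:circularity} covers the latter: the $f$-dependence of $\bar\ell$ enters at $O(\mu^2)$, and the KL-vs-quadratic correction is $O(\Omega^4/f^3)$ by Theorem~\ref{thm:suboptimality}. I would state the proposition as exact for the model cost in Eq~\eqref{eq:cost_tradeoff} and note these corrections as perturbative.
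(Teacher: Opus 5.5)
Your proposal is correct and follows the paper's own argument. You set $dC/df=0$ for $C(f)=2\Omega^2/f+f\,\Delta t_{\mathrm{block}}\bar\ell$ to get $2\Omega^2/f^2=\Delta t_{\mathrm{block}}\bar\ell$ and substitute back; your strict-convexity/AM--GM check, the equal-split reading and the $\bar\ell\propto\sigma^2$ scaling just make explicit what the paper leaves implicit.

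One small correction to your closing caveat concerns what Theorem~\ref{thm:suboptimality} bounds. It bounds the gap between SLERP and the optimal path on the exact KL cost, not the discrepancy between the model term $2\Omega^2/f$ and SLERP's exact KL cost. That discrepancy is larger, $O(\Omega^3/f^2)$, per the paper's footnote to Corollary~\ref{cor:slerp_optimal}, though it still leaves $f^*$ unchanged at leading order.
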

\begin{proof}
The first term in Eq~\eqref{eq:cost_tradeoff} is convex in $1/f$ and the second is linear in $f$; setting the derivative to zero gives $2\Omega^2/f^2 = \Delta t_{\mathrm{block}}\,\bar\ell$. Substituting back yields both $f^*$ and $C(f^*)$.
\end{proof}
The cost near the minimum is flat: $C(f^* + \delta f) / C(f^*) \approx 1 + (\delta f / f^*)^2$, so $f^*$ is robust to parameter misestimation.
Since $f^* \propto 1/\sigma$, a factor-of-two error in the volatility estimate gives a factor-of-two error in $f^*$ but only ${\sim}25\%$ excess cost.

\paragraph{The optimum is always SLERP-near-optimal}

The dimensionless parameter $\lambda = fT\sigma^2/\Omega^2$ controls the strength of the LVR perturbation to the free geodesic.
Evaluating at $f^*$:
\[
\lambda^* = \frac{2\sigma^2}{\bar\ell}
\]
For two tokens at average weight $\bar{w}$, $\bar\ell \approx \frac{\sigma^2}{2}\bar{w}(1-\bar{w})$, giving $\lambda^* \approx 4/(\bar{w}(1-\bar{w})) \sim 8$--$16$ for typical interior weights ($\bar{w} \in [0.3, 0.7]$): $\lambda^* = O(1)$.
So the optimum sits in the perturbative regime (Appendix~\ref{app:asymptotics}); $\lambda \gg 1$ means $f \gg f^*$, and cutting $f$ back to $f^*$ sheds the excess LVR exposure without changing the interpolation curve.

\paragraph{Worked example}

Consider the rebalancing $(0.5, 0.5) \to (0.7, 0.3)$ with $\sigma \approx 3\%$/day and $\Delta t_{\mathrm{block}} = 12\,\text{s} \approx 1.4\times 10^{-4}$~days:
\[
\Omega \approx 0.21\;\text{rad}, \quad \bar\ell \approx 1.1\times 10^{-4}\;\text{day}^{-1}, \quad f^* \approx 2400\;\text{blocks} \approx 8\;\text{hr}
\]
The first-order correction to SLERP is $O(10^{-2})$ in weight space, consistent with the L-BFGS-B experiments of \S\ref{sec:experiments_stochastic}.
The matched-asymptotic boundary-layer width $\delta \approx 5 > 1$ (Appendix~\ref{app:asymptotics}), so the perturbative expansion is valid across the entire trajectory.

\section{Fee integration}
\label{sec:fees}

G3M pools charge a swap fee $\gamma \in (0,1)$ (e.g.\ $\gamma = 0.997$ for 30\,bp), which creates a no-arbitrage band $[\gamma\, m_p,\; \gamma^{-1} m_p]$ around the pool's marginal price~\cite{willetts2024closedform,diamond2024optimalfees}.
Arbitrage is profitable only when the marginal price exits this band and the expected profit exceeds the arbitrageur's execution cost (gas, priority fees, capital).
The \emph{effective} no-arb region is therefore wider than the fee band alone: on high-gas chains the gas threshold can dominate the fee threshold, while on low-gas chains the fee band is the binding constraint.
Both effects are empirically observable in the sawtooth dynamics of live TFMM pools~\cite{willetts2026poolsportfoliosobservedarbitrage}: allocation drift accumulates linearly over blocks until the effective threshold is breached, an arbitrageur trades, and the cycle restarts.

The preceding sections derive exact results under zero fees.
We now show that the path optimality of SLERP extends to the fee-adjusted setting: fees create a constant offset to the total cost that does not depend on the interpolation path, so the path optimisation and the fee accounting decouple.

\paragraph{The no-arbitrage band and weight changes}

Along SLERP, each step advances the Hellinger coordinate by $\Omega/f$, giving per-component displacements $|\delta\eta_i| \leq \Omega/f$.
Since $\delta w_i = 2\eta_i\,\delta\eta_i$ and $\eta_i = \sqrt{w_i}$, the relative weight change satisfies
$|\delta w_i / w_i| \leq 2\Omega/(f\sqrt{w_{\min}})$.
The G3M marginal price ratio satisfies $\delta m_{ij}/m_{ij} = \delta w_i/w_i - \delta w_j/w_j$,
so the worst-case price displacement per step is at most $4\Omega/(f\sqrt{w_{\min}})$.
Per-step weight changes fall entirely inside the fee band when
\begin{equation}
f \geq f_{\mathrm{threshold}} = \frac{4\Omega}{\sqrt{w_{\min}}\,(1-\gamma)}.
\label{eq:f_threshold}
\end{equation}
For the worked example of \S\ref{sec:optimal_f} ($\Omega = 0.21$, $w_{\min} = 0.3$, $\gamma = 0.997$): $f_{\mathrm{threshold}} \approx 510$.
Above $f_{\mathrm{threshold}}$, individual weight changes do not trigger standalone arbitrage; displacement accumulates over multiple steps until either the cumulative drift breaches the fee band or a price-driven arb event clears it.\footnote{In the idealised constant-price model, the cost above $f_{\mathrm{threshold}}$ is exactly $2\Omega^2/f_{\mathrm{threshold}}$, independent of $f$ (Appendix~\ref{app:fee_mechanics}).}

\paragraph{Price-driven arbitrage rate}

Under GBM, the per-block log-price change has variance $\sigma^2\Delta t_{\mathrm{block}}$; the expected number of blocks for the price to exit the fee band (half-width $\approx (1{-}\gamma)$ in log terms) is $(1{-}\gamma)^2/(\sigma^2\Delta t_{\mathrm{block}})$.\footnote{Standard exit-time result for a symmetric random walk from an interval of half-width~$a$: $\mathbb{E}[\tau] = a^2/\mathrm{Var}(\text{step})$.}
The fraction of blocks in which price-driven arb fires is therefore
\begin{equation}
\nu = \frac{\sigma^2 \Delta t_{\mathrm{block}}}{(1-\gamma)^2}.
\label{eq:nu}
\end{equation}
For Ethereum L1 ($\sigma \approx 3\%$/day, $\Delta t_{\mathrm{block}} = 12\,\text{s}$, $1{-}\gamma = 0.003$): $\nu \approx 0.014$, i.e.\ price-driven arb fires roughly every $1/\nu \approx 72$ blocks.
The fee band width $(1{-}\gamma)$ governs both scales: $f_{\mathrm{threshold}} \propto (1{-}\gamma)^{-1}$ for weight-driven absorption and $1/\nu \propto (1{-}\gamma)^{2}$ for price-driven arb.
At the practical operating point, weight drift breaches the band ($n_w = f/f_{\mathrm{threshold}} \approx 5$ blocks) well before price-driven arb would fire ($1/\nu \approx 72$ blocks).

\subsection{SLERP optimality with fees}

The pool's total value (reserves plus accumulated fee revenue) changes by exactly $-\pi$ per arb event, where $\pi$ is the arbitrageur's profit: the fee revenue $(1{-}\gamma)\,\delta_{\mathrm{in}}\, p_{\mathrm{in}}$ stays with the pool, so only the arber's net extraction is lost.
The total cost over $f$ steps is therefore $\sum_k \pi_\gamma^{(k)}$.
The following result shows that fee revenue creates a path-independent offset, so the path optimisation reduces to the zero-fee problem.

\begin{proposition}[Fee-revenue telescoping]
\label{prop:fee_telescoping}
For a monotonic interpolation path from $\v w^{\mathrm{start}}$ to $\v w^{\mathrm{end}}$ with per-step weight changes small relative to the fee band, the total fee revenue across all arb events is
\begin{equation}
    F_{\mathrm{total}} = \frac{1-\gamma}{2\gamma}\, V \sum_{i=1}^N \bigl|w_i^{\mathrm{end}} - w_i^{\mathrm{start}}\bigr| \;+\; O\!\left((1{-}\gamma)^2\right),
    \label{eq:fee_telescoping}
\end{equation}
which depends on the endpoints and $\gamma$ but not on the interpolation path or step count.
\end{proposition}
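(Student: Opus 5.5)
The plan is to compute the fee revenue of a single arbitrage event to first order in $(1-\gamma)$, express it as a linear functional of the weight change, and then sum over the path so that monotonicity collapses the sum of absolute values into a telescoping sum.

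\emph{Per-event revenue.} Consider one arb event that moves the pool from weights $\v w^{(a)}$ to $\v w^{(b)}$. Here $(a,b)$ may span several interpolation steps when $f > f_{\mathrm{threshold}}$, since drift accumulates until the band is breached. By Eq~\eqref{eq:reserve_change}, post-arbitrage reserves satisfy $R_i^{(b)} = R_i^{(a)}(w_i^{(b)}/w_i^{(a)})\,r$. To leading order in the small step, the value traded in token $i$ is $p_i\,\delta R_i \approx V\,\delta w_i$, because $p_i R_i = w_i V$ at equilibrium and $r = 1 + O(\|\delta\v w\|^2)$ by Corollary~\ref{cor:quadratic_loss}. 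The tokens flowing into the pool are exactly those with $\delta w_i > 0$, so the value-in is $V\sum_{i:\delta w_i>0}\delta w_i$.

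\emph{The factor $1/(2\gamma)$.} Since $\sum_i \delta w_i = 0$, the inflow sum equals $\tfrac12\sum_i|\delta w_i|$. The fee charged on an input amount $\delta_{\mathrm{in}}$ is $(1-\gamma)\delta_{\mathrm{in}}$, and the effective inflow needed to achieve net $\gamma\delta_{\mathrm{in}}$ carries the factor $1/\gamma$. Together these give a per-event revenue of $\frac{1-\gamma}{2\gamma}V\sum_i|\delta w_i|$. The fee band also shifts the post-arb state to the edge of the band, not to exact equilibrium. That shift changes $\delta w_i$ by $O(1-\gamma)$ and enters the revenue multiplied by $(1-\gamma)$, so it contributes only $O((1-\gamma)^2)$. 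Variations of $V$ across events are $O(\Omega^2/f)$ by the KL cost, which I will absorb into the error, treating $V$ as the reference value.

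\emph{Telescoping.} For a monotonic path, each coordinate $w_i$ moves in a fixed direction, so $\sum_{\text{events}}|\delta w_i^{(e)}| = |w_i^{\mathrm{end}} - w_i^{\mathrm{start}}|$ regardless of how the path is partitioned into events or steps. Summing the per-event revenue over events then gives Eq~\eqref{eq:fee_telescoping}, which is independent of the path and of $f$.

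The main obstacle is the band-edge bookkeeping. Band-edge residuals must not accumulate to $O(1-\gamma)$ across many events. My first approach is to note that the residual offset left after each event is carried into the next event's displacement, so these offsets telescope as well. Only the final unresolved offset remains, and its revenue contribution is $O((1-\gamma)^2)$.
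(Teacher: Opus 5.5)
Your leading-order computation is the same as the paper's. Per arb you take inflow $\delta_{\mathrm{in}}\approx V\sum_{i:\delta w_i>0}\delta w_i/\gamma$ and fee $(1-\gamma)\delta_{\mathrm{in}}$, get the $\tfrac12$ from $\sum_i\delta w_i=0$, and telescope $\sum|\delta w_i|$ along a monotone path. The real difference is the fee-band bookkeeping, and there your route is the sounder one.

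The paper indexes arb events by interpolation steps, even though the hypothesis puts per-step changes inside the band, so most steps trigger no arb. It then bounds the effect of the $O(1-\gamma)$ band-edge residual step by step and sums the bounds, reaching $O(f(1-\gamma)^2)$ in total. That is not the $O((1-\gamma)^2)$ in the statement. Taken literally, it is not even small against the leading term once $f(1-\gamma)\gtrsim 1$ (e.g.\ $f^*\approx 2400$, $1-\gamma=0.003$). Your observation is that each residual is absorbed into the next event's trade, so the total traded volume is fixed by the initial and terminal states. That is what actually gives the stated error, and letting events span several steps is consistent with the hypothesis.

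Two points deserve a line each.

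First, telescoping the residuals inside $\sum_i|\cdot|$ needs the \emph{realised} trade in each token to keep one sign across events. That is monotonicity of the allocation $a_i=R_ip_i/V$, not just of $w_i$, and it matters because per event the residual shifts can be as large as $\delta w_i^{(e)}$ itself. At constant prices this holds up to the drift of $V$, for the following reason. After each arb, the ratios $w_i/a_i$ lie in a band $[c,c/\gamma]$. Their $a$-weighted mean is always $1$, so $c\le 1$. An increasing-weight token starts at $w_i/a_i=1$, only rises between arbs, and is reset to $c/\gamma\ge 1$ when deposited. It therefore never falls below $c$ and is never withdrawn. The same argument applies symmetrically to decreasing-weight tokens.

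Second, the drift of $V$ comes from two sources, each of relative size $O(1-\gamma)$:
\begin{itemize}
\item the realised KL cost, which under batching is $2\Omega^2/f_{\mathrm{threshold}}=O(\Omega(1-\gamma))$ rather than $O(\Omega^2/f)$;
\item the retained fees themselves.
\end{itemize}
It is the step-size hypothesis, through these bounds, that keeps the effect of this drift on revenue at $O((1-\gamma)^2)$.
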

\begin{proof}
At equilibrium, $R_i\, p_i = V w_i$.
The optimal arb trade to correct a weight displacement $\Delta w_i$ at leading order has inflow
$\delta_{\mathrm{in}} \approx V\!\sum_{i:\Delta w_i > 0}\!|\Delta w_i|/\gamma$~\cite{willetts2024closedform,evansG3Ms}, so the per-event fee revenue is $(1{-}\gamma)\,\delta_{\mathrm{in}} \approx \frac{1-\gamma}{\gamma}\,V \cdot \tfrac{1}{2}\sum_i |\Delta w_i^{(k)}|$ (using $\sum_i \Delta w_i = 0$).
A path is \emph{monotonic} if each component $w_i$ moves in one direction across all steps.
For such paths, the absolute changes telescope: $\sum_k |\Delta w_i^{(k)}| = |w_i^{\mathrm{end}} - w_i^{\mathrm{start}}|$.
Summing over components gives Eq~\eqref{eq:fee_telescoping}.

The correction arises because after a fee-adjusted arb the pool's reserves are not at exact equilibrium (there is a residual displacement of $O(1{-}\gamma)$ inside the fee band).
This perturbs the trade size for the next step by $O(1{-}\gamma)$, giving $O((1{-}\gamma)^2)$ correction to the fee revenue per step and $O(f(1{-}\gamma)^2)$ total.
For $\gamma = 0.997$ this is ${<}0.3\%$ of the leading term.
\end{proof}

\begin{corollary}[SLERP optimality with fees]
\label{cor:slerp_fees}
Under the leading-order expansion of $D_{\mathrm{KL}}$, SLERP minimises the pool's total net loss (arb extraction minus fee revenue) among all interpolation paths from $\v w^{\mathrm{start}}$ to $\v w^{\mathrm{end}}$.
Fees affect the optimal step count but not the optimal interpolation curve.
\end{corollary}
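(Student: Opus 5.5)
The plan is to write the pool's total net loss as a sum of two pieces: the gross arbitrage extraction, and minus the fee revenue. Then show the second piece is path-independent, so that minimising the total over interpolation curves is the same as minimising the first piece.

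For a path $\v w^{(0)},\dots,\v w^{(f)}$, I will write the net loss as
\[
\mathcal{L}_{\mathrm{net}} \;=\; V\sum_{k=1}^{f} D_{\mathrm{KL}}\!\left(\v w^{(k)} \,\|\, \v w^{(k-1)}\right) \;-\; F_{\mathrm{total}} \;+\; O\!\left((1-\gamma)^2\right).
\]
This uses two facts. First, by Theorem~\ref{thm:kl_divergence} and Proposition~\ref{prop:price_independence}, each step's gross value transfer to arbitrageurs before fee rebates is $V D_{\mathrm{KL}}$, to leading order in the step size. Second, as stated before Proposition~\ref{prop:fee_telescoping}, the pool's value changes by $-\pi$ per arb event, with the fee revenue retained.

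By Proposition~\ref{prop:fee_telescoping}, $F_{\mathrm{total}}$ depends only on the endpoints and $\gamma$, provided the path is monotonic. So within the class of monotonic paths, $\arg\min \mathcal{L}_{\mathrm{net}} = \arg\min \sum_k D_{\mathrm{KL}}$. Replacing $D_{\mathrm{KL}}$ by its leading-order expansion (Corollary~\ref{cor:quadratic_loss}), this is exactly the problem solved by Corollary~\ref{cor:slerp_optimal}, whose minimiser is SLERP.

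Two checks remain. The first is that SLERP itself lies in the admissible class. I would show each $\eta_i(t)$, being a positive combination of $\sin((1-t)\Omega)$ and $\sin(t\Omega)$, is a sinusoid $A_i\sin(t\Omega+\phi_i)$ on an interval of length $\Omega<\pi/2$, which would make $w_i=\eta_i^2$ monotone. This needs care, since a component could in principle have an interior extremum. If that occurs I would instead argue with the weaker inequality $\sum_k|\Delta w_i^{(k)}|\ge|w_i^{\mathrm{end}}-w_i^{\mathrm{start}}|$. This shows that non-monotone paths collect at least as much fee revenue, which works against me. So I would restrict the claim to the monotone class, where the statement is clean, and note that the paper's hypothesis is exactly this class.

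The second check is the step-count claim. This follows because fees enter through $f_{\mathrm{threshold}}$ (Eq~\eqref{eq:f_threshold}): above it the effective cost saturates (per the footnote), which changes $C(f)$ and hence $f^*$, but each arb event still consumes an arc of the same fixed great circle, so the curve is unchanged.

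The main obstacle is justifying the decomposition of the net loss at the fee-adjusted level. Under fees, arbitrage events may batch several weight steps, so the KL increments are between arb-event states rather than consecutive waypoints. I would handle this by noting that batching consecutive SLERP waypoints yields points still on the same geodesic, so the arb-event sequence is a coarser SLERP (cf. the sub-arc argument in Corollary~\ref{cor:bisection}). The fee-band residual displacement contributes only the $O((1-\gamma)^2)$ error already absorbed in Proposition~\ref{prop:fee_telescoping}.
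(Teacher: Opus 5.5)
There is a genuine gap, and it sits in your first check. SLERP is \emph{not} monotone in general for $N\ge 3$. Writing $\eta_i(t)=A_i\sin(t\Omega+\phi_i)$ with $\Omega<\pi/2$ rules out two extrema but not one. Since $\eta_i'(0)\propto \eta_i^{\mathrm{end}}-\cos\Omega\,\eta_i^{\mathrm{start}}$ and $\eta_i'(1)\propto \cos\Omega\,\eta_i^{\mathrm{end}}-\eta_i^{\mathrm{start}}$, component $i$ has an interior maximum whenever $\cos\Omega<\eta_i^{\mathrm{end}}/\eta_i^{\mathrm{start}}<1/\cos\Omega$. That is, any component whose start and end values are close overshoots, because the great circle bulges outward. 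This already happens in the paper's own $N=3$ example: $w_2$ rises from $0.55$ to about $0.567$ near $t\approx 1/3$ before falling to $0.50$. Your fallback of restricting to the monotone class therefore does not rescue the argument. SLERP is then not a member of that class, and the minimiser of $\sum_k D_{\mathrm{KL}}$ over monotone paths is a different curve, so ``$\arg\min$ over monotone paths $=$ SLERP'' is false. The restriction also proves a weaker statement than the corollary, which claims optimality among \emph{all} interpolation paths. Monotonicity is a hypothesis of Proposition~\ref{prop:fee_telescoping}, not of the corollary.

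The missing ideas are the two comparisons the paper's proof makes. First, the inequality $\sum_k|\Delta w_i^{(k)}|\ge|w_i^{\mathrm{end}}-w_i^{\mathrm{start}}|$, which you read as working against you, actually works \emph{for} SLERP. The per-event fee formula in the proof of Proposition~\ref{prop:fee_telescoping} does not use monotonicity (only the telescoping step does), so it gives $F_{\mathrm{total}}(\mathrm{SLERP})\ge F_0$. Hence SLERP beats every monotone path $P$ on both terms: its KL sum is strictly smaller, and its fee revenue is at least $F_0=F_{\mathrm{total}}(P)$. This holds without SLERP being monotone. Second, you must handle non-monotone competitors, which can collect more than $F_0$ in fees; your proposal has no step for these. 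The paper argues that a detour of Hellinger size $\delta$ adds $O(\delta^2)$ to the KL cost but earns at most $O(\delta(1-\gamma))$ in extra fees. Detours smaller than the fee band earn nothing at all, so the extra KL cost dominates. Your step-count and batching remarks are reasonable but do not touch this gap.
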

\begin{proof}
Deferred to Appendix~\ref{app:slerp_fees_proof}.
The argument shows that SLERP simultaneously minimises the KL cost (Corollary~\ref{cor:slerp_optimal}) and generates at least as much fee revenue as any alternative path.
\end{proof}

\subsection{The net cost-of-carry and optimal step count}

The total fee revenue $F_{\mathrm{total}}$ is one of several mechanisms that reduce the pool's cost relative to the zero-fee prediction.
Others include arber under-extraction (54--71\% of optimal trade volume, capturing only 61--89\% of available profit~\cite{willetts2026poolsportfoliosobservedarbitrage}), and on low-gas chains, incidental routing by DEX aggregators that rebalances the pool while paying fees~\cite{willetts2026poolsportfoliosobservedarbitrage}.
We capture the combined effect of all mechanisms in a single empirical observable, the net cost-of-carry fraction:
\[
\phi = \frac{\text{time-averaged pool loss (net of fees)}}{\text{theoretical zero-fee LVR rate}\;\ell(\v w)\Delta t}.
\]
On L1, $\phi \in (0.1, 0.5)$; on L2 with sufficient routing flow, $\phi$ can be negative (pool is a net fee earner)~\cite{willetts2026poolsportfoliosobservedarbitrage}.
Empirically, $\phi$ decreases over time as the arbitrageur ecosystem matures~\cite{willetts2026poolsportfoliosobservedarbitrage}.
For a new pool with no operating history, $\phi$ can be estimated from comparable pools on the same chain (same gas regime, fee tier, and similar TVL); the ranges reported in~\cite{willetts2026poolsportfoliosobservedarbitrage} provide initial calibration.

\begin{proposition}[Fee-adjusted optimal step count]
\label{prop:fee_adjusted_f}
For $\phi > 0$, the fee-adjusted cost along a SLERP path is
\begin{equation}
    C_\phi(f) = \frac{2\Omega^2}{\min(f,\; f_{\mathrm{threshold}})} + f\,\Delta t_{\mathrm{block}}\,\bar\ell\,\phi,
    \label{eq:fee_adjusted_cost}
\end{equation}
with optimum
\begin{equation}
    f^*_\phi = \min\!\left(\Omega\sqrt{\frac{2}{\Delta t_{\mathrm{block}}\,\bar\ell\,\phi}},\;\; f_{\mathrm{threshold}}\right).
    \label{eq:f_star_phi}
\end{equation}
When $\phi \leq 0$, $C_\phi$ is non-increasing in $f$ and no finite optimum exists.
\end{proposition}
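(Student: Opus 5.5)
The plan has two parts. First, justify the cost model in Eq~\eqref{eq:fee_adjusted_cost}. Second, treat it as a piecewise function of $f$ and minimise it explicitly.

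For the rebalancing term, I would use the mechanism described before Eq~\eqref{eq:f_threshold} and in the footnote referring to Appendix~\ref{app:fee_mechanics}. When $f \le f_{\mathrm{threshold}}$, every SLERP step exits the fee band and is arbitraged. By Corollary~\ref{cor:slerp_fees}, the fee-revenue offset does not depend on the path, so the path-dependent cost is the zero-fee SLERP cost $2\Omega^2/f$ from Eq~\eqref{eq:slerp_perstep}. When $f > f_{\mathrm{threshold}}$, the individual steps are absorbed inside the band. Arbitrage then fires only once the cumulative drift reaches the band width, which is after about $f/f_{\mathrm{threshold}}$ steps. The realised arb events are therefore those of an effective $f_{\mathrm{threshold}}$-step SLERP, and the cost is $2\Omega^2/f_{\mathrm{threshold}}$. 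Together these give $2\Omega^2/\min(f,f_{\mathrm{threshold}})$. For the carry term, $\phi$ is defined as the realised fraction of the zero-fee LVR rate. Integrating along the fixed SLERP curve, as in Remark~\ref{rem:circularity}, gives $f\,\Delta t_{\mathrm{block}}\,\bar\ell\,\phi$.

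For the optimisation with $\phi>0$, split into two cases. On $f \ge f_{\mathrm{threshold}}$, $C_\phi$ is a constant plus a strictly increasing linear term, so its minimum on that range is at $f=f_{\mathrm{threshold}}$. On $f\le f_{\mathrm{threshold}}$, $C_\phi$ has the form of Eq~\eqref{eq:cost_tradeoff} with $\bar\ell$ replaced by $\phi\bar\ell$. It is strictly convex, and by the argument of Proposition~\ref{prop:optimal_f} its unconstrained minimiser is $\hat f=\Omega\sqrt{2/(\Delta t_{\mathrm{block}}\bar\ell\phi)}$. If $\hat f\le f_{\mathrm{threshold}}$, then $\hat f$ is the minimiser on the lower range. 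Its value is no larger than $C_\phi(f_{\mathrm{threshold}})$, which is the minimum over the upper range, so $\hat f$ is the global optimum. If $\hat f>f_{\mathrm{threshold}}$, then $C_\phi$ is decreasing on the whole lower range, and by the first case it is increasing beyond $f_{\mathrm{threshold}}$, so the optimum is $f_{\mathrm{threshold}}$. In both cases the optimum is $\min(\hat f,f_{\mathrm{threshold}})$, which is Eq~\eqref{eq:f_star_phi}. $C_\phi$ is continuous at $f_{\mathrm{threshold}}$, so gluing the two pieces causes no difficulty.

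For $\phi\le0$, the first term is non-increasing in $f$ and the second is $f$ times a non-positive constant, which is also non-increasing. So $C_\phi$ is non-increasing. For $\phi<0$ it is strictly decreasing beyond $f_{\mathrm{threshold}}$, so no finite minimiser exists. For $\phi=0$ the minimum is attained on the whole ray $f\ge f_{\mathrm{threshold}}$, so there is no unique finite optimum; I would state this nuance.

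The main obstacle is the modelling step: justifying that above threshold the rebalancing cost freezes at $2\Omega^2/f_{\mathrm{threshold}}$. This requires the accumulated drift to be arbitraged in chunks equivalent to $f_{\mathrm{threshold}}$ equal SLERP steps, ignoring residual in-band displacement. I would defer this to the constant-price computation of Appendix~\ref{app:fee_mechanics} and treat the result as leading order in $(1-\gamma)$.
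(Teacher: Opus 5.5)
Your proposal is correct and follows the paper's own proof. It uses the same piecewise cost model: the zero-fee SLERP cost $2\Omega^2/f$ below $f_{\mathrm{threshold}}$, the plateau $2\Omega^2/f_{\mathrm{threshold}}$ from the chunking argument of Appendix~\ref{app:fee_mechanics} above it, and $\phi\bar\ell$ in place of $\bar\ell$, followed by the first-order condition of Proposition~\ref{prop:optimal_f} capped at the threshold; your convexity and continuity case analysis just makes the paper's ``the plateau caps this'' step explicit. Your remark about $\phi=0$ is also right: the minimum is then attained on the whole ray $f\ge f_{\mathrm{threshold}}$, so the optimum is non-unique rather than non-existent, which slightly refines the statement's last sentence.
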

\begin{proof}
The first term is $2\Omega^2/f$ for $f \leq f_{\mathrm{threshold}}$ (Corollary~\ref{cor:slerp_optimal}) and plateaus at $2\Omega^2/f_{\mathrm{threshold}}$ for $f > f_{\mathrm{threshold}}$ (Appendix~\ref{app:fee_mechanics}): the fee band batches weight changes into chunks of displacement $\Omega/f_{\mathrm{threshold}}$, and the quadratic cost per chunk times the number of chunks is independent of~$f$.
The second term replaces the bare LVR rate with the empirical $\phi\,\bar\ell$ (incorporating all fee, gas, and microstructure effects).
For $\phi > 0$ and $f \leq f_{\mathrm{threshold}}$: $dC_\phi/df = -2\Omega^2/f^2 + \Delta t_{\mathrm{block}}\,\bar\ell\,\phi = 0$ gives $f^* = \Omega\sqrt{2/(\Delta t_{\mathrm{block}}\,\bar\ell\,\phi)}$; the plateau caps this at $f_{\mathrm{threshold}}$.
For $\phi \leq 0$: $dC_\phi/df \leq 0$ everywhere, so $C_\phi$ is non-increasing.
\end{proof}

\paragraph{Two deployment regimes}

The distinction between deployment contexts is primarily about gas costs, not swap fees (both regimes use standard fees, e.g.\ 30\,bp).

On \emph{high-gas chains}, gas costs raise the effective arb threshold above the fee band and make multi-hop routing uneconomical.
Empirically $\phi \in (0.1, 0.5)$~\cite{willetts2026poolsportfoliosobservedarbitrage}: the finite-$f$ tradeoff of Proposition~\ref{prop:fee_adjusted_f} applies, and $f_{\mathrm{threshold}}$ (Eq~\eqref{eq:f_threshold}) caps the useful step count.

On \emph{low-gas chains} with sufficient routing flow, DEX aggregators route through the pool as one leg of broader cross-venue trades, paying the same 30\,bp fee while extracting profit from other venues.
On Base L2, 98\% of trades in a TFMM pool extracted less than \$0.01, with many subsidising the pool~\cite{willetts2026poolsportfoliosobservedarbitrage}: this incidental routing pushes $\phi \leq 0$, and $f_{\mathrm{threshold}}$ does not bind.

\section{Experiments}
\label{sec:experiments}

\subsection{Constant prices}
\label{sec:experiments_constant}

We compare interpolation methods numerically using the same $N=3$ token setup as~\cite{willetts2024optimalrebalancingdynamicamms}: $\v w(t_0)=\{0.05, 0.55, 0.4\}$, $\v w(t_f)=\{0.4, 0.5, 0.1\}$, $f=1000$ steps.

\paragraph{Weight trajectories and block-to-block changes}

Weight trajectories and block-to-block changes are plotted in Appendix~\ref{app:trajectories} (Figures~\ref{fig:traj} and~\ref{fig:change}).
At this scale, (AM+GM)/normalise and SLERP are visually indistinguishable, as expected from the exact midpoint equivalence (Theorem~\ref{thm:slerp_amgm}) and third-order agreement at general $t$ (\S\ref{ssec:comparison_general_t}).
For SLERP, block-to-block changes vary smoothly; constant metric speed in the Fisher--Rao metric does not correspond to constant weight increments in Cartesian coordinates, but the variation in increment size is small ($\sim 10^{-4}$ relative to the mean increment).

\paragraph{Per-step loss uniformity}

\begin{figure*}[htbp]
    \centering
    \includegraphics[width=\textwidth]{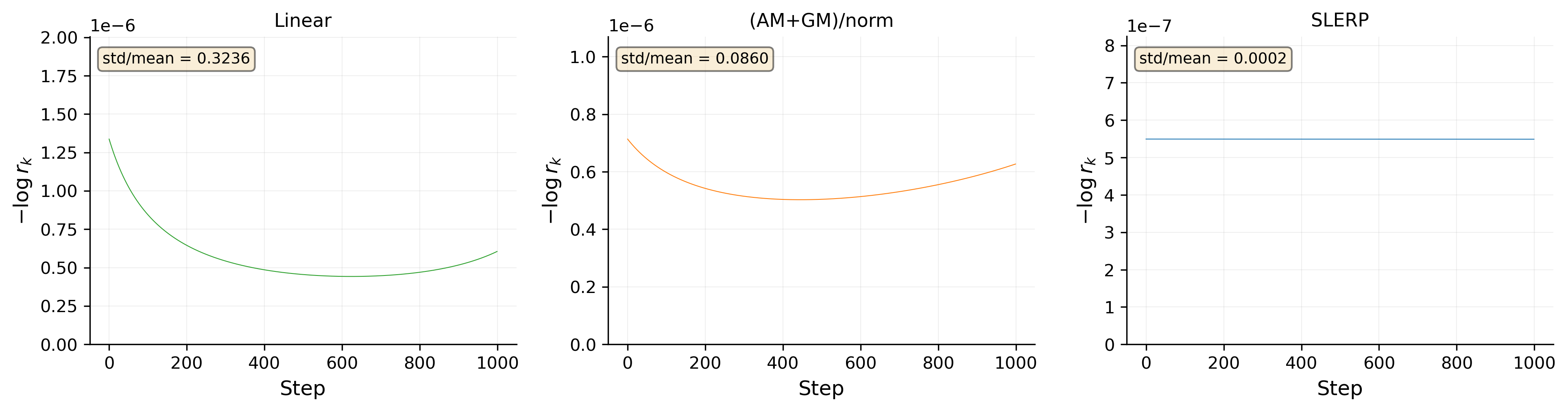}
    \caption{Per-step arbitrage loss $-\log r_k$ across 1000 steps, $N=3$ setup. Left to right: linear (std/mean $= 0.32$), (AM+GM)/normalise ($0.086$), SLERP ($0.0002$). SLERP achieves near-perfect uniformity, orders of magnitude better than other methods.}
    \label{fig:perstep_loss}
\end{figure*}

Figure~\ref{fig:perstep_loss} shows the per-step arbitrage loss, $-\log r_k$ across all 1000 steps, to directly test how different interpolation methods give different variation in loss per step.

\begin{table}[htbp]
    \centering
    \caption{Per-step loss uniformity: ratio of standard deviation to mean of $-\log r_k$.}
    \begin{tabular}{lc}
    \hline
    Method & std/mean \\
    \hline
    Linear & 0.3236 \\
    Geometric & 0.2155 \\
    (AM+GM)/normalise & 0.0860 \\
    SLERP & \textbf{0.0002} \\
    \hline
    \end{tabular}
    \label{tab:uniformity}
\end{table}

SLERP achieves near-perfect loss uniformity (Table~\ref{tab:uniformity}): constant-speed traversal of the geodesic equalises the per-step cost by construction.
The residual $0.02\%$ variation comes from higher-order terms beyond the quadratic approximation (Eq~\eqref{eq:tfmm_loss_kernel}).

\begin{figure*}[htbp]
    \centering
    \begin{subfigure}{\textwidth}
        \centering
        \includegraphics[width=\textwidth]{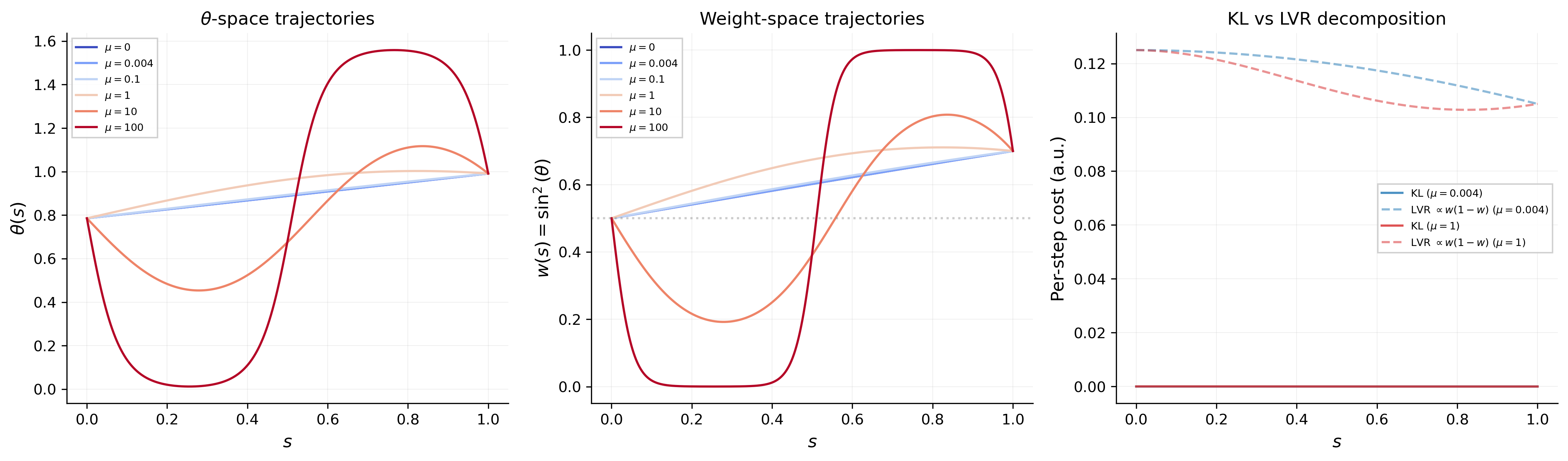}
        \caption{Two-token pendulum (Corollary~\ref{cor:pendulum}), $(0.5, 0.5) \to (0.7, 0.3)$, varying $\mu = fT\sigma^2/16$. Left: $\theta$-space trajectories. Centre: weight-space trajectories. Right: per-step KL and LVR cost decomposition for $\mu = 0.004$ and $\mu = 1$. At practical parameters ($\mu \ll 1$), the trajectory is indistinguishable from SLERP.}
        \label{fig:pendulum_traj}
    \end{subfigure}
    \begin{subfigure}{\textwidth}
        \centering
        \includegraphics[width=\textwidth]{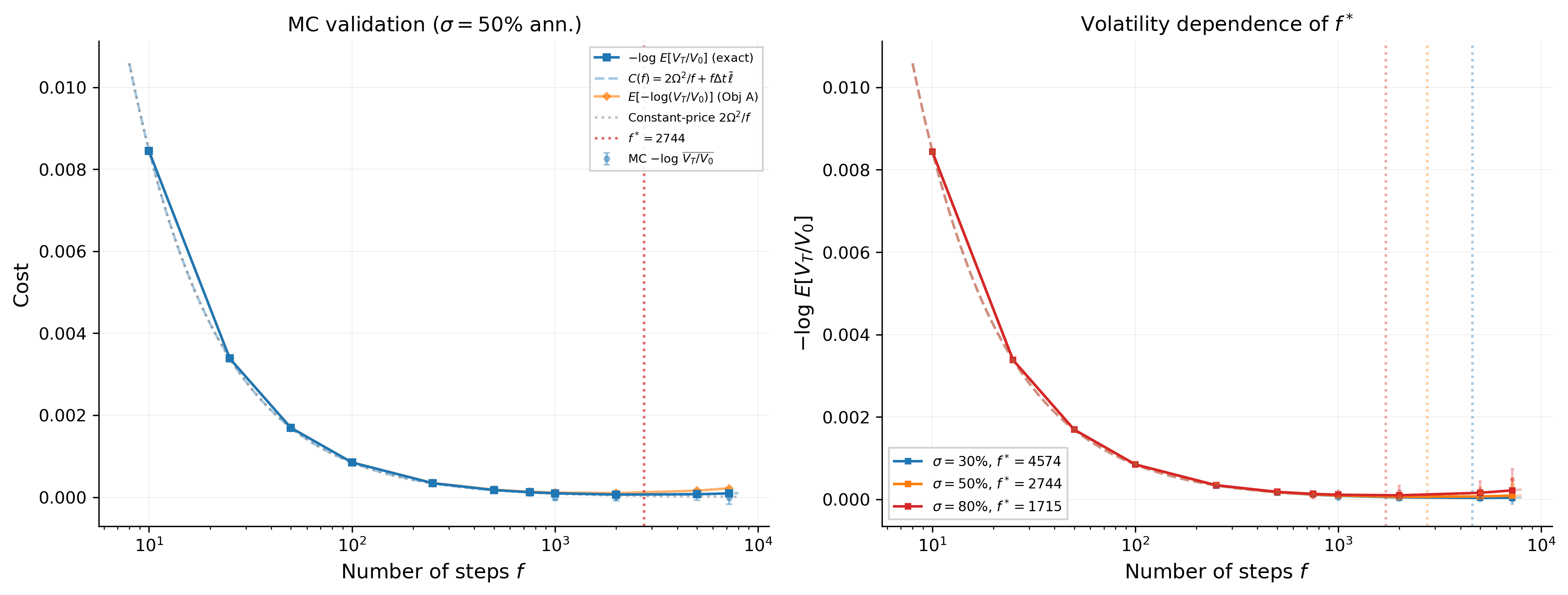}
        \caption{Left: expected log-loss vs.\ step count at $\sigma = 50\%$ ann.\ ($10\,000$ GBM paths per point); the analytical $C(f)$ (Eq~\eqref{eq:cost_tradeoff}) matches the MC estimate, and the constant-price curve $2\Omega^2/f$ underestimates above $f^*$. Right: $f^*$ vs.\ volatility; predicted values (dotted vertical lines) match the simulated optima, confirming $f^* \propto 1/\sigma$.}
        \label{fig:mc_validation}
    \end{subfigure}
    \begin{subfigure}{\textwidth}
        \centering
        \includegraphics[width=\textwidth]{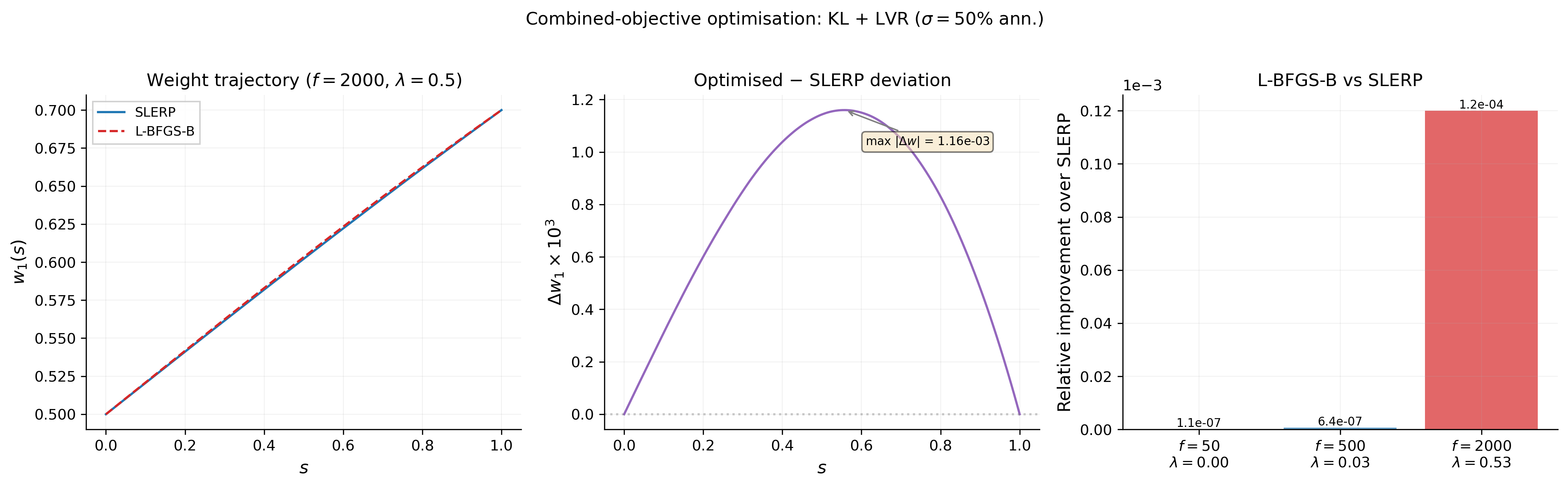}
        \caption{Combined-objective (KL\,+\,LVR) optimisation, $\sigma = 50\%$ ann. Left: SLERP and L-BFGS-B trajectories at $f = 2000$ ($\lambda = 9.0$). Centre: optimised$\,-\,$SLERP deviation (max $|\Delta w| = 1.16 \times 10^{-3}$). Right: relative improvement of L-BFGS-B over SLERP at three $(\!f, \lambda)$ operating points.}
        \label{fig:lbfgsb_stoch}
    \end{subfigure}
    \caption{Stochastic-price experiments, two-token setup $(0.5, 0.5) \to (0.7, 0.3)$, $\sigma = 50\%$ ann.}
    \label{fig:stochastic_experiments}
\end{figure*}

\begin{figure*}[htbp]
    \centering
    \includegraphics[width=\textwidth]{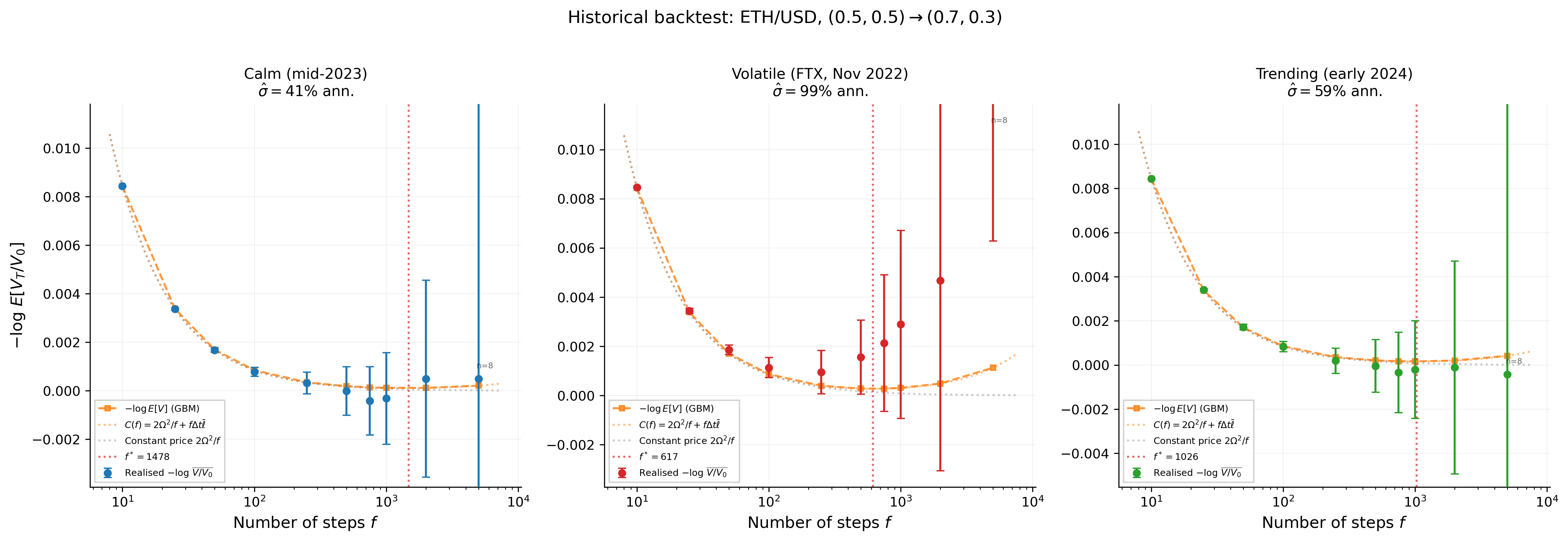}
    \caption{Historical backtest: ETH/USD, $(0.5, 0.5) \to (0.7, 0.3)$, across three market regimes (calm, volatile, trending). Dashed curves show the analytical $C(f)$; dots show realised $-\log(V_T/V_0)$ with 95\% bootstrap confidence intervals. Vertical lines mark $f^*$ (red, dashed) and the constant-price $2\Omega^2/f$ reference. The GBM prediction matches realised costs; $f^*$ sits near the observed minimum in each regime.}
    \label{fig:historical_backtest}
\end{figure*}

\paragraph{Brute-force numerical verification}

To verify that SLERP is near-optimal on the exact KL cost, we ran L-BFGS-B~\cite{byrd1995} optimisation over all $f-1$ intermediate weight vectors, initialised from the SLERP solution, for the $N=3$ setup.
At $f=1000$, the optimiser converged in 29 iterations without improving the objective beyond SLERP's value.
At $f=50$, the retained fraction improved from SLERP's $0.989\,047\,93$ to $0.989\,048\,06$, a relative gain of ${\sim}1.2 \times 10^{-5}$ (compare the $O(\Omega^2/f^2)$ bound of Theorem~\ref{thm:suboptimality}).

\paragraph{Difference between SLERP and other methods}

SLERP-minus-linear weight differences are $\sim0.03$, with the same arch shape as the optimal-minus-linear plot in~\cite{willetts2024optimalrebalancingdynamicamms}.
SLERP-minus-(AM+GM)/normalise differences are an order of magnitude smaller ($\sim0.003$), and differences between SLERP and the brute-force optimal trajectory are of order $10^{-7}$ (Figure~\ref{fig:diff}, Appendix~\ref{app:trajectories}).

\paragraph{Near-boundary behaviour}

Current G3M implementations enforce a minimum weight of $1\%$, where the Fisher--Rao metric diverges and the quadratic approximation is least accurate.
SLERP's advantage over other methods in fact \emph{widens} near the boundary: at $w_{\min} = 0.01$, linear interpolation incurs ${\sim}20\%$ more loss, and (AM+GM)/normalise ${\sim}3.5\%$ more.
For interior weights ($w_{\min} \geq 0.2$), all three methods are within $1\%$ of each other.
Full near-boundary experiments (convergence across step counts, per-step uniformity, and single-midpoint comparisons where Lambert~W outperforms SLERP by up to $10\%$) are in Appendix~\ref{app:near_boundary}.
Near the boundary, varying speed along the SLERP curve (fast through high-LVR regions, slow near vertices) reduces the total rebalancing-plus-LVR cost $C(f^*)$ by ${\sim}3\%$; see Appendix~\ref{app:jacobi}.

\subsection{Stochastic prices}
\label{sec:experiments_stochastic}

We use the two-token setup $(0.5, 0.5) \to (0.7, 0.3)$ with $\Delta t_{\mathrm{block}} = 12$\,s and annualised volatility $\sigma = 50\%$ unless stated otherwise.

\paragraph{Pendulum trajectories}
Figure~\ref{fig:pendulum_traj} shows solutions of the two-token pendulum (Corollary~\ref{cor:pendulum}) for $\mu$ from $0$ (free geodesic) to $100$ (boundary-layer regime).
At $\mu = 0$ the trajectory is SLERP; as $\mu$ increases, the LVR potential deflects the path away from equal weighting toward the vertices (cf.\ Appendix~\ref{app:asymptotics}).
At $\mu = 100$ the trajectory jumps to a vertex and remains there for the bulk of the interpolation.
At $\mu = 0.004$ (the practical regime at $f = f^*$), the KL cost dominates and the LVR contribution is nearly flat; at $\mu = 1$ the LVR cost varies significantly along the path.

\paragraph{Monte Carlo validation of $f^*$}
Figure~\ref{fig:mc_validation} compares the analytical cost $C(f)$ against Monte Carlo estimates ($10\,000$ GBM paths per point).
The predicted $f^*_{E[V]} = 2744$ coincides with the observed minimum.
At $\sigma = 30\%$, $50\%$, and $80\%$ annualised, the predicted $f^*$ values (4574, 2744, 1715) agree with the simulated optima, as the $f^* \propto 1/\sigma$ scaling of Proposition~\ref{prop:optimal_f} requires.

\paragraph{SLERP near-optimality under KL\,+\,LVR}
Figure~\ref{fig:lbfgsb_stoch} compares SLERP against L-BFGS-B~\cite{byrd1995} optimisation of the combined KL\,+\,LVR objective.
At $f = 2000$ ($\lambda = 9.0$), the optimised trajectory deviates from SLERP by at most $|\Delta w| = 1.16 \times 10^{-3}$, the order of magnitude predicted by the Green's function correction (Appendix~\ref{app:asymptotics}).
The relative cost improvement over SLERP is ${\sim}10^{-7}$ at $f = 50$ ($\lambda = 0.01$), ${\sim}6 \times 10^{-7}$ at $f = 500$ ($\lambda = 0.56$), and $1.2 \times 10^{-4}$ at $f = 2000$ ($\lambda = 9.0$).
At $f = f^*$, where $\lambda^* = O(1)$, there is nothing to gain from departing from SLERP.

\paragraph{Historical backtest}
Figure~\ref{fig:historical_backtest} replays the rebalancing on three historical ETH/USD windows with annualised volatilities ranging from $41\%$ (calm, mid-2023) to $99\%$ (FTX crash, Nov 2022).\footnote{USD used to sidestep fluctuations in stablecoin value which mean their volatility is not identically zero.}
The GBM-based $C(f)$ and $f^*$ track the realised costs well; in the volatile regime, heavier tails widen the confidence intervals but $f^*$ still sits near the observed minimum.
The trending window has nonzero drift, which the driftless-GBM model does not capture.
However, the retention ratio is price-independent (Proposition~\ref{prop:price_independence}) regardless of drift, and the cross-term telescoping (Proposition~\ref{prop:telescoping}) requires only that price increments are independent across blocks, not that they are driftless.
Drift affects the LVR rate $\bar\ell$ but not the path optimality; over typical rebalancing windows (${\sim}8$ hours at $f^*$), drift is small relative to volatility.

\section{Related Work}
\label{sec:related}

\paragraph{Dynamic weight AMMs}
Several protocols implement G3M pools with time-varying weights.
Balancer's Liquidity Bootstrapping Pools~\cite{balancer} change weights on a
fixed schedule to create a Dutch-auction-like price decline for token distribution; all deployed implementations (Balancer V1--V3, Fjord Foundry~\cite{fjordfoundry}) use linear weight interpolation.
Aera Finance V1~\cite{aera} adjusts weights via
external controllers, performing linear weight interpolation.
In each case the weight schedule shapes the sequence of arbitrage
opportunities the pool offers: an LBP is a Dutch auction designed to maximise
price discovery, while a TFMM rebalancing is an auction on the total surplus paid to arbitrageurs.
The interpolation curve governs the fine properties of the auction in both
settings, but only the TFMM objective (cost minimisation) has been studied
formally~\cite{willetts2024optimalrebalancingdynamicamms}.

\paragraph{Arbitrage cost and LVR}
The LVR framework of \citet{lvr} characterises the adverse-selection cost
that arbitrageurs extract from AMM liquidity providers, with extensions to
fee-adjusted settings~\cite{lvr_arb}.
\citet{evansG3Ms} analyses LP returns in G3Ms with time-varying weights.
These results quantify the per-block cost of holding a position but do not
optimise the schedule of weight changes within a rebalancing episode.
TWAMM~\cite{twamm} solves the analogous problem for trade
execution: a large \emph{swap} is decomposed into infinitesimal sub-orders to reduce price impact, motivated by the same convexity that drives weight
interpolation here.

\paragraph{Information geometry and portfolio theory}
\citet{pal2017exponentiallyconcavefunctionsnew} study a
geometry on the simplex induced by an L-divergence, with a generalised
Pythagorean theorem governing optimal rebalancing frequency in stochastic
portfolio theory.
Their framework and ours both arrive at Riemannian structure on the weight
simplex, but from different divergences for different reasons: L-divergence
from portfolio excess growth, KL divergence from arbitrage cost.

\section{Discussion}
\label{sec:discussion}

\paragraph{A unifying geometric perspective}
In the language of information geometry~\cite{amari_info_geo,amari_nagaoka2000}, the arbitrage cost is the canonical divergence of the dually-flat simplex (Appendix~\ref{app:info_geo}).
The Fisher--Rao metric provides a single framework that explains why the various candidate interpolation methods (linear, geometric, (AM+GM)/normalise, Lambert~W) all perform similarly.
Each corresponds to a different geodesic or approximate geodesic on the weight simplex (Appendix~\ref{app:info_geo}); they nearly agree because the simplex is nearly flat at the scale of typical per-step weight changes.
The sub-optimality bound (Theorem~\ref{thm:suboptimality}) makes this precise: for any reasonable step count, the choice between methods is practical, not a matter of loss optimality.

\paragraph{Accuracy of the quadratic approximation}
The quadratic loss kernel, Eq~\eqref{eq:tfmm_loss_kernel}, has cubic remainder in the per-step weight changes: for $f$-step interpolation the cumulative error is $O(1/f^2)$.
A $[1,1]$ Pad\'e approximant resums this remainder into a rational function that matches the exact KL cost through $O(u^4)$ and captures the cubic asymmetry between weight increases and decreases (Appendix~\ref{app:pade}).
The Lambert~W approach of~\cite{willetts2024optimalrebalancingdynamicamms} optimises each weight component independently and renormalises.
It maximises the exact finite-step retention ratio rather than a quadratic surrogate, which makes it more accurate near the simplex boundary where the quadratic approximation is least tight (Appendix~\ref{app:near_boundary}).

\paragraph{Stochastic prices and practical deployment}
The price-independence of the retention ratio (Proposition~\ref{prop:price_independence}) means that a TFMM deployer can choose the interpolation curve without forecasting prices.
The only new parameter is $f^*$, which requires an estimate of average LVR along the path; for a pool with known asset volatilities, this reduces to evaluating a single integral over the SLERP curve (Remark~\ref{rem:circularity}).
On chains where fee income exceeds LVR extraction ($\phi < 0$), even $f^*$ is unnecessary: the constant-price analysis applies and more steps is always preferable.
If the target weights change mid-interpolation (e.g.\ a new strategy signal arrives before the current interpolation completes), the pool can restart SLERP from its current weights to the new target; since any sub-arc of a geodesic is itself a geodesic, the already-completed portion of the interpolation was optimal for the distance covered.

\paragraph{Fees and market microstructure}
Swap fees and gas costs qualitatively change the tradeoff (\S\ref{sec:fees}).
Fee revenue from arb trades telescopes along monotonic paths (Proposition~\ref{prop:fee_telescoping}), creating a path-independent offset that does not change the optimal interpolation curve (Corollary~\ref{cor:slerp_fees}).
The fee band caps the useful step count at $f_{\mathrm{threshold}}$ (Eq~\eqref{eq:f_threshold}), and the fee-adjusted optimum (Proposition~\ref{prop:fee_adjusted_f}) incorporates the empirical cost-of-carry $\phi$.
On L2, where low gas costs enable high-frequency multi-hop routing at standard fees (e.g.\ 30\,bp), $\phi$ can be negative, eliminating the finite-$f$ tradeoff entirely.
The per-block weight change caps of~\cite{willetts2024multiblockmevopportunities} bound the arbitrage opportunity available to an attacker with last-look/first-look across consecutive blocks; larger $f$ for the same total weight change means smaller per-block changes, which makes this attack harder.

\paragraph{Implementation considerations}
Computing SLERP requires one $\arccos$ evaluation, $N$ square roots, and per-step $\sin$ calls, comparable in cost to geometric mean interpolation.
On-chain, where transcendental functions are expensive, (AM+GM)/normalise~\cite{willetts2024optimalrebalancingdynamicamms} is still the pragmatic default.
The recursive bisection algorithm (Corollary~\ref{cor:bisection}) shows that this heuristic computes exact geodesic points at power-of-two step counts using only elementary arithmetic.
No trigonometric functions are needed.

\section{Concluding remarks}

The arbitrage cost of rebalancing a G3M pool is governed by the Fisher--Rao metric, and the optimal interpolation is its geodesic (SLERP), computable on-chain without trigonometric functions via recursive bisection.
This structure survives stochastic prices, swap fees, and gas costs.

The geometric framework generalises beyond G3M pools.
An AMM whose trading function is parameterised by a vector will typically have its arbitrage cost induce a Riemannian metric on its parameter space, and the optimal rebalancing trajectory will be a geodesic of that metric.
For G3M, the metric is Fisher--Rao because the cost is a KL divergence; other designs (e.g. virtual balance interpolation in concentrated liquidity designs) would have different metrics and different geodesics.
Identifying these metrics and their geodesics could yield analogous rebalancing strategies for other designs.
\FloatBarrier

\newpage
\bibliography{biblio}

\newpage
\onecolumn
\begin{appendices}
\setcounter{figure}{0}
\setcounter{equation}{0}
\setcounter{table}{0}
\renewcommand\thefigure{\thesection.\arabic{figure}}
\renewcommand\theequation{\thesection.\arabic{equation}}
\renewcommand\thetable{\thesection.\arabic{table}}

\section{Retention Ratio}

\subsection{Deferred Proof: Proposition~\ref{prop:retention} [Retention ratio]}
\label{app:retention}

\retention*

\begin{proof}
Start with Eq~\eqref{eq:reserve_change},
\begin{equation}
    R_i^{\mathrm{end}} = R_i^{\mathrm{start}} \frac{w_i^{\mathrm{end}}}{w_i^{\mathrm{start}}} \prod_{j=1}^N \left(\frac{w_j^{\mathrm{start}}}{w_j^{\mathrm{end}}}\right)^{w_j^{\mathrm{end}}},
\end{equation}
which we can rearrange to get
\begin{equation}
    \frac{R_i^{\mathrm{end}}}{R_i^{\mathrm{start}}} \frac{w_i^{\mathrm{start}}}{w_i^{\mathrm{end}}} = \prod_{j=1}^N \left(\frac{w_j^{\mathrm{start}}}{w_j^{\mathrm{end}}}\right)^{w_j^{\mathrm{end}}},
    \label{eq:value_ratio}
\end{equation}
and we thus define $r:= \prod_{j=1}^N \left(\frac{w_j^{\mathrm{start}}}{w_j^{\mathrm{end}}}\right)^{w_j^{\mathrm{end}}}$.
For given reserves and market prices $\v m_p$, pool value is $V=\sum_{i=1}^N R_i m_{p,i}$.
By the action of arbitrageurs, the pool holds the minimum value possible under its trading function. The weight vector then gives the division of value between assets~\cite{tfmm_litepaper}: the G3M marginal price of token $i$ is $w_i k / R_i$~\cite{evansG3Ms}, which at equilibrium equals $p_i$, giving $R_i p_i = w_i V$ for all $i$. So we have
\begin{align}
    V^{\mathrm{start}} w^{\mathrm{start}}_i &= R^{\mathrm{start}}_i m_{p,i} \\
    V^{\mathrm{end}} w^{\mathrm{end}}_i &= R^{\mathrm{end}}_i m_{p,i}
\end{align}
Dividing these componentwise, rearranging, and substituting Eq~\eqref{eq:value_ratio} gives
\begin{align}
    \frac{V^{\mathrm{end}}}{V^{\mathrm{start}}} \frac{w^{\mathrm{end}}_i}{w^{\mathrm{start}}_i} &= \frac{R^{\mathrm{end}}_i}{R^{\mathrm{start}}_i}\\
    \Rightarrow \frac{V^{\mathrm{end}}}{V^{\mathrm{start}}} &= \frac{R_i^{\mathrm{end}}}{R_i^{\mathrm{start}}} \frac{w_i^{\mathrm{start}}}{w_i^{\mathrm{end}}}\\
    \Rightarrow \frac{V^{\mathrm{end}}}{V^{\mathrm{start}}} &=\prod_{j=1}^N \left(\frac{w_j^{\mathrm{start}}}{w_j^{\mathrm{end}}}\right)^{w_j^{\mathrm{end}}} \\
     \Rightarrow \frac{V^{\mathrm{end}}}{V^{\mathrm{start}}} &=r 
\end{align}
as required.
\end{proof}

\subsection{Deferred Proof: Corollary~\ref{cor:quadratic_loss} [Quadratic loss kernel]}
\label{app:taylor}

\quadraticloss*

\begin{proof}
We derive Eq~\eqref{eq:tfmm_loss_kernel} in detail.
Starting from the retention ratio, Eq~\eqref{eq:retention}:
\begin{equation}
    \log r = \sum_{i=1}^N w_i^{\mathrm{end}} \log\!\left(\frac{w_i^{\mathrm{start}}}{w_i^{\mathrm{end}}}\right).
\end{equation}
Writing $w_i^{\mathrm{end}} = w_i + \Delta w_i$ (where $w_i \equiv w_i^{\mathrm{start}}$ for brevity):
\begin{align}
    \log r &= \sum_{i=1}^N (w_i + \Delta w_i)\log\!\left(\frac{w_i}{w_i + \Delta w_i}\right) \nonumber\\
    &= -\sum_{i=1}^N (w_i + \Delta w_i)\log\!\left(1 + \frac{\Delta w_i}{w_i}\right).
\end{align}
Expanding $\log(1+x) = x - x^2/2 + O(x^3)$ with $x = \Delta w_i / w_i$:
\begin{align}
    \log r &= -\sum_{i=1}^N (w_i + \Delta w_i)\left(\frac{\Delta w_i}{w_i} - \frac{(\Delta w_i)^2}{2w_i^2} + O\!\left(\frac{(\Delta w_i)^3}{w_i^3}\right)\right) \nonumber\\
    &= -\sum_{i=1}^N \left(\Delta w_i + \frac{(\Delta w_i)^2}{w_i} - \frac{(\Delta w_i)^2}{2w_i} - \frac{(\Delta w_i)^3}{2w_i^2} + \cdots\right) \nonumber\\
    &= -\underbrace{\sum_{i=1}^N \Delta w_i}_{=\,0} - \sum_{i=1}^N \frac{(\Delta w_i)^2}{2w_i} + O(\Delta w^3).
\end{align}
The linear term vanishes because $\sum_i \Delta w_i = 0$ (weights sum to 1 at both endpoints), leaving
\begin{equation}
    -\log r \;\approx\; \sum_{i=1}^N \frac{(\Delta w_i)^2}{2w_i},
\end{equation}
as required.
\end{proof}

\section{Deferred proof: Constant metric speed minimises total loss (used in Corollary~\ref{cor:slerp_optimal})}
\label{app:constant_speed_proof}

\begin{quote}
\textbf{Claim.} Let $s_0 < s_1 < \cdots < s_f$ be a partition of a path of total arc-length $S = s_f - s_0$.
The sum $\sum_{k=1}^f (\Delta s_k)^2$, where $\Delta s_k = s_k - s_{k-1}$, is minimised when all increments are equal: $\Delta s_k = S/f$ for all $k$.
\end{quote}

\begin{proof}
By the QM--AM inequality:
\[
    \frac{1}{f}\sum_{k=1}^f (\Delta s_k)^2 \geq \left(\frac{1}{f}\sum_{k=1}^f \Delta s_k\right)^{\!2} = \frac{S^2}{f^2}.
\]
Hence $\sum_k (\Delta s_k)^2 \geq S^2/f$, with equality iff all $\Delta s_k$ are equal.
\end{proof}

This result, combined with the identification of per-step loss as $(\mathrm{d}s_k)^2$, establishes that the loss-minimising parameterisation of any path is constant metric speed.
The further optimisation over the choice of path then selects the geodesic, since among all paths with a given total arc-length $S$, the one with smallest $S$ gives the smallest $S^2/f$.

\section{Deferred proof: Theorem~\ref{thm:suboptimality} [sub-optimality bound]}
\label{app:suboptimality_proof}

\suboptimalitybound*

\begin{proof}
Write the total exact cost as $C = Q + R$, where
\[
    Q = \sum_{k=1}^f \sum_i \frac{(\Delta w_i^{(k)})^2}{2\,w_i^{(k-1)}}, \qquad R = \sum_{k=1}^f R_k,
\]
and $R_k = \sum_i w_i^{(k-1)}\, \tilde{h}(u_i^{(k)})$ with $u_i^{(k)} = \Delta w_i^{(k)}/w_i^{(k-1)}$ and $\tilde{h}(u) = (1{+}u)\log(1{+}u) - u - u^2/2$ (the KL remainder beyond the quadratic; note this differs from the full per-component KL function $h(u) = (1{+}u)\log(1{+}u) - u$ used in Appendix~\ref{app:pade}).
SLERP minimises $Q$, so $\nabla Q\big|_{\mathrm{SLERP}} = 0$ and $\nabla C\big|_{\mathrm{SLERP}} = \nabla R\big|_{\mathrm{SLERP}}$.

\paragraph{Step 1: per-step change bound}
Along the SLERP path the per-step angular displacement on the Hellinger sphere is $\Omega/f$.
The weight change satisfies $\Delta w_i = (\eta_i^{(k)} + \eta_i^{(k-1)})(\eta_i^{(k)} - \eta_i^{(k-1)})$, so $|\Delta w_i| \leq 2|\Delta\eta_i|$ (since $|\eta_i| \leq 1$) and
\[
    |u_i^{(k)}| = \frac{|\Delta w_i^{(k)}|}{w_i^{(k-1)}} \leq \frac{2|\Delta\eta_i^{(k)}|}{\epsilon} \leq \frac{2\Omega}{f\epsilon},
\]
where the last inequality uses $|\Delta\eta_i| \leq |\boldsymbol{\Delta\eta}| \leq \Omega/f$ and $w_i \geq \epsilon$.

\paragraph{Step 2: gradient of the remainder}
Each component of $\nabla R$ with respect to $w_i^{(k)}$ receives contributions from steps $k$ and $k{+}1$.
Since $\tilde{h}'(u) = \log(1{+}u) - u$ satisfies $|\tilde{h}'(u)| \leq u^2/(1 - |u|)$ for $|u| < 1$, and $|u_i^{(k)}| \leq 2\Omega/(f\epsilon) \leq 1/2$ for $f \geq 4\Omega/\epsilon$, each gradient component has magnitude
\[
    \left|\frac{\partial R}{\partial w_i^{(k)}}\right| \leq 2 \cdot \frac{(2\Omega/(f\epsilon))^2}{1/2} = \frac{16\,\Omega^2}{f^2 \epsilon^2}.
\]
There are $(f{-}1)$ intermediate weight vectors, each with $N{-}1$ degrees of freedom on the simplex, giving
\[
    \|\nabla R\|^2 \leq (f{-}1)(N{-}1)\left(\frac{16\,\Omega^2}{f^2\epsilon^2}\right)^{\!2} \leq \frac{256\,N\,\Omega^4}{f^3\,\epsilon^4}.
\]

\paragraph{Step 3: Hessian lower bound}
The Hessian of $C$ restricted to the simplex constraints is block-diagonal in the intermediate weight vectors.
Each block receives contributions from two consecutive KL terms.
For the first, $D_{\mathrm{KL}}(\v w^{(k)} \| \v w^{(k-1)})$, differentiating the first-argument contribution $w_i^{(k)}\log(w_i^{(k)}/w_i^{(k-1)})$ twice gives $\partial^2/\partial(w_i^{(k)})^2 = 1/w_i^{(k)} \geq 1$ (since $w_i \leq 1$).
For the second, $D_{\mathrm{KL}}(\v w^{(k+1)} \| \v w^{(k)})$, the second-argument contribution $-w_i^{(k+1)}\log w_i^{(k)}$ gives $\partial^2/\partial(w_i^{(k)})^2 = w_i^{(k+1)}/(w_i^{(k)})^2 \geq 1/2$ (using $w_i^{(k+1)}/w_i^{(k)} = 1 + u_i \geq 1/2$ for $|u| \leq 1/2$, and $w_i^{(k)} \leq 1$).
On the simplex hyperplane, each block therefore has $\lambda_{\min} \geq 1 + 1/2 = 3/2$.

\paragraph{Step 4: quadratic bound}
Since SLERP is a critical point of $Q$ and the Hessian of $C$ at SLERP is positive definite, the standard bound on the cost improvement available from a gradient step gives
\[
    C_{\mathrm{SLERP}} - C_* \leq \frac{\|\nabla C\|^2}{2\,\lambda_{\min}(\nabla^2 C)} \leq \frac{1}{3}\cdot\frac{256\,N\,\Omega^4}{f^3\,\epsilon^4} = \frac{256\,N}{3\,f^3\,\epsilon^4}\,\Omega^4,
\]
so $A = 256N/(3\epsilon^4) \leq 86\,N/\epsilon^4$ suffices (using $\lambda_{\min} \geq 3/2$).
The relative sub-optimality is $A\,\Omega^4/f^3$ divided by the total SLERP cost $2\Omega^2/f$, giving $O(\Omega^2/f^2)$.
\end{proof}

\section{Deferred proof: modified geodesic and two-token pendulum}
\label{app:modified_geodesic_proof}

\modifiedgeodesic*
\begin{proof}
The cost functional has Lagrangian
$L = \tfrac{1}{2f}g_{ij}\dot{w}^i\dot{w}^j + T\ell(\v w)$,
where $g_{ij} = \delta_{ij}/w_i$ is the Fisher--Rao metric.
The Euler--Lagrange equation gives:
\[
\frac{1}{f}\frac{d}{ds}\!\left(\frac{\dot{w}^i}{w_i}\right) = T\frac{\partial \ell}{\partial w^i} + \frac{1}{2f}\frac{(\dot{w}^i)^2}{w_i^2}
\]
Expanding the left-hand side:
\[
\frac{1}{f}\!\left(\frac{\ddot{w}^i}{w_i} - \frac{(\dot{w}^i)^2}{w_i^2}\right) = T\frac{\partial \ell}{\partial w^i} + \frac{1}{2f}\frac{(\dot{w}^i)^2}{w_i^2}
\]
Multiplying through by $w_i$ and recognising that $\Gamma^i_{jk} = -\delta_{ij}\delta_{ik}/(2w_i)$ for the Fisher--Rao metric on the simplex, so that $\Gamma^i_{jk}\dot{w}^j\dot{w}^k = -(\dot{w}^i)^2/(2w_i)$, the kinetic terms combine to give
\[
\frac{1}{f}\!\left(\ddot{w}^i + \Gamma^i_{jk}\dot{w}^j\dot{w}^k\right) = T\,w_i\frac{\partial \ell}{\partial w^i} = T\,g^{ij}\frac{\partial \ell}{\partial w^j}
\]
where $g^{ij} = w_i\,\delta_{ij}$ is the inverse Fisher--Rao metric.
\end{proof}

\pendulum*
\begin{proof}
For $N=2$ with weights $(w, 1-w)$, $\sigma_1 = \sigma$, $\sigma_2 = 0$, the Hellinger coordinate $\theta = \arcsin(\sqrt{w})$ gives $w = \sin^2\theta$ and $1-w = \cos^2\theta$.
The Fisher--Rao kinetic term becomes $\frac{1}{2f}[(\dot{w})^2/w + (\dot{w})^2/(1-w)] = \frac{2}{f}\dot\theta^2$ (using $\dot{w} = 2\sin\theta\cos\theta\,\dot\theta = \sin(2\theta)\dot\theta$).
The LVR potential is $\ell = \frac{\sigma^2}{2}w(1-w) = \frac{\sigma^2}{8}\sin^2(2\theta)$.
The Lagrangian is $L = \frac{2}{f}\dot\theta^2 + T\frac{\sigma^2}{8}\sin^2(2\theta)$, and the Euler--Lagrange equation gives $\frac{4}{f}\ddot\theta = T\frac{\sigma^2}{4}\sin(4\theta)$, which simplifies to Eq~\eqref{eq:pendulum}.
\end{proof}

\section{Taylor agreement of interpolation methods}
\label{app:taylor_agreement}

Write $w_i \equiv w_i^{\mathrm{start}}$ and $u_i = \Delta w_i / w_i$ where $\Delta w_i = w_i^{\mathrm{end}} - w_i^{\mathrm{start}}$.
The simplex constraint gives $\sum_i w_i u_i = 0$.
Define $s^2 = \sum_j w_j u_j^2$.
We expand all three methods to second order in $u$.

\paragraph{SLERP}
Under the Hellinger embedding, $\sqrt{w_i^{\mathrm{end}}} = \sqrt{w_i}\sqrt{1+u_i} = \sqrt{w_i}(1 + u_i/2 - u_i^2/8 + O(u^3))$.
The Bhattacharyya coefficient is
\[
\cos\Omega = \sum_j w_j \sqrt{1+u_j} = 1 + \tfrac{1}{2}\underbrace{\sum_j w_j u_j}_{=\,0} - \tfrac{1}{8}s^2 + O(u^3),
\]
so $\Omega^2 = 2(1-\cos\Omega) + O(u^4) = s^2/4 + O(u^3)$.
The SLERP coefficients satisfy, for small $\Omega$,
\[
\frac{\sin(x\Omega)}{\sin\Omega} = x + \frac{x(1-x^2)}{6}\Omega^2 + O(\Omega^4).
\]
Since $\Omega^2 = O(u^2)$, any product of an $\Omega^2$ correction with a term already $O(u)$ is $O(u^3)$, so
\[
\eta_i(t) = \sqrt{w_i}\!\left[(\alpha + \beta) + \frac{tu_i}{2} - \frac{tu_i^2}{8}\right] + O(u^3),
\]
where $\alpha = \sin((1\!-\!t)\Omega)/\sin\Omega$, $\beta = \sin(t\Omega)/\sin\Omega$.
Using $(1\!-\!t)^3 + t^3 = 1 - 3t(1\!-\!t)$, the sum evaluates to $\alpha + \beta = 1 + \frac{t(1-t)}{2}\Omega^2 + O(u^4)$.\footnote{This comes from expanding $\sin((1-t)\Omega)/\sin\Omega + \sin(t\Omega)/\sin\Omega$ to order $\Omega^2$.}
Squaring and substituting $\Omega^2 = s^2/4$:
\begin{equation}
w_i^{\mathrm{SLERP}}(t) = w_i + t\,\Delta w_i + \frac{t(1-t)}{4}\,w_i\!\left(s^2 - u_i^2\right) + O(u^3).
\label{eq:slerp_expansion}
\end{equation}

\paragraph{(AM+GM)/normalise}
The arithmetic component (Eq~\eqref{eq:approx_optimal_traj_am}) is $w_i^{\mathrm{AM}}(t) = w_i(1 + tu_i)$.
The geometric component (Eq~\eqref{eq:approx_optimal_traj_gm}) is $w_i^{\mathrm{GM}}(t) = w_i(1+u_i)^t = w_i(1 + tu_i + \frac{t(t-1)}{2}u_i^2 + O(u^3))$.
Their unnormalised sum is
\[
w_i^{\mathrm{AM}} + w_i^{\mathrm{GM}} = w_i\!\left[2 + 2tu_i + \tfrac{t(t-1)}{2}u_i^2\right] + O(u^3).
\]
The normalisation factor, using $\sum_j w_j u_j = 0$, is
\[
Z = \sum_j (w_j^{\mathrm{AM}} + w_j^{\mathrm{GM}}) = 2 + \tfrac{t(t-1)}{2}s^2 + O(u^3).
\]
Dividing and expanding $1/Z$:
\begin{equation}
\breve{w}_i(t) = w_i + t\,\Delta w_i + \frac{t(1-t)}{4}\,w_i\!\left(s^2 - u_i^2\right) + O(u^3),
\label{eq:amgm_expansion}
\end{equation}
which matches Eq~\eqref{eq:slerp_expansion} exactly through $O(u^2)$.

\paragraph{Lambert~W at the midpoint}
The optimal 2-step midpoint of~\cite{willetts2024optimalrebalancingdynamicamms} is $\tilde{w}_i^* = w_i(1+u_i)/W_0(e(1+u_i))$.
Expanding $W_0(e(1+u))$ around $u=0$ (where $W_0(e)=1$) by implicit differentiation of $W e^W = e(1+u)$:
\[
W_0(e(1+u)) = 1 + \tfrac{u}{2} - \tfrac{3u^2}{16} + O(u^3).
\]
Then $\tilde{w}_i^* = w_i(1+u_i)(1 + u_i/2 - 3u_i^2/16)^{-1} = w_i(1 + u_i/2 - u_i^2/16 + O(u^3))$.
This requires renormalisation; the normalisation factor is $\sum_j \tilde{w}_j^* = 1 - s^2/16 + O(u^3)$, giving
\begin{equation}
\hat{w}_i^{\mathrm{LW}} = w_i + \tfrac{1}{2}\Delta w_i + \tfrac{1}{16}\,w_i(s^2 - u_i^2) + O(u^3).
\label{eq:lambert_expansion}
\end{equation}
Setting $t = 1/2$ in Eq~\eqref{eq:slerp_expansion} gives the same second-order term $\tfrac{(1/2)(1/2)}{4} = \tfrac{1}{16}$, confirming agreement through~$O(u^2)$.

\paragraph{The $O(u^3)$ difference}
At general $t \neq 1/2$, the third-order terms of SLERP and (AM+GM)/normalise differ (the former involves $\Omega^4$ corrections from the SLERP coefficients, the latter involves cubic terms from $(1+u)^t$).
At $t = 1/2$, these methods agree exactly to all orders by Theorem~\ref{thm:slerp_amgm}.
The Lambert~W midpoint first differs from both at $O(u^3)$, reflecting the fact that it optimises the exact retention ratio while the other two optimise the quadratic surrogate.

\section{Why (AM+GM)/normalise works: the $\alpha$-family of geodesics}
\label{app:info_geo}

The midpoint equivalence (Theorem~\ref{thm:slerp_amgm}) follows from the binomial identity, but it has a geometric interpretation in terms of a family of ``straight lines'' on the probability simplex~\cite{amari_info_geo}.

Information geometry parameterises a family of connections on the simplex by $\alpha \in [-1, 1]$.
Each value of $\alpha$ gives a different notion of geodesic:

\begin{itemize}
    \item \textbf{$\alpha = -1$ (arithmetic/linear interpolation).}
    Geodesics are straight lines in the ordinary simplex coordinates $w_i$.
    The midpoint is $w_i^{\mathrm{mid}} = (w_i^{\mathrm{start}} + w_i^{\mathrm{end}})/2$.
    This is the interpolation used by liquidity bootstrapping pools and by TFMM when weights change linearly~\cite{tfmm_litepaper}.

    \item \textbf{$\alpha = +1$ (geometric interpolation).}
    Geodesics are straight lines in the log-coordinates $\theta_i = \log w_i$.
    The midpoint is $w_i^{\mathrm{mid}} \propto \sqrt{w_i^{\mathrm{start}}\, w_i^{\mathrm{end}}}$ (after renormalisation).

    \item \textbf{$\alpha = 0$ (Fisher--Rao metric geodesic).}
    Geodesics are great circles on the sphere under the Hellinger embedding.
    The midpoint is the SLERP midpoint.
\end{itemize}

AM and GM are the midpoints of the two extreme geodesics in this family.
Why does their \emph{sum} land on the middle geodesic?

The Hellinger embedding forces this, and it is not a general principle about geodesic families.
The $\alpha=0$ midpoint is computed by averaging the Hellinger coordinates $\sqrt{w_i}$ and squaring back.
Squaring a sum of square roots produces the binomial identity:
\begin{equation}
    \left(\sqrt{a} + \sqrt{b}\right)^2 = \underbrace{(a + b)}_{\text{AM term}} + \underbrace{2\sqrt{ab}}_{\text{GM term}}
\end{equation}
The sum appears because squaring distributes as addition.
Had the embedding involved a different power (say, cube roots), the decomposition would be different and AM\,+\,GM would not appear.
The square root is forced by the Fisher--Rao metric: $w_i \mapsto w_i^{1/2}$ is the unique power that makes the metric Euclidean (up to scale), and the binomial identity is a consequence of that specific power.

The $\alpha=-1$ (arithmetic) midpoint automatically satisfies $\sum_i w_i^{\mathrm{mid}} = 1$, while the $\alpha=+1$ (geometric) midpoint does not and requires renormalisation.
The (AM+GM)/normalise formula inherits this need for renormalisation from its geometric-mean component.

For general $t \neq 1/2$, the three $\alpha$-geodesics diverge.
The (AM+GM)/\allowbreak{}normalise trajectory averages the $\alpha = \pm 1$ trajectories pointwise and no longer lies exactly on the $\alpha = 0$ geodesic.
The third-order agreement reported in Table~\ref{tab:agreement} reflects the smooth dependence of geodesics on~$\alpha$: near any base point, geodesics of the different connections share the same tangent vector and second-order curvature, differing only at third order.

\subsection{Differential-geometric context}
\label{app:formal_connections}

For readers familiar with differential geometry, the three $\alpha$-values above correspond to specific affine connections on the simplex manifold.

The $\alpha=0$ connection is the \emph{Levi-Civita connection} of the Fisher--Rao metric~\cite{hobson2006}: the unique torsion-free, metric-compatible connection.
Its geodesics solve the standard Euler--Lagrange equation for arc-length minimisation, which is how the SLERP optimality result (Corollary~\ref{cor:slerp_optimal}) is derived, without ever computing Christoffel symbols.
The Hellinger embedding sidesteps the Christoffel computation entirely: it maps the simplex isometrically onto a sphere, where the geodesics are known to be great circles.
(This is analogous to solving geodesic problems by calculus of variations in a convenient coordinate system, rather than by the geodesic equation $\ddot{x}^\mu + \Gamma^\mu_{\nu\rho}\dot{x}^\nu\dot{x}^\rho = 0$ directly.)

The $\alpha=\pm 1$ connections are the \emph{mixture} ($m$) and \emph{exponential} ($e$) connections.
These are torsion-free but not metric-compatible, and they are \emph{dually flat}: the simplex has zero curvature in the $\alpha=-1$ coordinates ($w_i$) and separately in the $\alpha=+1$ coordinates ($\log w_i$), but with respect to different connections.

This dual flatness gives rise to a \emph{canonical divergence}: the unique divergence function compatible with the dually-flat structure~\cite{eguchi1983}.
On the probability simplex, the canonical divergence is the KL divergence~\cite{amari_info_geo}.
The fact that arbitrage cost equals the KL divergence (Theorem~\ref{thm:kl_divergence}) means rebalancing cost is the canonical divergence of the simplex.

The three interpolation methods in the main text correspond to the three canonical geodesics: linear ($\alpha=-1$, flat in $w_i$), geometric ($\alpha=+1$, flat in $\log w_i$), and SLERP ($\alpha=0$, the metric geodesic that sits between them).

\newpage
\section{Explicit computation: SLERP vs Lambert~W for $N=2$}
\label{app:slerp_vs_lambert}

For $N=2$ with weights $(w, 1-w)$, the SLERP midpoint ($f=2$) is computed as follows.

Map to the sphere:
\begin{align}
    \v\eta^{\mathrm{start}} &= \bigl(\sqrt{w^{\mathrm{start}}},\; \sqrt{1-w^{\mathrm{start}}}\bigr), \quad
    \v\eta^{\mathrm{end}} = \bigl(\sqrt{w^{\mathrm{end}}},\; \sqrt{1-w^{\mathrm{end}}}\bigr). \nonumber\\
    \Omega &= \arccos\!\left(\sqrt{w^{\mathrm{start}} w^{\mathrm{end}}} + \sqrt{(1-w^{\mathrm{start}})(1-w^{\mathrm{end}})}\right).
\end{align}
The midpoint on the sphere is
\begin{equation}
    \v\eta_{\mathrm{mid}} = \frac{1}{2\cos(\Omega/2)}\left(\v\eta^{\mathrm{start}} + \v\eta^{\mathrm{end}}\right).
\end{equation}
The midpoint weight is $w_{\mathrm{mid}} = \eta_{\mathrm{mid},1}^2$.

For $w^{\mathrm{start}} = 0.5$, $w^{\mathrm{end}} = 0.9$:
\begin{align}
    \Omega &= \arccos\!\left(\sqrt{0.45} + \sqrt{0.05}\right) \approx 0.4636\,\text{rad}, \nonumber\\
    w_{\mathrm{mid}}^{\mathrm{SLERP}} &\approx 0.729.
\end{align}
Compare with $w_{\mathrm{mid}}^{\mathrm{Lambert}} \approx 0.717$ from~\cite{willetts2024optimalrebalancingdynamicamms}.
The ${\sim}1.7\%$ difference reflects higher-order terms that matter at this (large) step size.
For $f \geq 10$, per-step changes are $\leq 0.04$ and the two methods agree to $< 0.1\%$.

\section{Weight trajectories and block-to-block changes}
\label{app:trajectories}

\begin{figure}[h]
    \centering
    \includegraphics[width=\textwidth]{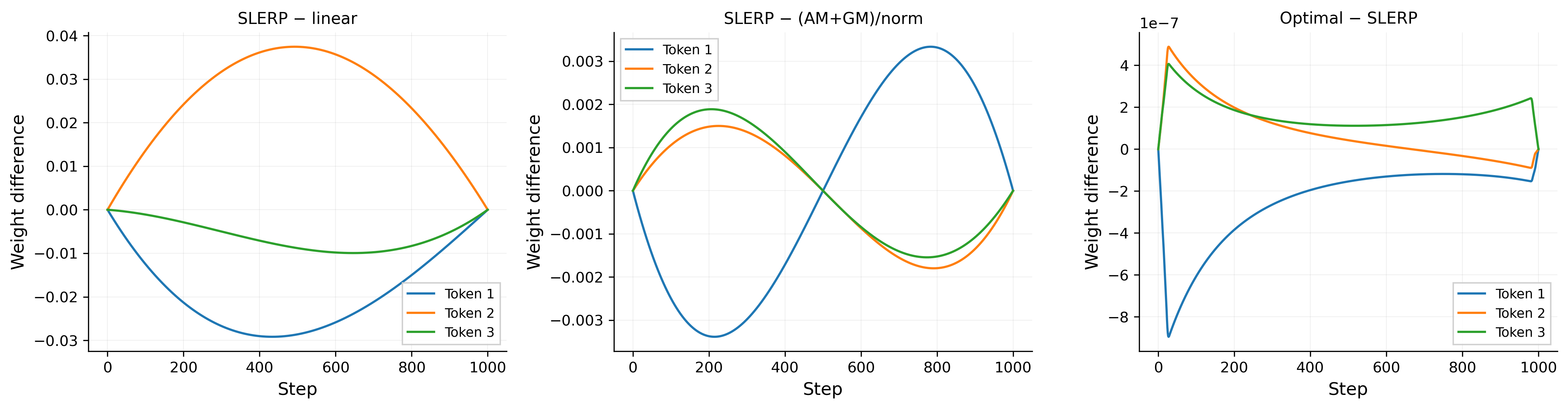}
    \caption{Weight differences between interpolation methods. Left: SLERP minus linear ($\sim0.03$), reflecting the non-linearity of the geodesic. Centre: SLERP minus (AM+GM)/normalise ($\sim0.003$), an order of magnitude smaller, with an S-curve vanishing at both endpoints and the midpoint, consistent with Theorem~\ref{thm:slerp_amgm}. Right: brute-force optimal minus SLERP ($\sim10^{-7}$), confirming SLERP is near-optimal.}
    \label{fig:diff}
\end{figure}

The figures below reproduce the $N{=}3$ setup of Figures~1--2 of~\cite{willetts2024optimalrebalancingdynamicamms}, with the addition of SLERP, to allow direct visual comparison.

\begin{figure*}[t]
\begin{minipage}{0.45\textwidth}
    \caption{Weight interpolations, $N=3$. At this scale, (b) and (c) are visually indistinguishable.}
    \label{fig:traj}
\centering
    \begin{subfigure}{\textwidth}
         \centering
        \includegraphics[width=\textwidth]{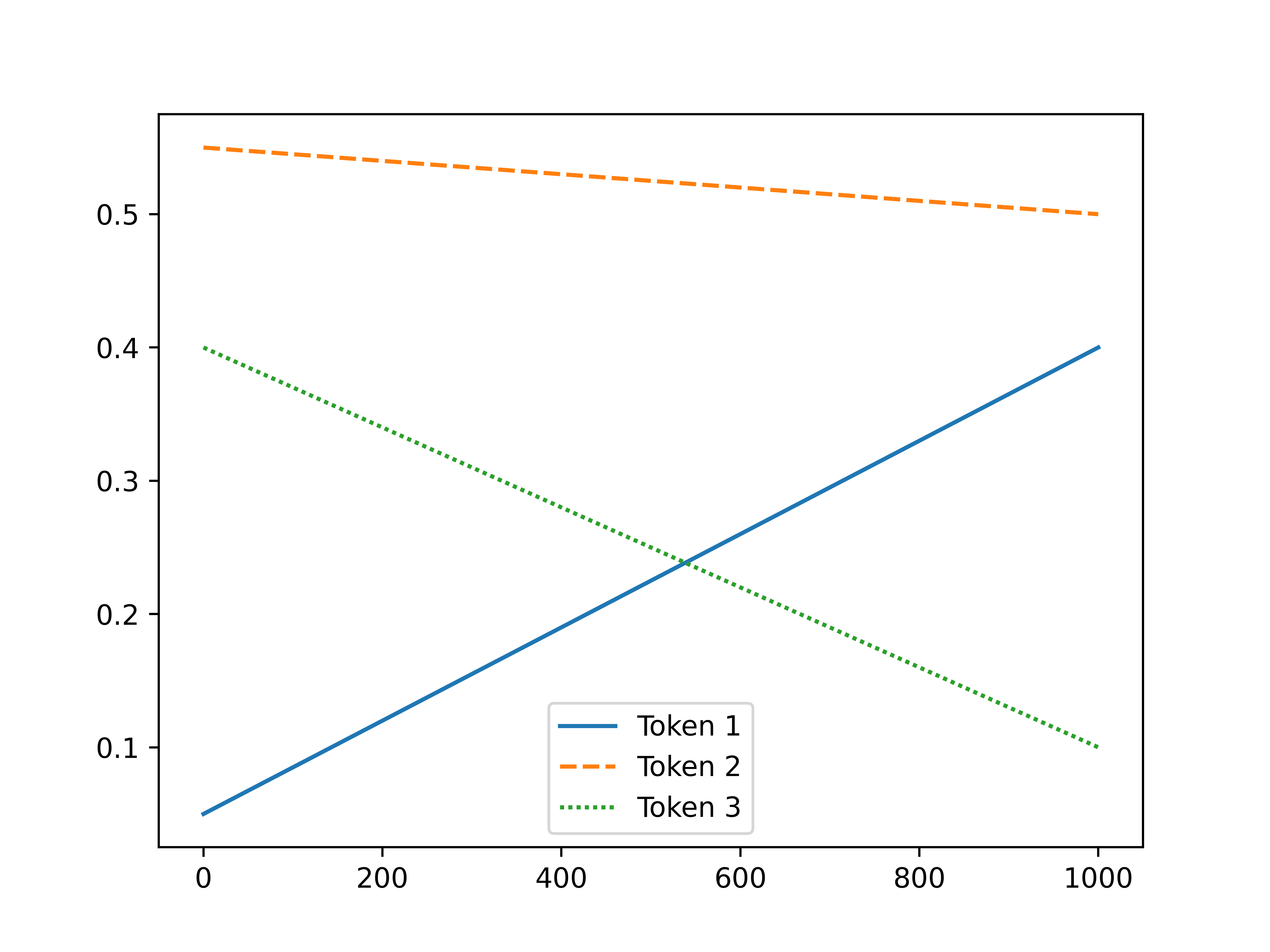}
        \caption{Linear}
        \label{fig:linear}
    \end{subfigure}
    \begin{subfigure}{\textwidth}
        \centering
        \includegraphics[width=\textwidth]{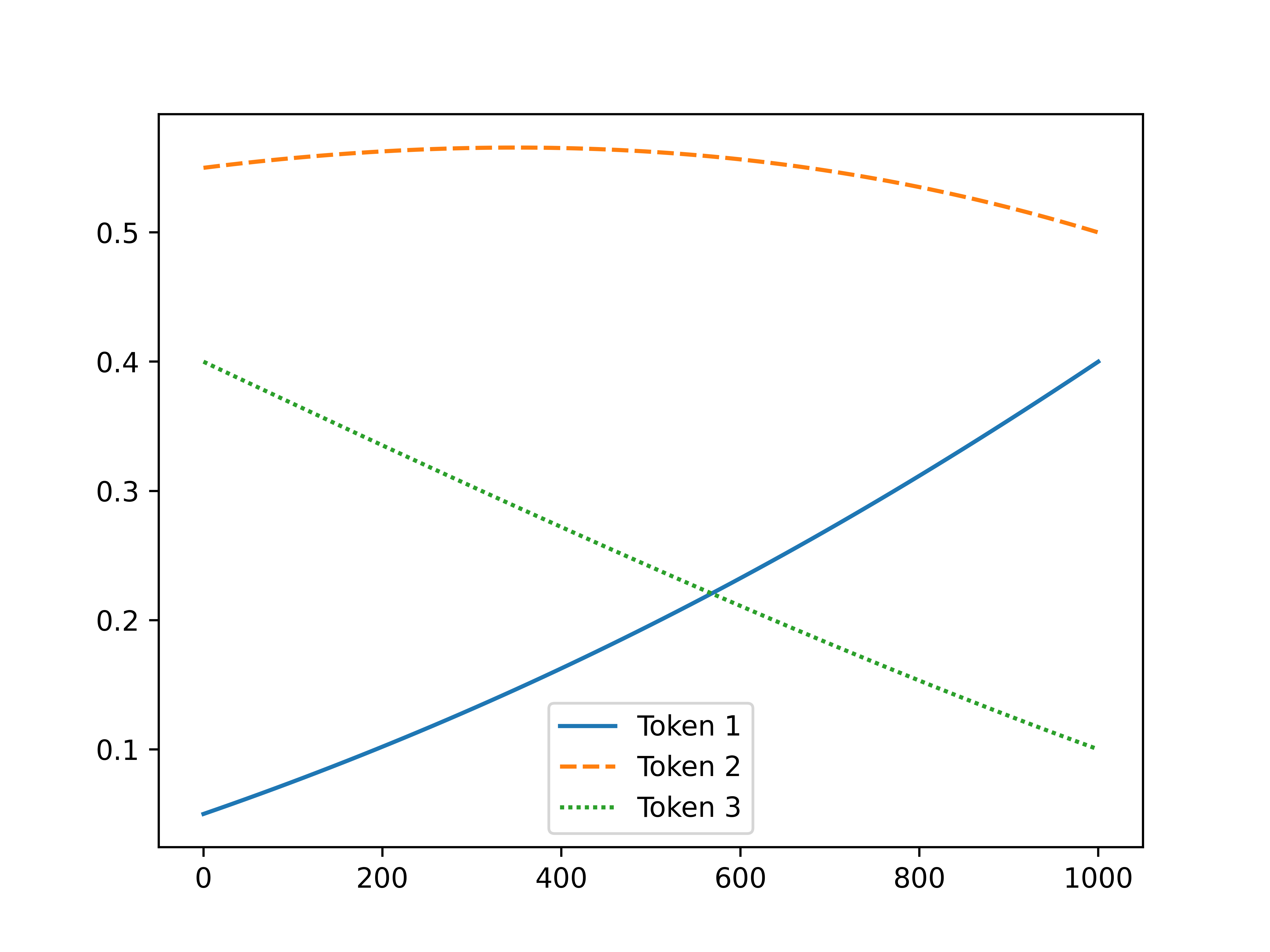}
        \caption{(AM+GM)/normalise}
        \label{fig:approx}
    \end{subfigure}
    \begin{subfigure}{\textwidth}
        \centering
        \includegraphics[width=\textwidth]{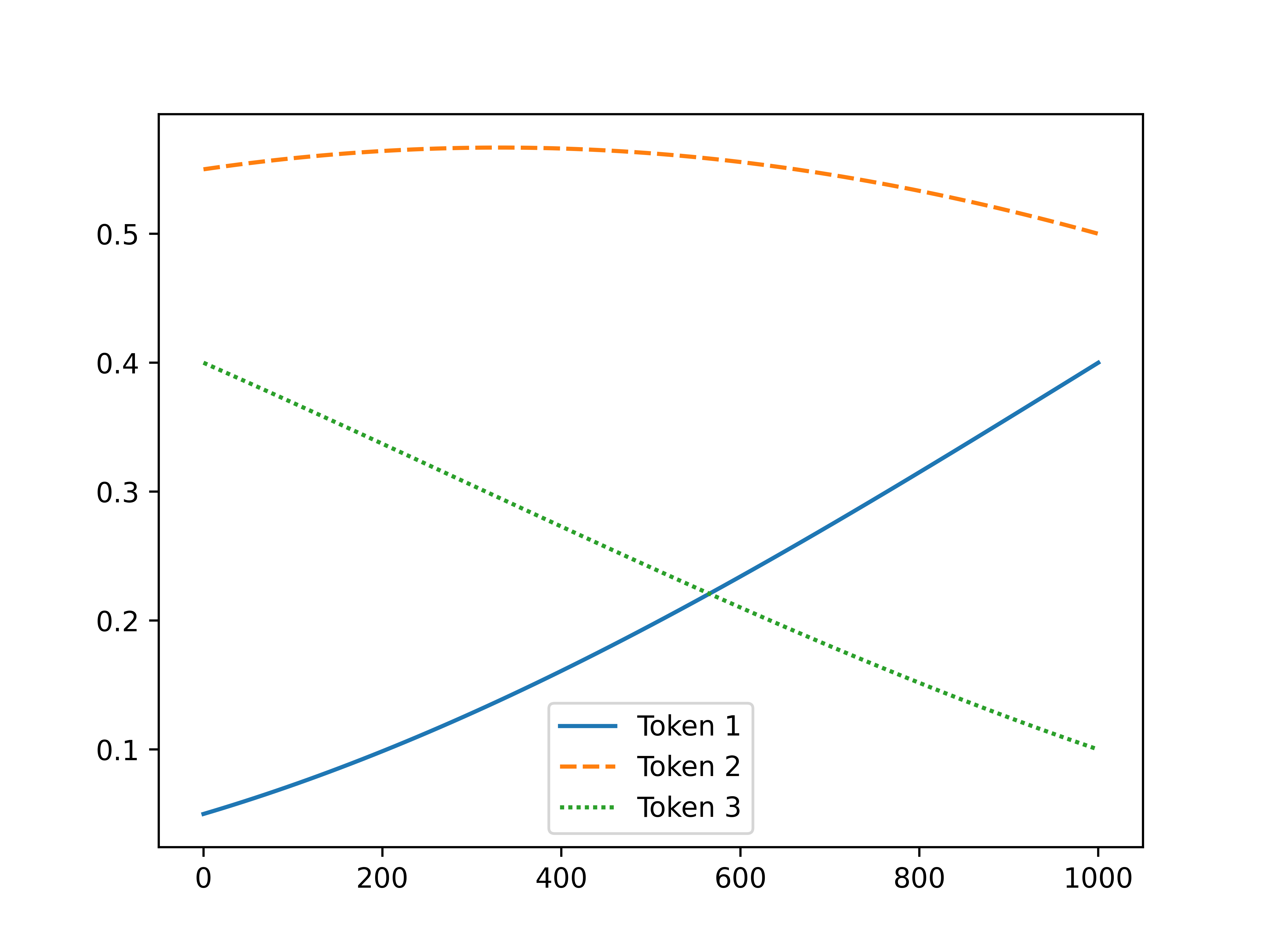}
        \caption{SLERP}
        \label{fig:slerp_traj}
    \end{subfigure}
\end{minipage}\hfill
\begin{minipage}{0.45\textwidth}
    \vspace{0.1em}
    \caption{Block-to-block weight changes ($\v w(t+1) - \v w(t)$). Constant metric speed produces smoothly varying (not constant) weight changes in Cartesian coordinates.}
    \label{fig:change}
    \centering
    \begin{subfigure}{\textwidth}
        \centering
        \includegraphics[width=\textwidth]{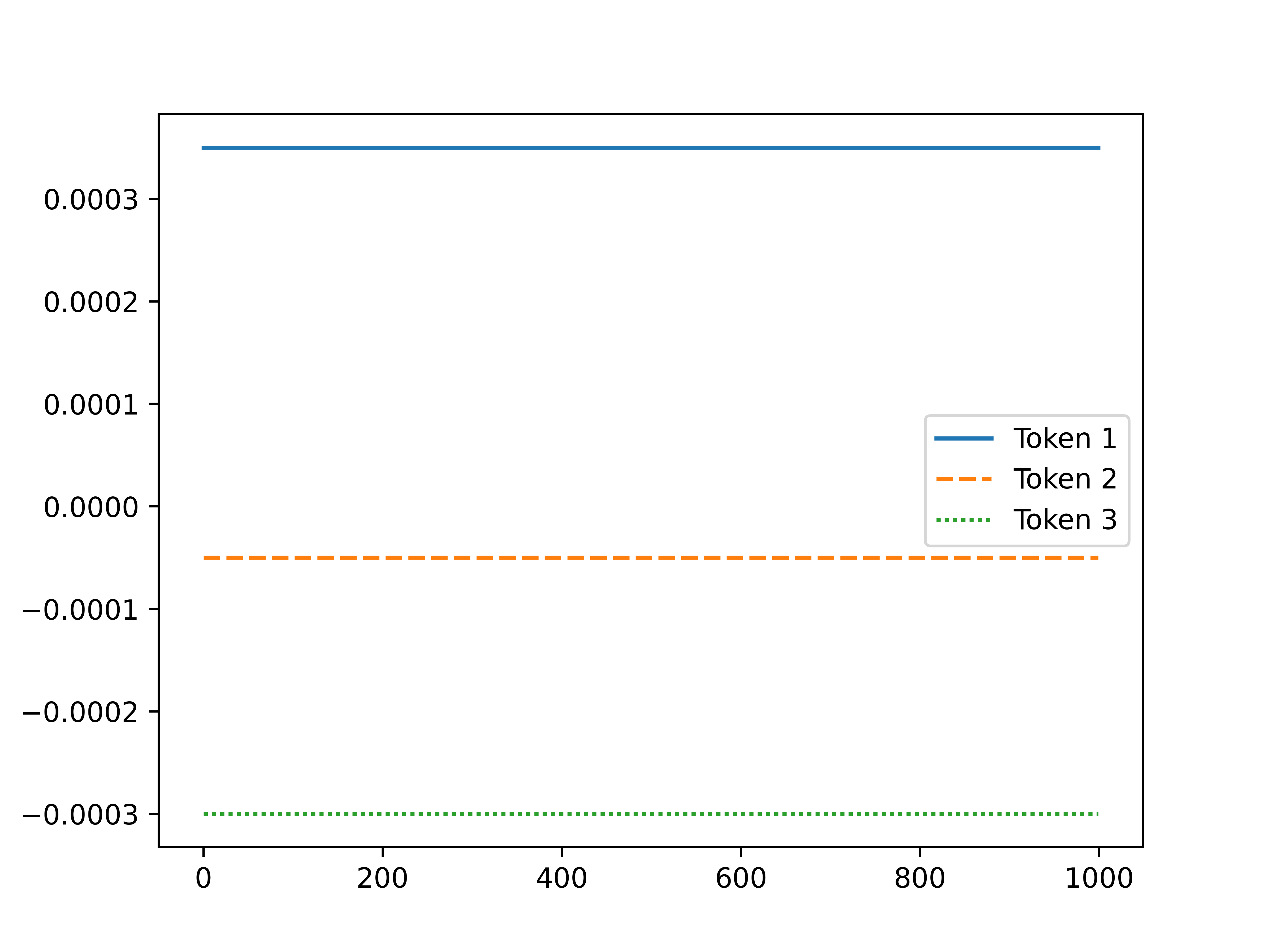}
        \caption{Linear}
        \label{fig:linear_change}
    \end{subfigure}
    \begin{subfigure}{\textwidth}
       \centering
        \includegraphics[width=\textwidth]{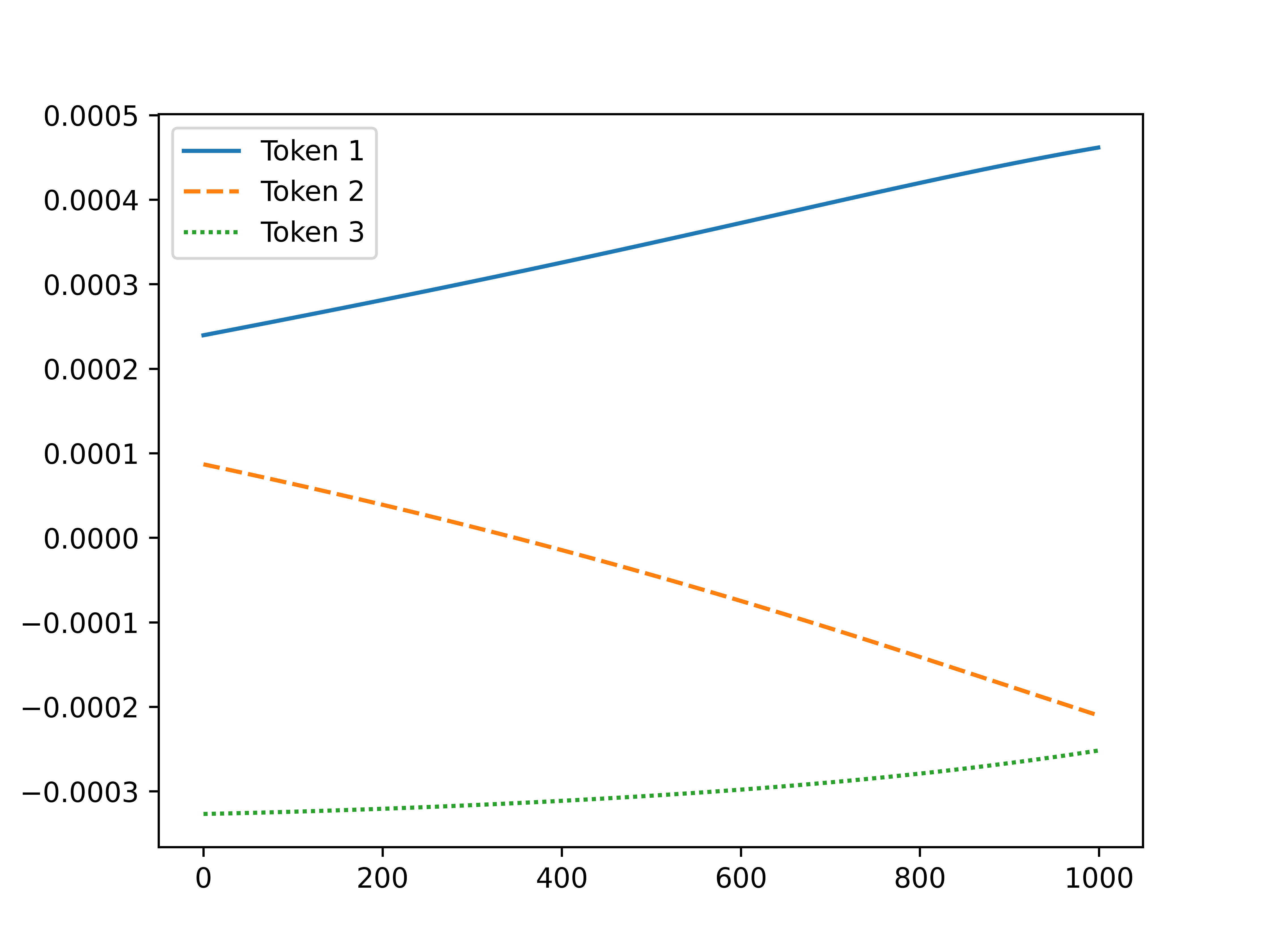}
        \caption{(AM+GM)/normalise}
        \label{fig:approx_change}
    \end{subfigure}
    \begin{subfigure}{\textwidth}
        \centering
        \includegraphics[width=\textwidth]{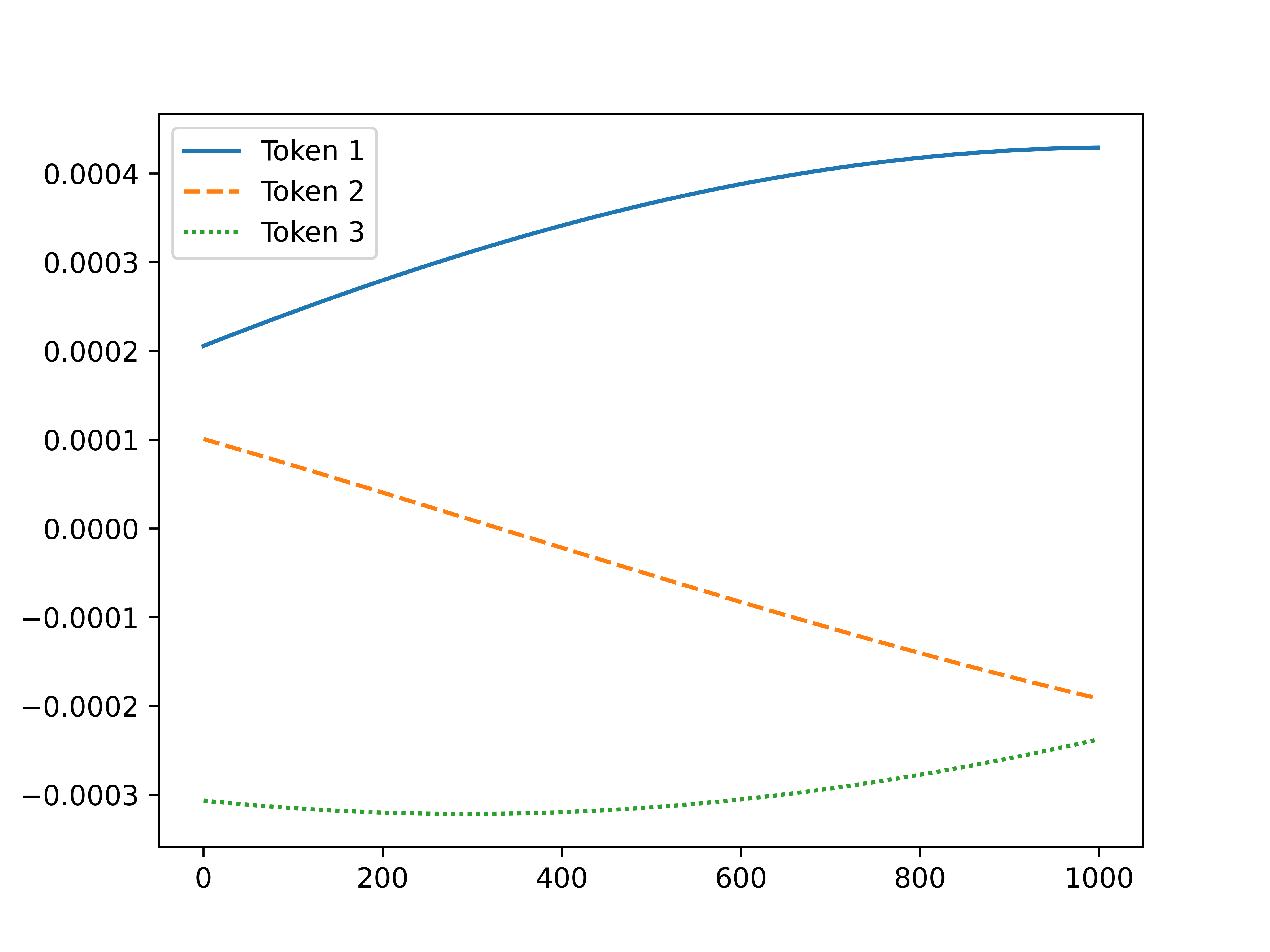}
        \caption{SLERP}
        \label{fig:slerp_change}
    \end{subfigure}
\end{minipage}
\end{figure*}

At this scale, (AM+GM)/normalise and SLERP trajectories are visually indistinguishable; block-to-block changes for SLERP vary smoothly with amplitude ${\sim}10^{-4}$.

\FloatBarrier

\section{Near-boundary experiments}
\label{app:near_boundary}

The sub-optimality bound (Theorem~\ref{thm:suboptimality}) assumes weights bounded away from zero.
Near the simplex boundary the Fisher--Rao metric diverges and the quadratic approximation underlying SLERP is least accurate.
Current G3M implementations enforce a minimum weight of $1\%$, so $w_{\min} = 0.01$ is the practical boundary.

\paragraph{Single midpoint ($f=2$)}
For $N=2$ with $\v w: (w_{\min},\, 1{-}w_{\min}) \to (1{-}w_{\min},\, w_{\min})$, we compare the direct Lambert~W midpoint formula of~\cite{willetts2024optimalrebalancingdynamicamms},
$m_i = w_i^{\mathrm{end}} / W\!\bigl(w_i^{\mathrm{end}} \cdot e\, /\, w_i^{\mathrm{start}}\bigr)$
(normalised), against SLERP and the brute-force L-BFGS-B optimum.

\begin{table}[h]
    \centering
    \caption{Single-midpoint ($f=2$) loss ratios for symmetric $N=2$ weight changes at varying boundary distance. Lambert~W, which maximises the exact retention ratio, beats SLERP near the boundary; both converge for small weight changes.}
    \label{tab:midpoint_boundary}
    \small
    \begin{tabular}{@{}cccc@{}}
    \hline
    $w_{\min}$ & SLERP/Opt. & Lambert\,W/Opt. & Lambert\,W/SLERP \\
    \hline
    0.01 & 1.178 & 1.053 & 0.894 \\
    0.02 & 1.122 & 1.041 & 0.928 \\
    0.05 & 1.059 & 1.024 & 0.967 \\
    0.10 & 1.025 & 1.012 & 0.987 \\
    0.20 & 1.005 & 1.003 & 0.998 \\
    0.30 & 1.001 & 1.001 & 1.000 \\
    0.40 & 1.000 & 1.000 & 1.000 \\
    \hline
    \end{tabular}
\end{table}

\begin{figure*}[t]
    \centering
    \includegraphics[width=\textwidth]{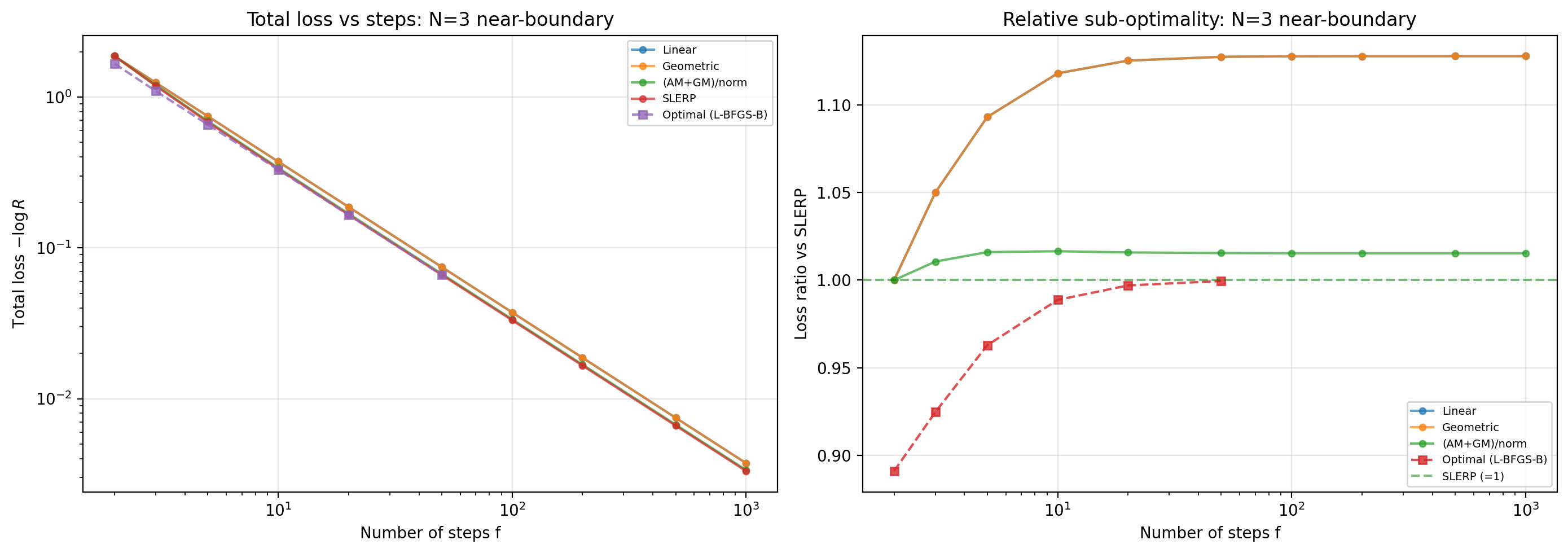}
    \caption{$N=3$ near-boundary configuration $\v w: (0.01, 0.01, 0.98) \to (0.49, 0.49, 0.02)$. Left: total arbitrage loss vs.\ step count $f$ for each interpolation method and the brute-force L-BFGS-B optimum. Right: relative sub-optimality (ratio of method loss to brute-force optimal loss minus one). SLERP closely tracks the brute-force optimum across all $f$; linear interpolation diverges by up to ${\sim}20\%$ near the boundary.}
    \label{fig:convergence}
\end{figure*}

\begin{figure*}[h]
    \centering
    \includegraphics[width=0.95\textwidth]{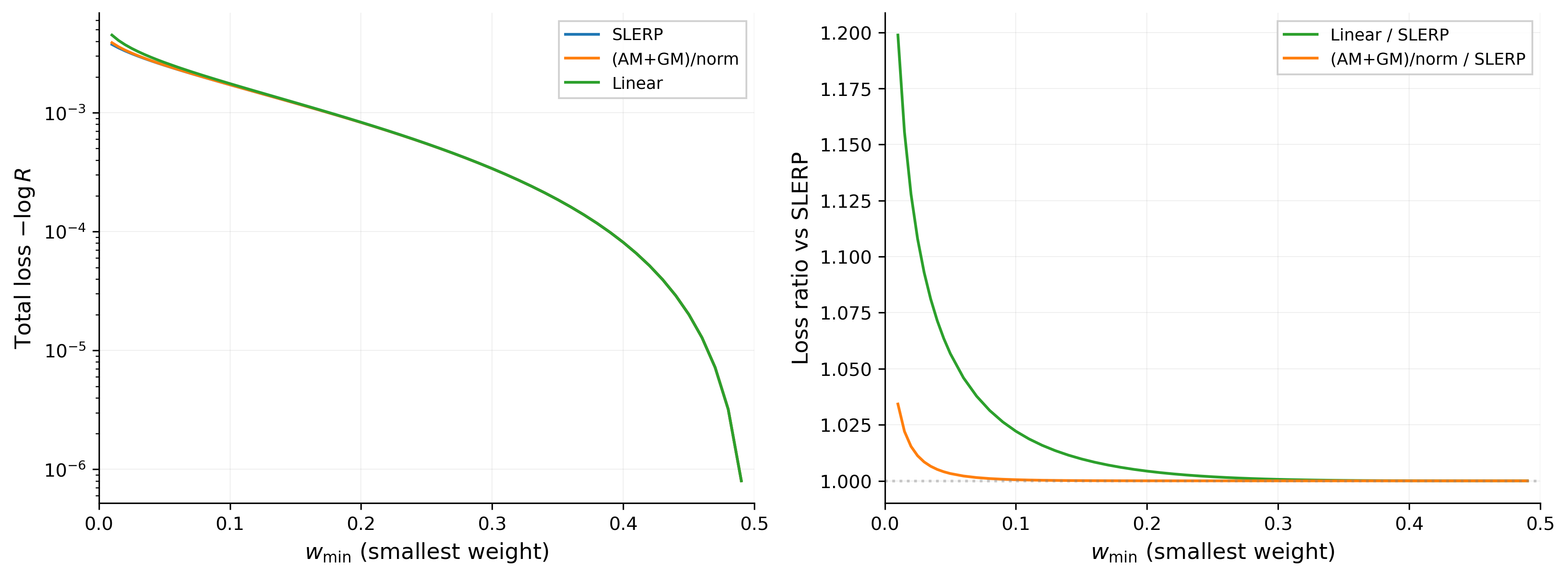}
    \caption{Left: total loss at $f=1000$ as a function of the smallest weight $w_{\min}$, for $N=2$ symmetric weight changes $(w_{\min},\, 1{-}w_{\min}) \to (1{-}w_{\min},\, w_{\min})$. Right: loss ratio relative to SLERP. SLERP's advantage over both linear and (AM+GM)/normalise grows as weights approach the boundary.}
    \label{fig:multistep_boundary}
\end{figure*}

Table~\ref{tab:midpoint_boundary} confirms that for large single-step updates near the boundary, Lambert~W, which optimises the exact finite-step retention, outperforms SLERP by up to $10\%$ at $w_{\min} = 0.01$.
Both converge to the true optimum as the weight change shrinks (i.e.\ as $w_{\min} \to 0.5$), consistent with the quadratic approximation becoming exact in the small-step limit.
\paragraph{Convergence across step counts}

Figure~\ref{fig:convergence} shows total loss and relative sub-optimality as a function of the step count~$f$ for the $N=3$ near-boundary configuration.
SLERP closely tracks the brute-force optimal across steps, validating the $O(\Omega^2/f^2)$ sub-optimality bound of Theorem~\ref{thm:suboptimality}.
Linear interpolation diverges from the optimum as $f$ grows, incurring up to ${\sim}20\%$ excess loss at this near-boundary configuration.
(AM+GM)/normalise sits much closer to SLERP than to linear, as expected from the third-order agreement established in \S\nolinebreak\ref{ssec:comparison_general_t}.

\paragraph{Multi-step boundary stress test}

Figure~\ref{fig:multistep_boundary} shows that SLERP's advantage over both linear and (AM+GM)/normalise \emph{widens} as weights approach the boundary: at $w_{\min} = 0.01$, linear interpolation incurs ${\sim}20\%$ more loss than SLERP, and (AM+GM)/normalise incurs ${\sim}3.5\%$ more.
For interior weights ($w_{\min} \geq 0.2$), all three methods are within $1\%$ of each other.

\begin{figure*}[h]
    \centering
    \includegraphics[width=0.85\textwidth]{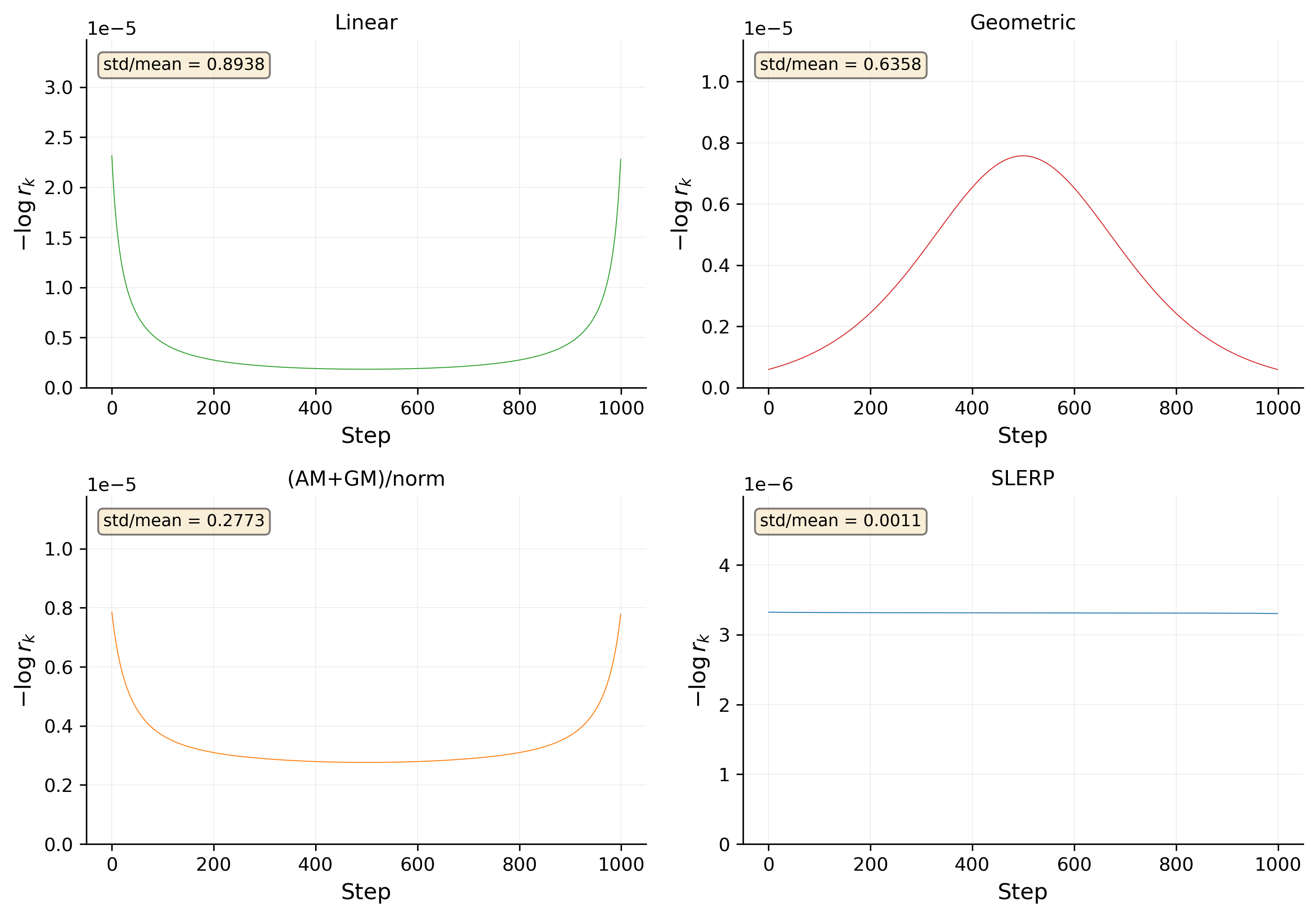}
    \caption{Per-step loss $-\log r_k$ near the simplex boundary ($N=3$, $f=1000$, $\v w: (0.01, 0.01, 0.98) \to (0.49, 0.49, 0.02)$). SLERP achieves near-constant loss per step (std/mean $= 0.0011$); other methods show large variation, especially at steps where weights pass through their smallest values.}
    \label{fig:perstep_boundary}
\end{figure*}

Figure~\ref{fig:perstep_boundary} shows per-step losses for the $N=3$ near-boundary configuration at $f=1000$.
SLERP maintains near-perfect uniformity (std/mean $= 0.0011$) even with weights at the $1\%$ floor: compare with std/mean $= 0.0002$ for the interior configuration of Figure~\ref{fig:perstep_loss}.
Linear interpolation shows extreme non-uniformity (std/mean $= 0.89$) with a loss spike at the boundary-crossing steps.

\FloatBarrier
\newpage
\section{Derivation of the LVR potential}
\label{app:lvr_potential}

The factorisation $V_{k+1}/V_k = r_k \cdot \prod_i (P_i^{(k+1)}/P_i^{(k)})^{w_i^{(k+1)}}$ (Eq~\eqref{eq:value_factorisation}) gives, over $f$ steps:

\[
V_T/V_0 = \prod_{k=1}^f r_k \cdot \prod_{k=1}^f \prod_i \!\left(P_i^{(k+1)}/P_i^{(k)}\right)^{w_i^{(k+1)}}
\]

Since the driftless GBM increments are independent across blocks, the expectation factorises:

\[
\mathbb{E}[V_T/V_0] = \prod_k r_k \cdot \prod_k \mathbb{E}\!\left[\prod_i (P_i^{(k+1)}/P_i^{(k)})^{w_i^{(k+1)}}\right]
\]

The $k$-th factor in the second product is $\mathbb{E}[\exp(\sum_i w_i \log(P_i'/P_i))]$.
Under driftless GBM, the weighted log-price return $\sum_i w_i \log(P_i'/P_i)$ is normally distributed with mean $-\frac{\Delta t}{2}\sum_i w_i\sigma_i^2$ and variance $\Delta t \cdot \v w^T\Sigma\v w$.
By the moment generating function of the normal distribution:
\begin{align}
\mathbb{E}\!\left[\exp\!\left(\sum_i w_i \log(P_i'/P_i)\right)\right]
  &= \exp\!\left(-\tfrac{\Delta t}{2}\sum_i w_i\sigma_i^2 + \tfrac{\Delta t}{2}\v w^T\Sigma\v w\right) \nonumber\\
  &= \exp\bigl(-\ell(\v w)\,\Delta t\bigr)
\end{align}
Hence maximising $\mathbb{E}[V_T]$ is equivalent to minimising

\[
\sum_k D_{\mathrm{KL}}\!\bigl(\v w^{(k)} \| \v w^{(k-1)}\bigr) + \Delta t\sum_k \ell(\v w^{(k)}),
\]

with the LVR potential $\ell(\v w) = \frac{1}{2}[\sum_i w_i\sigma_i^2 - \v w^T\Sigma\v w]$.

\section{Asymptotic validation of the perturbative regime}
\label{app:asymptotics}

The two-token pendulum (\S\ref{sec:optimal_f}) can be written with $\epsilon = 16/(fT\sigma^2)$:
\[
\epsilon\,\ddot{\theta} = \sin(4\theta), \quad \theta(0) = \theta_0, \quad \theta(1) = \theta_1
\]
This is a singular perturbation problem: the coefficient $\epsilon$ multiplying the highest derivative vanishes in the limit $\epsilon \to 0$.
Standard boundary-layer theory~\cite{bender_orszag1999} gives layers of width $\delta = \sqrt{\epsilon} = 4/\sqrt{fT\sigma^2}$.
The matched expansion requires $\delta \ll 1$, i.e.\ $fT\sigma^2 \gg 16$.

\begin{center}
\small
\begin{tabular}{@{}llll@{}}
\toprule
Scenario & $fT\sigma^2$ & $\delta$ & Regime \\
\midrule
$f{=}f^*{\approx}2400$, $T{=}8$\,hr & ${\approx}\,0.7$ & ${\approx}\,5$ & Perturbative ($\delta > 1$) \\
$f{=}7200$ (1\,day) & ${\approx}\,6.5$ & ${\approx}\,1.6$ & Still perturbative \\
$f{=}72\,000$ (10\,days) & ${\approx}\,648$ & ${\approx}\,0.16$ & Boundary-layer \\
\bottomrule
\end{tabular}
\end{center}

For practical parameters, $\delta > 1$: the boundary layers are wider than the domain and the perturbative expansion is rigorously justified.
The boundary-layer regime ($f \sim 72\,000$) corresponds to $f \gg f^*$, where the pool incurs far more LVR than it saves in rebalancing cost.

\paragraph{Perturbative correction via Green's function}

The SLERP baseline is $\theta_0(s) = \theta_{\mathrm{start}} + \Omega s$.
Writing $\theta(s) = \theta_0(s) + \epsilon_1(s) + O(\mu^2)$ and linearising the pendulum (Eq~\eqref{eq:pendulum}) about $\theta_0$, the first-order correction satisfies
\[
\ddot{\epsilon}_1 = \mu\,\sin(4\theta_{\mathrm{start}} + 4\Omega s), \quad \epsilon_1(0) = \epsilon_1(1) = 0
\]
where $\mu = fT\sigma^2/16$.
The solution is given by the standard Green's function for the two-point boundary problem:
\begin{gather}
\epsilon_1(s) = \mu\!\int_0^1 G(s,s')\sin(4\theta_{\mathrm{start}} + 4\Omega s')\,ds', \nonumber\\
G(s,s') = \min(s,s')\bigl(1-\max(s,s')\bigr)
\end{gather}
The integral is elementary (products of $s'$ with sines), and the magnitude is bounded by $|\epsilon_1| \leq \mu/4$.
For the worked example of \S\ref{sec:optimal_f} at $f = f^*$: $\mu \approx 0.045$, giving $|\epsilon_1| \lesssim 0.011$\,rad, a weight perturbation of $O(10^{-2})$.
The correction pushes the path away from $\theta = \pi/4$ (equal weights, maximum LVR) toward the vertices (minimum LVR).

\paragraph{The outer solution and boundary-layer structure}

In the $\epsilon \to 0$ limit (many steps, high LVR), the outer solution of the singular perturbation demands $\sin(4\theta_{\mathrm{outer}}) = 0$, selecting LVR-minimising vertices ($\theta = 0$ or $\pi/2$).
Boundary layers of width $\delta = \sqrt{\epsilon}$ connect the outer solution to the boundary conditions.
Near a vertex ($\theta \approx 0$), linearisation gives $\sin(4\theta) \approx 4\theta$ and the inner solution decays exponentially: $\theta \sim \theta_0\exp(-2S)$ where $S = s/\delta$.
The composite expansion $\theta_{\mathrm{comp}}(s) = \theta_{\mathrm{outer}} + [\theta_{\mathrm{inner,L}}(s/\delta) - \theta_{\mathrm{outer}}] + [\theta_{\mathrm{inner,R}}((1-s)/\delta) - \theta_{\mathrm{outer}}]$ describes a trajectory that jumps to a vertex (zero LVR) and remains there for the bulk of the interpolation; but this regime ($\delta \ll 1$) requires $fT\sigma^2 \gg 16$, which is never reached at $f = f^*$ for practical parameters.

\section{Jacobi metric: joint path and speed optimisation}
\label{app:jacobi}
\begin{figure*}[h]
    \centering
    \includegraphics[width=\textwidth]{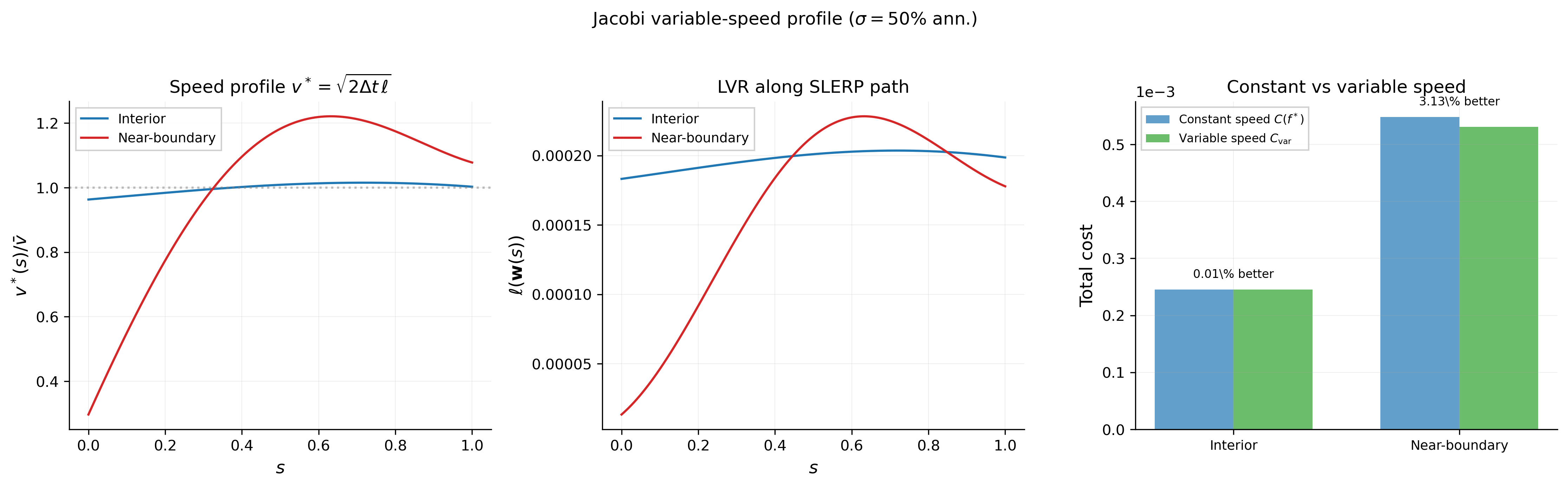}
    \caption{Jacobi variable-speed profile at $\sigma = 50\%$ ann.\ for the interior $(0.5,0.5)\to(0.7,0.3)$ and near-boundary $(0.01,0.99)\to(0.99,0.01)$ configurations.
    Left: optimal speed $v^*(s) = \sqrt{2\Delta t\,\ell}$ normalised by the constant-speed baseline $\bar{v}$.
    Centre: LVR rate $\ell(\v w(s))$ along the SLERP path.
    Right: total cost comparison; variable speed reduces $C(f^*)$ by $0.01\%$ for interior weights and $3.13\%$ near the boundary.}
    \label{fig:jacobi_speed}
\end{figure*}

Fix a path parameterised by arc-length $s \in [0,\Omega]$.
At each block the pool advances by arc-length~$v$, paying rebalancing cost $v^2/2$ and LVR cost $\Delta t_{\mathrm{block}}\,\ell(\v w)$.
Covering an arc element $ds$ requires $ds/v$ blocks, so the cost per unit arc-length is $v/2 + \Delta t\,\ell/v$.
By AM-GM:
\[
\frac{v}{2} + \frac{\Delta t\,\ell}{v} \geq \sqrt{2\,\Delta t_{\mathrm{block}}\,\ell(\v w(s))}
\]
with equality at $v^*(s) = \sqrt{2\,\Delta t_{\mathrm{block}}\,\ell(\v w(s))}$: fast through high-LVR regions, slow near the vertices.

The minimum cost $C^* = \int_0^\Omega \sqrt{2\Delta t\,\ell}\;ds_{\mathrm{FR}}$ defines a Jacobi metric $\tilde{g}_{ij} = \ell(\v w)\cdot g^{\mathrm{FR}}_{ij}$.
The jointly optimal trajectory is a geodesic of this conformally deformed metric (Jacobi--Maupertuis principle~\cite{arnold1989}).
\begin{remark}[Hierarchy of approximations]
\label{rem:hierarchy}
The three levels of optimisation form a natural hierarchy:
(i)~SLERP uses the Fisher--Rao metric~$g^{\mathrm{FR}}$ and optimises the path only, ignoring LVR, at constant speed;
(ii)~the modified geodesic (Proposition~\ref{prop:modified_geodesic}) uses~$g^{\mathrm{FR}}$ with the LVR potential and optimises the path at fixed~$f$, still at constant speed;
(iii)~the Jacobi geodesic uses the conformal metric $\ell \cdot g^{\mathrm{FR}}$ and optimises both path and speed jointly, with $v^*(s) = \sqrt{2\,\Delta t\,\ell(\v w(s))}$.
SLERP is the $\ell \to \mathrm{const}$ limit of the Jacobi geodesic; the modified geodesic is the fixed-speed restriction of the Jacobi problem; all three coincide when LVR is constant on the simplex.
\end{remark}

The MEV guardrail speed limit~\cite{willetts2024multiblockmevopportunities} at the simplex boundary scales as $v_{\max} \propto u_{\max}\sqrt{w_{\min}}$, the same $\sqrt{w_{\min}}$ dependence as $v^*$.
For crypto parameters ($\sigma = 3\%$/day, $\Delta t_{\mathrm{block}} = 12$\,s, $u_{\max} \approx 0.05$):
$v^*/v_{\max} \approx 0.007$.
The LVR-optimal speed is ${\sim}100\times$ slower than the guardrails allow; the two constraints do not conflict.

For $N \geq 3$, the Jacobi geodesic follows a different \emph{path} than SLERP, curving away from the simplex interior where LVR is highest.
Figure~\ref{fig:jacobi_n3} compares SLERP and the numerically optimised Jacobi geodesic for the paper's $N = 3$ configuration.

The Jacobi path visibly avoids the high-LVR centroid region (Panel~A), producing reduced non-monotonicity in individual weight components (Panel~B).
However, the cost improvement is small: the Jacobi geodesic saves only ${\sim}0.4\%$ over SLERP (Panel~C), confirming that SLERP is near-optimal even when the path shape is freed.

\begin{figure*}[h]
    \centering
    \includegraphics[width=\textwidth]{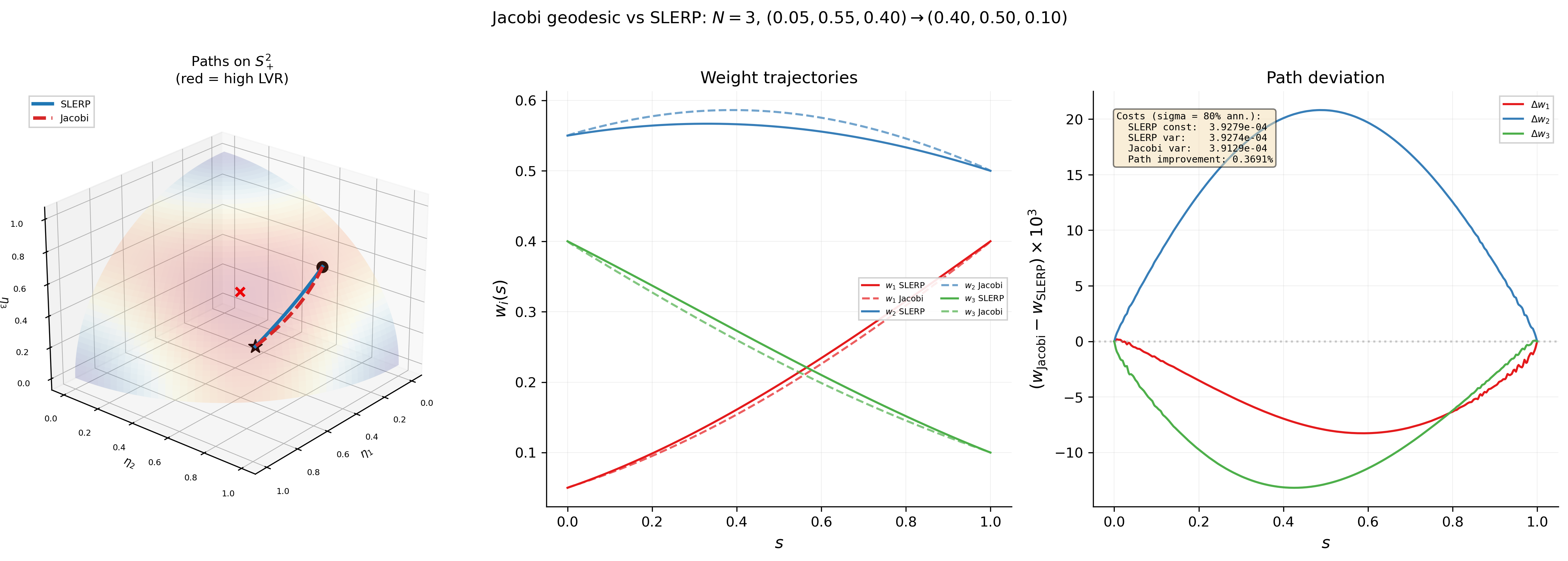}
    \caption{Jacobi geodesic vs SLERP on $S^2_+$ for $(0.05, 0.55, 0.40) \to (0.40, 0.50, 0.10)$ with equal volatilities $\sigma = 80\%$ ann.
    Left: paths on the positive octant of the Hellinger sphere; surface colour encodes LVR (red = high, blue = low).
    Centre: weight trajectories $w_i(s)$.
    Right: deviation from SLERP ($\times 10^3$).
    Inset: optimal cost hierarchy; Jacobi improves on SLERP by $0.37\%$.}
    \label{fig:jacobi_n3}
\end{figure*}

\FloatBarrier
\newpage
\section{Pad\'e approximant for the per-step KL cost}
\label{app:pade}

The SLERP optimality of \S\ref{sec:slerp} rests on the quadratic truncation of the per-component KL function $h(u) = (1+u)\log(1+u) - u$, where $u_i = \Delta w_i / w_i$.
The full Taylor series is $h(u) = u^2/2 - u^3/6 + u^4/12 - \cdots$.
A $[1,1]$ Pad\'e approximant~\cite{baker_graves_morris1996} of $g(u) = 2h(u)/u^2 = 1 - u/3 + u^2/6 - \cdots$, multiplied back by $u^2/2$, gives
\begin{equation}
h_{\mathrm{P}}(u) = \frac{u^2}{2}\cdot\frac{1 + u/6}{1 + u/2}
\label{eq:pade}
\end{equation}
which matches $h(u)$ through $O(u^4)$, two orders beyond the quadratic truncation (Figure~\ref{fig:pade}).

\begin{figure}[ht!]
\centering
\includegraphics[width=0.6\columnwidth]{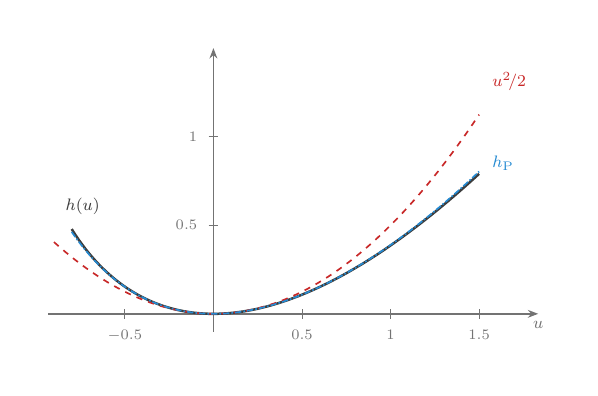}
\caption{The per-component KL function $h(u) = (1+u)\log(1+u) - u$ (solid), the quadratic approximation $u^2/2$ (dashed), and the Pad\'e approximant $h_{\mathrm{P}}$ (dot-dashed).
The quadratic overestimates for $u > 0$ (weight increases) and underestimates for $u < 0$ (weight decreases); the Pad\'e captures this cubic asymmetry.}
\label{fig:pade}
\end{figure}

The decomposition $h_{\mathrm{P}}(u) = u^2/2 - (u^3/6)/(1+u/2)$ separates the symmetric (Fisher--Rao) piece from the resummed $\alpha$-connection correction.
For $u > 0$, $h_{\mathrm{P}} < u^2/2$: larger steps where weight increases are cheaper than SLERP assumes; for $u < 0$, the inequality reverses.
This is the cubic asymmetry that the Lambert~W midpoint (Proposition~\ref{prop:lambert_kl}) exploits.

The remainder beyond the Pad\'e is $h(u) - h_{\mathrm{P}}(u) = O(u^5)$, so optimising $h_{\mathrm{P}}$ instead of $u^2/2$ would yield a ``Pad\'e-SLERP'' with relative sub-optimality $O(\Omega^6/f^6)$, compared to SLERP's $O(\Omega^2/f^2)$.
For large $f$, all three surrogates (quadratic, Pad\'e, exact) produce indistinguishable trajectories; the Pad\'e is most useful at $f = 2$, where it gives a midpoint closer to the Lambert~W formula.

\begin{figure}[h]
    \centering
    \includegraphics[width=0.9\columnwidth]{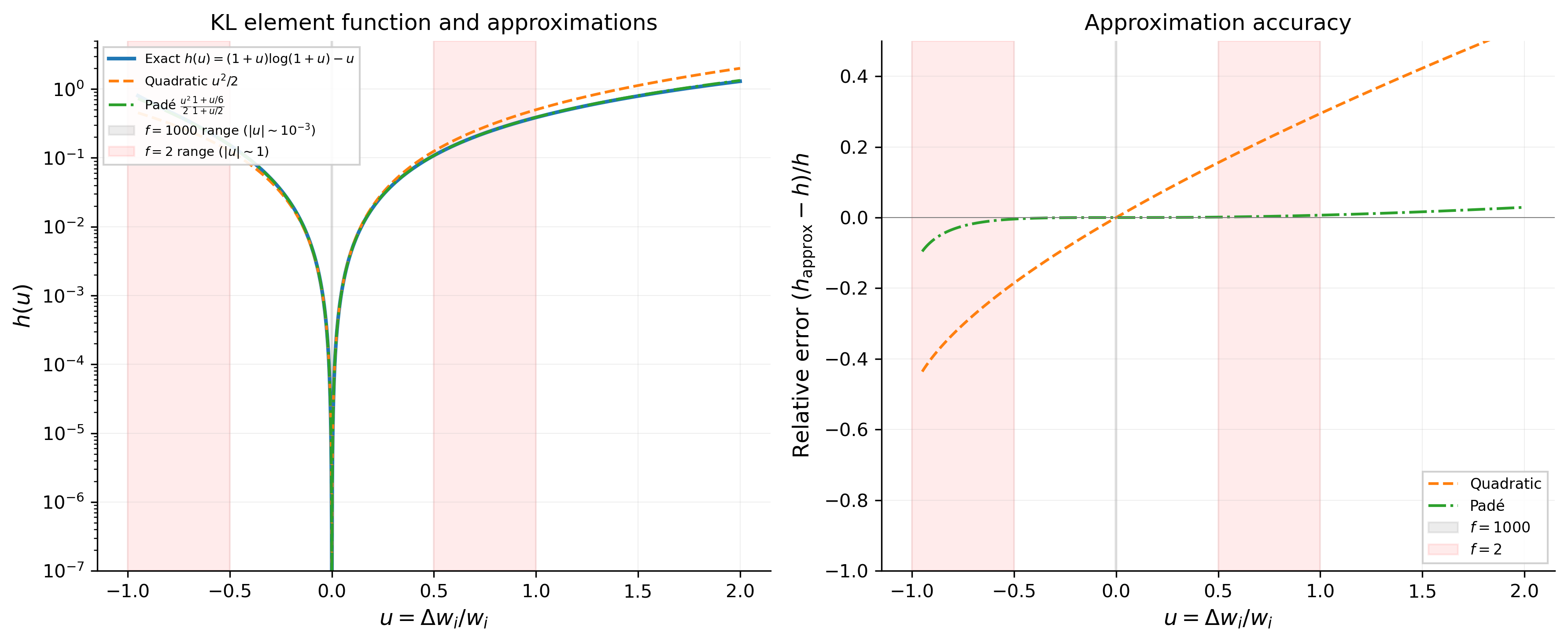}
    \caption{Left: $h(u)$ and its approximations on a log scale; shaded bands show the typical $|u|$ range at $f=1000$ (grey) and $f=2$ (pink).
    Right: relative error $(h_{\mathrm{approx}} - h)/h$.
    The quadratic diverges for $|u| \gtrsim 0.5$; the Pad\'e stays within a few percent across the entire $f=2$ range.}
    \label{fig:pade_accuracy}
\end{figure}

\section{Robustness to non-GBM prices}
\label{app:non_gbm}

The analytical cost $C(f)$ and optimal step count $f^*$ are derived under GBM.
Figure~\ref{fig:non_gbm} tests robustness by simulating the same two-token rebalancing under Merton jump-diffusion~\cite{merton1976} and GARCH(1,1) prices, both calibrated to the same annualised volatility.
All three price models produce nearly identical loss curves; the GBM analytical $C(f)$ and $f^*$ remain accurate.

\begin{figure}[h]
    \centering
    \includegraphics[width=0.6\columnwidth]{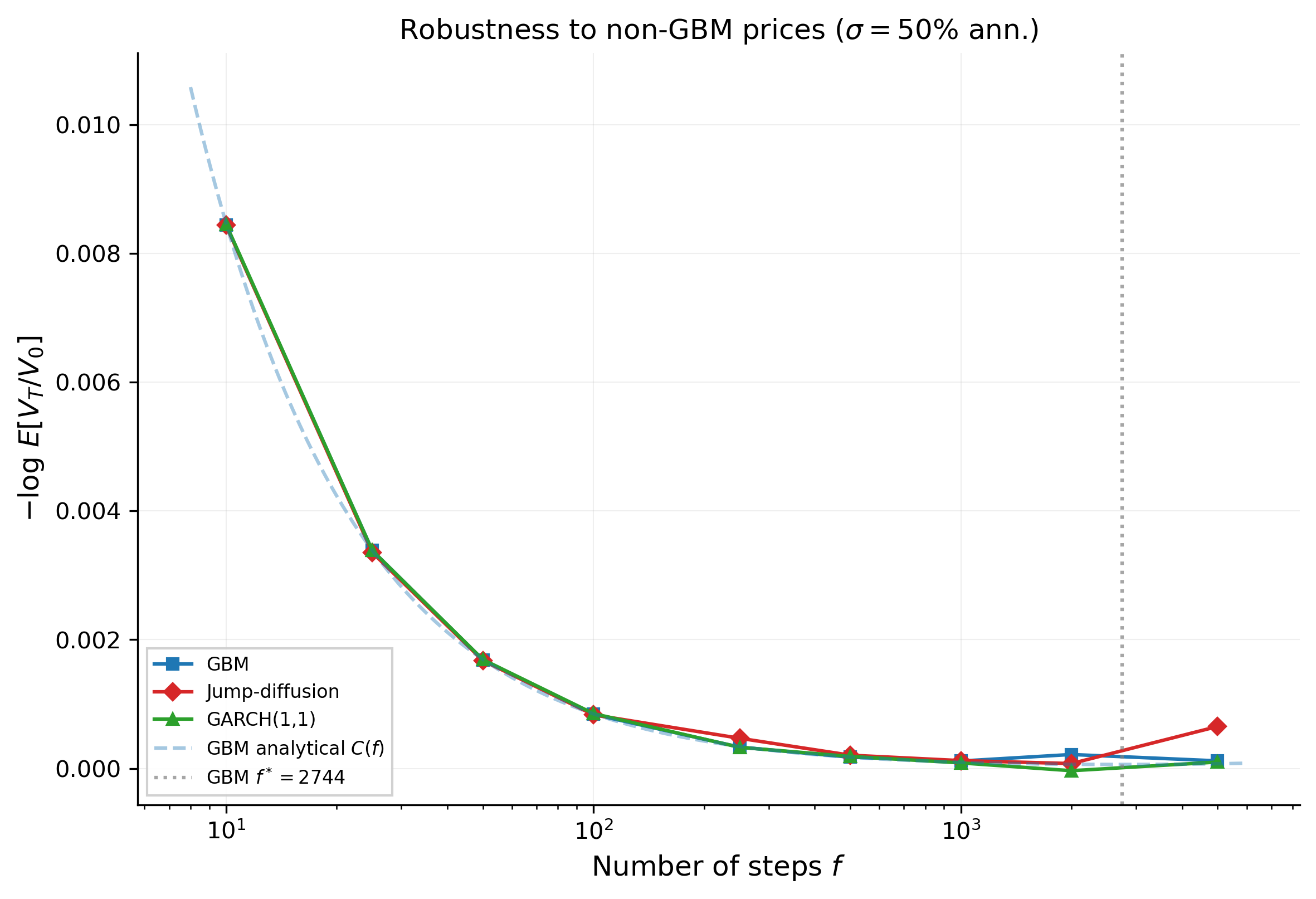}
    \caption{Expected log-loss vs.\ step count under GBM, Merton jump-diffusion, and GARCH(1,1) prices ($\sigma = 50\%$ ann., $10\,000$ paths each).
    The analytical $C(f)$ (dashed) and $f^* = 2744$ (dotted) are computed under GBM.
    All three models agree closely; the GBM-derived $f^*$ remains a good choice.}
    \label{fig:non_gbm}
\end{figure}

\section{The LVR-dominated regime and $\lambda$ regimes}
\label{app:lvr_regime}

\paragraph{Why $\lambda \gg 1$ is degenerate}

The modified geodesic equation (Eq~\eqref{eq:modified_geodesic}) has $1/f$ multiplying the highest derivative.
When $\lambda = fT\sigma^2/\Omega^2 \gg 1$, this becomes a \emph{singular perturbation}: the ``outer'' (interior) solution demands $g^{ij}\partial\ell/\partial w^j = 0$, i.e.\ the path must sit at a critical point of the LVR potential.

For expected-value maximisation, $\ell(\v w) = \frac{\sigma^2}{2}(1 - \|\v w\|^2)$ is minimised at the simplex vertices ($\ell = 0$ when all weight is in one asset).
The formal $\lambda \to \infty$ solution is therefore a \emph{bang-bang trajectory}:
\begin{enumerate}
    \item Boundary layer at $s = 0$: sprint from $\v w^{\mathrm{start}}$ to the nearest vertex.
    \item Interior: sit at the vertex (zero LVR).
    \item Boundary layer at $s = 1$: sprint to $\v w^{\mathrm{end}}$.
\end{enumerate}
The boundary layers have width $O(1/\sqrt{\lambda})$, set by the balance between the kinetic (rebalancing) and potential (LVR) terms.
In practice, the minimum-weight constraint ($w_i \geq w_{\min}$) in current G3M implementations prevents reaching the vertices, further limiting any benefit from this structure.
The enormous rebalancing cost of the two jumps dominates any LVR saving.

\paragraph{The correct response: reduce $f$}

Being in the $\lambda \gg 1$ regime means $f \gg f^*$: the pool is using too many interpolation steps.
Each additional step beyond $f^*$ costs more in LVR than it saves in rebalancing.
The correct response is to rebalance faster (reduce $f$ toward $f^*$), rather than to find a cleverly shaped curve through the interior of the simplex.

\paragraph{Regime summary}

Table~\ref{tab:lambda_regimes} classifies the three operating regimes by the dimensionless LVR strength $\lambda$.
At the optimum ($f = f^*$), $\lambda^* = 2\sigma^2/\bar\ell = O(1)$, so the optimum always sits in the perturbative regime where SLERP is near-optimal.

\begin{table*}[h]
\centering
\begin{tabular}{llll}
\hline
Regime & Condition & $\lambda$ & Prescription \\
\hline
Constant prices & $\sigma = 0$ & $0$ & SLERP; maximise $f$ \\
Near-optimal & $f \approx f^*$ & $O(1)$ & Choose $f^*$; SLERP + small correction \\
Too many steps & $f \gg f^*$ & $\gg 1$ & Reduce $f$; do not optimise the curve \\
\hline
\end{tabular}
\caption{Operating regimes classified by the dimensionless LVR strength $\lambda = fT\sigma^2/\Omega^2$.
The near-optimal regime is the only one where the path shape matters, and there SLERP is sufficient (Appendix~\ref{app:asymptotics}).}
\label{tab:lambda_regimes}
\end{table*}

\section{MEV guardrail compatibility}
\label{app:mev_guardrails}

The weight trajectory $\{\v w^{(k)}\}$ is deterministic and publicly visible on-chain.
This appendix shows that the interpolation method interacts favourably with the MEV guardrails of~\cite{willetts2024multiblockmevopportunities}.

\paragraph{The open-pool property}

Over a typical rebalancing ($f \sim 2400$ blocks), a strategic attacker could in principle coordinate across multiple weight updates.
In practice, the \emph{open-pool property} prevents multi-step attack strategies from composing:
\begin{enumerate}
    \item \textbf{Immediate arbitrage of any displacement.}
    If an attacker pushes the pool off-market in block $k$, competitive arbitrageurs restore equilibrium in block $k{+}1$, before the attacker can exploit subsequent weight updates.
    \item \textbf{Independent per-step opportunities.}
    Competitive arbitrage resets the pool to market prices (within the fee band) between every pair of blocks, preventing the accumulation of a manipulated state across updates.
    \item \textbf{Open extraction.}
    The arbitrage opportunity created by each weight update is available to all participants, not exclusively to the original manipulator.
\end{enumerate}
The exception is \emph{consecutive block control} (proposer collusion), where an attacker can censor competitive arbitrage across a window of blocks.
The 5-block protection window of~\cite{willetts2024multiblockmevopportunities} is designed for precisely this threat.
Beyond the consecutive-control window, competitive dynamics provide the security guarantee.
Per-step guardrails therefore need only make each individual weight update safe; the interpolation method determines the per-step weight changes and hence interacts with these guardrails.

\paragraph{SLERP's variable absolute weight change}

SLERP's constant metric speed means each step has equal KL cost but variable absolute weight change $|\Delta w_i|$.
In Hellinger coordinates, $\Delta\eta_i \approx \mathrm{const}$ per step, so $\Delta w_i \approx 2\sqrt{w_i}\,\Delta\eta_i$: smaller absolute changes when $w_i$ is small, larger when $w_i$ is large.
This is naturally aligned with the guardrail constraint, which is most restrictive at small weights (where the same $|\Delta w_i|$ is the largest proportional change and creates the largest MEV opportunity).

\paragraph{The $\sqrt{w_{\min}}$ scaling match}

If the guardrail bounds the proportional change ($|u_i| \leq u_{\max}$, binding at $w_{\min}$), the arc-length speed limit is $v_{\max} \propto u_{\max}\sqrt{w_{\min}}$.
The Jacobi-optimal speed (Appendix~\ref{app:jacobi}) near a vertex where $w \approx w_{\min}$ is $v^* = \sqrt{2\,\Delta t_{\mathrm{block}}\,\ell} \propto \sigma\sqrt{\Delta t_{\mathrm{block}}\,w_{\min}}$, giving the same $\sqrt{w_{\min}}$ dependence.
Their ratio,
\[
\frac{v^*}{v_{\max}} = \frac{\sigma\sqrt{\Delta t_{\mathrm{block}}}}{u_{\max}}\,,
\]
is $\approx 0.007$ for crypto parameters ($\sigma = 3\%$/day, $\Delta t_{\mathrm{block}} = 12$\,s, $u_{\max} \approx 0.05$).
The LVR-optimal speed is ${\sim}100\times$ slower than the guardrails allow, even at the simplex boundary.

\section{Alternative ordering of weight and price changes}
\label{app:ordering}

The value factorisation Eq~\eqref{eq:value_factorisation} models weight and price changes as occurring simultaneously.
An alternative~\cite{tfmm_litepaper,tfmm} applies them in two steps: first prices change (the pool being arbitraged at the old weights), then weights change (and the pool is traded against again).
This gives
\[V_{\mathrm{new}}/V_{\mathrm{old}} = r \cdot \prod_i (P^{{\mathrm{new}}}_i/P^{{\mathrm{old}}}_i)^{w_i^{\mathrm{old}}},\]
contrasting with
\[V_{\mathrm{new}}/V_{\mathrm{old}} = r \cdot \prod_i (P^{{\mathrm{new}}}_i/P^{{\mathrm{old}}}_i)^{w_i^{\mathrm{new}}}\]
as in Eq~\eqref{eq:value_factorisation}.
The difference (old vs new weight exponents on the price return) does not affect any downstream results.
The cross-term telescopes under either ordering: over $f$ steps, $\sum_k \sum_i w_i^{(k)} \sigma_i^2$ and $\sum_k \sum_i w_i^{(k-1)} \sigma_i^2$ differ by an endpoint-only term $\sum_i (w_i^{(f)} - w_i^{(0)})\sigma_i^2$, which drops out of the path variation.
The path optimisation is unchanged.

\section{Deferred proof: SLERP optimality with fees (Corollary~\ref{cor:slerp_fees})}
\label{app:slerp_fees_proof}

\begin{proof}
The pool's total net loss along any path $P$ is $C(P) = V\!\sum_k D_{\mathrm{KL}}\!\bigl(\v w^{(k)} \big\| \v w^{(k-1)}\bigr) - F_{\mathrm{total}}(P) + O((1{-}\gamma)^2)$.
SLERP uniquely minimises the KL sum (Corollary~\ref{cor:slerp_optimal}).
Since the net loss is KL cost minus fee revenue, it remains to show that SLERP's fee revenue is no less than any alternative's: $F_{\mathrm{total}}(\mathrm{SLERP}) \geq F_{\mathrm{total}}(P)$.

By Proposition~\ref{prop:fee_telescoping}, the fee revenue along any monotonic path equals the endpoint-only value $F_0 = \frac{1-\gamma}{2\gamma}V\sum_i|w_i^e - w_i^s| + O((1{-}\gamma)^2)$.
For $N = 2$, SLERP is monotonic ($\theta = \arcsin(\sqrt{w})$ interpolates linearly), so $F_{\mathrm{total}}(\mathrm{SLERP}) = F_0$.
For $N \geq 3$, the great-circle arc on $S^{N-1}_+$ bulges outward from the chord: components whose start and end values are close may overshoot by $O(\Omega^2)$ in Hellinger coordinates.
This makes SLERP non-monotonic in those components, so $F_{\mathrm{total}}(\mathrm{SLERP}) = F_0 + O(\Omega^2(1{-}\gamma)/\gamma) \geq F_0$.

Any monotonic alternative path $P_{\mathrm{mono}}$ has $\sum D_{\mathrm{KL}}(P_{\mathrm{mono}}) > \sum D_{\mathrm{KL}}(\mathrm{SLERP})$ (non-geodesic) and $F_{\mathrm{total}}(P_{\mathrm{mono}}) = F_0 \leq F_{\mathrm{total}}(\mathrm{SLERP})$, so $C(P_{\mathrm{mono}}) > C(\mathrm{SLERP})$.
Any non-monotonic alternative $P'$ is also non-geodesic, so its KL cost exceeds SLERP's.
While $P'$ may generate more fee revenue from additional non-monotonicity, the extra KL cost dominates: a detour of Hellinger size $\delta$ costs $O(\delta^2)$ in KL but generates at most $O(\delta(1{-}\gamma))$ in fee revenue, and detours smaller than $(1{-}\gamma)$ fall inside the fee band and generate no arb or fee revenue.
\end{proof}

\section{Fee mechanics: supporting derivations and empirical detail}
\label{app:fee_mechanics}

This appendix provides the full derivation supporting Proposition~\ref{prop:fee_adjusted_f} and additional empirical detail on the mechanisms contributing to~$\phi$.

\paragraph{Cost plateau above $f_{\mathrm{threshold}}$ (constant-price model)}

In the idealised constant-price model, we can compute the rebalancing cost exactly when $f > f_{\mathrm{threshold}}$.
Each SLERP step advances the Hellinger coordinate by $\Omega/f$, too small to breach the fee band.
Displacement accumulates linearly over consecutive steps (the rising edge of the sawtooth observed in~\cite{willetts2026poolsportfoliosobservedarbitrage}).
After $n_{\mathrm{tooth}} = f/f_{\mathrm{threshold}}$ steps, the cumulative marginal-price displacement reaches the fee band width and an arb trade fires (the falling edge).
Each tooth accumulates a Hellinger displacement of $n_{\mathrm{tooth}} \cdot \Omega/f = \Omega/f_{\mathrm{threshold}}$, which is independent of $f$.
The KL cost of clearing this displacement in one arb trade is quadratic:
\[
D_{\mathrm{KL}}^{\mathrm{tooth}} \;\approx\; 2\!\left(\frac{\Omega}{f_{\mathrm{threshold}}}\right)^{\!2}.
\]
The number of teeth is $f/n_{\mathrm{tooth}} = f_{\mathrm{threshold}}$, also independent of $f$.
The total rebalancing cost is therefore
\[
C_{\mathrm{rebal}} = f_{\mathrm{threshold}} \cdot \frac{2\Omega^2}{f_{\mathrm{threshold}}^2} = \frac{2\Omega^2}{f_{\mathrm{threshold}}},
\]
which is the zero-fee cost evaluated at $f = f_{\mathrm{threshold}}$.
The fee band sets a minimum resolution for realised weight changes: regardless of how finely the intended interpolation is subdivided, the actual reserve adjustments occur in chunks of displacement $\Omega/f_{\mathrm{threshold}}$, so the total cost is determined by the chunk size and the total distance, not by $f$.

In practice, arber under-extraction, fee revenue on each arb trade, gas thresholds, and the interaction with price-driven arb all modify this picture.
The constant-price calculation shows why $f_{\mathrm{threshold}}$ is the relevant scale; the empirical $\phi$ captures the net effect of the additional mechanisms.

\paragraph{Derivation of the price-driven arb rate $\nu$}

Under driftless GBM, the per-block log-price change $\Delta\log P \sim \mathcal{N}(0,\, \sigma^2\Delta t_{\mathrm{block}})$.
After an arb trade, the market price sits at the fee band boundary.
The expected number of blocks for a symmetric random walk with per-step variance $\sigma^2\Delta t_{\mathrm{block}}$ to exit an interval of half-width $a$ is $a^2/(\sigma^2\Delta t_{\mathrm{block}})$.
Setting $a = (1{-}\gamma)$ (the fee band half-width in log terms):
\[
\mathbb{E}[\text{blocks between price-driven arb}] = \frac{(1{-}\gamma)^2}{\sigma^2\Delta t_{\mathrm{block}}},
\]
so $\nu = \sigma^2\Delta t_{\mathrm{block}}/(1{-}\gamma)^2$ (Eq~\eqref{eq:nu}).

The fee band width $(1{-}\gamma)$ thus sets two characteristic scales: $f_{\mathrm{threshold}} \propto (1{-}\gamma)^{-1}$ for the step count at which weight-driven displacement is absorbed (Eq~\eqref{eq:f_threshold}), and $1/\nu \propto (1{-}\gamma)^2$ for the interval between price-driven arb events.
The ratio $n_{\mathrm{tooth}}/(1/\nu) = f\nu/f_{\mathrm{threshold}}$ determines which type of arb dominates the sawtooth: weight-driven when $n_{\mathrm{tooth}} < 1/\nu$ ($f < f_{\mathrm{threshold}}/\nu$), price-driven otherwise.
For the worked example at $f = f^*$: $n_{\mathrm{tooth}} \approx 5$, $1/\nu \approx 72$, so the sawtooth is weight-driven at the practical operating point.
On high-gas chains, the effective arb threshold is wider than the fee band (gas cost raises the minimum profitable displacement), which further increases the effective $n_{\mathrm{tooth}}$ and reduces the arb cadence.

\paragraph{Typical-case $f_{\mathrm{threshold}}$ with $\sqrt{N}$ correction}

The worst-case bound $f_{\mathrm{threshold}} = 4\Omega/(\sqrt{w_{\min}}(1-\gamma))$ (Eq~\eqref{eq:f_threshold}) uses $|\delta\eta_i| \leq \Omega/f$ for every component.
Along a SLERP path, the per-component displacement is typically $|\delta\eta_i| \sim \Omega/(\sqrt{N}\,f)$, since the total arc-length displacement is distributed across $N$ Hellinger coordinates.
This gives a tighter typical-case threshold:
\begin{equation}
f_{\mathrm{threshold}}^{\mathrm{typical}} = \frac{4\Omega}{\sqrt{N\,w_{\min}}\,(1-\gamma)}
\label{eq:f_threshold_typical}
\end{equation}
For the $N{=}3$ setup ($\Omega = 0.21$, $w_{\min} = 0.05$ along the path, $\gamma = 0.997$):
$f_{\mathrm{threshold}}^{\mathrm{worst}} \approx 1250$ and $f_{\mathrm{threshold}}^{\mathrm{typical}} \approx 720$.
Since $f^* \approx 2400$, the optimum sits well above the typical-case threshold: per-step weight changes are inside the fee band for all component pairs.
For more interior weight configurations ($w_{\min} = 0.2$), $f_{\mathrm{threshold}}^{\mathrm{typical}} \approx 360 \ll f^*$, and fee band absorption is essentially complete.

\paragraph{Arber under-trading and Dutch reverse auction compression}

Empirically, arbitrageurs under-trade relative to the theoretical optimum~\cite{willetts2026poolsportfoliosobservedarbitrage}: they execute 54--71\% of the optimal trade volume, yet capture 81--99\% of the available profit (because the profit function is concave near the optimum, so halving the trade loses only ${\sim}25$\% of the profit).
The resulting extraction efficiency is 61--89\% of the theoretical maximum.

The rebalancing mechanism is a \emph{Dutch reverse auction}: the arb opportunity accumulates over blocks as allocation drift grows, until the expected profit exceeds some arber's gas-plus-risk threshold.
Competition among arbers compresses the extraction over time.
On-chain data from live TFMM pools shows this compression directly: on Ethereum mainnet, the per-trade extraction in one pool fell from \$2.58 to \$0.28 over six months, while trade frequency quadrupled, without any change to pool parameters~\cite{willetts2026poolsportfoliosobservedarbitrage}.
This implies that $\phi$ is not a fixed parameter but decreases with ecosystem maturity.

\paragraph{Incidental routing}

On L2 chains, low gas costs enable a qualitatively different regime.
The swap fee is unchanged (e.g.\ 30\,bp), but the low cost of execution makes high-frequency multi-hop arbitrage viable.
DEX aggregators route through the TFMM pool as one leg of broader cross-venue trades, paying the standard fee while extracting profit from \emph{other} pools.
In one observed TFMM pool on Base (30\,bp fee), 98\% of trades extracted less than \$0.01, with many subsidising the pool; per-trade profit from the TFMM pool was effectively zero, yet the pool still rebalanced~\cite{willetts2026poolsportfoliosobservedarbitrage}.
This is the mechanism underlying $\phi < 0$: the pool receives rebalancing as a side effect of ecosystem-wide routing, and the fee revenue on these trades exceeds the negligible extraction.
The key enabler is gas cost, not fee level.

\paragraph{Summary of mechanisms contributing to $\phi$}

The empirical $\phi$ (defined in \S\ref{sec:fees}) absorbs four independent mechanisms:
\begin{enumerate}
    \item \textbf{Fee band absorption} (\S\ref{sec:fees}, Eq~\eqref{eq:f_threshold}):
    at $f \geq f_{\mathrm{threshold}}$, per-step weight changes do not trigger arb on their own, capping the useful number of interpolation steps.
    \item \textbf{Arber under-trading}:
    when arb does fire, arbers capture only 61--89\% of the theoretical maximum extraction.
    \item \textbf{Fee revenue}:
    each triggered arb trade pays $(1{-}\gamma)$ of the inbound amount to LPs, recapturing a fraction of the extraction.
    \item \textbf{Incidental routing} (low-gas chains):
    DEX aggregator routing provides rebalancing at standard fees while extracting profit elsewhere, potentially pushing $\phi$ negative.
\end{enumerate}
None of these depend on price direction; they are consequences of fee band geometry, gas economics, arber competition, and market microstructure.
The empirical $\phi$ captures their combined effect.
On L1, $\phi \in (0.1, 0.5)$ and the finite-$f$ tradeoff persists.
On L2 with sufficient routing flow ($\phi \leq 0$), these effects eliminate the tradeoff entirely, and the constant-price prescriptions of \S\ref{sec:slerp}--\S\ref{sec:suboptimality} apply without modification.

\end{appendices}

\end{document}